\documentclass[11pt]{article}
\pagestyle{plain}

\usepackage[margin=1in]{geometry}
\usepackage{amsmath,amsfonts,amssymb,amsthm}
\usepackage{graphicx}
\usepackage{enumerate}
\usepackage{bbm}
\usepackage{verbatim}
\usepackage{hyperref,color}
\usepackage[capitalize,nameinlink]{cleveref}
\usepackage[dvipsnames]{xcolor}
\hypersetup{
	colorlinks=true,
	pdfpagemode=UseNone,
    citecolor=OliveGreen,
    linkcolor=NavyBlue,
    urlcolor=Magenta,
	pdfstartview=FitW
}
\usepackage{appendix}
\crefname{appsec}{Appendix}{Appendices}
\usepackage{tikz}
\usepackage[ruled]{algorithm2e}
\SetKwComment{Comment}{/* }{ */}

\theoremstyle{plain}
\newtheorem{theorem}{Theorem}[section]

\newtheorem{lemma}[theorem]{Lemma}

\newtheorem{claim}[theorem]{Claim}
\newtheorem{fact}[theorem]{Fact}

\theoremstyle{definition}
\newtheorem{definition}[theorem]{Definition}

\newtheorem{assumption}[theorem]{Assumption}
\newtheorem*{assumption*}{Assumption}

\newtheorem*{condition*}{Condition}

\theoremstyle{remark}
\newtheorem{remark}[theorem]{Remark}

\crefname{lemma}{Lemma}{Lemmas}
\crefname{theorem}{Theorem}{Theorems}
\crefname{definition}{Definition}{Definitions}
\crefname{fact}{Fact}{Facts}
\crefname{claim}{Claim}{Claims}
\crefname{proposition}{Proposition}{Propositions}

\newcommand{\E}{\mathbb{E}}

\newcommand{\one}{\mathbbm{1}}

\DeclareMathOperator*{\argmax}{arg\,max}

\newcommand{\floor}[1]{\left\lfloor #1 \right\rfloor}

\newcommand{\poly}{\mathrm{poly}}

\newcommand{\eps}{\varepsilon}
\renewcommand{\epsilon}{\varepsilon}

\newcommand{\N}{\mathbb{N}}
\newcommand{\Z}{\mathbb{Z}}

\newcommand{\R}{\mathbb{R}}

\newcommand{\GG}{\mathcal{G}}
\newcommand{\XX}{\Omega}

\newcommand{\ccomp}{\mathsf{c}}
\newcommand{\eqand}{\quad\text{and}\quad}

\newcommand{\ham}{h}

\newcommand{\PC}{\mathcal{PC}}
\newcommand{\simt}{\textsc{st}}
\newcommand{\pa}{a}
\newcommand{\qzero}{p}
\newcommand{\rhozero}{\xi}
\newcommand{\size}{s}

\newcommand{\pupp}{\mathcal{P}_{\mathsf{upp}}}
\newcommand{\plow}{\mathcal{P}_{\mathsf{low}}}
\newcommand{\pexp}{\mathcal{P}_{\mathsf{exp}}}
\newcommand{\pgw}{\mathcal{P}_{\mathsf{gw}}}

\usepackage{authblk}

\begin{document}
	
\title{Almost-Linear Planted Cliques \\ Elude the Metropolis Process}
\author[1]{Zongchen Chen \thanks{zongchen@mit.edu}} 
\author[1]{Elchanan Mossel \thanks{elmos@mit.edu}} 
\author[1]{Ilias Zadik \thanks{izadik@mit.edu}}
\affil[1]{Department of Mathematics, MIT}


\date{ \vspace{0.2cm}
\today}

\maketitle

\begin{abstract}
%

A seminal work of Jerrum (1992) showed that large cliques elude the Metropolis process.
More specifically, Jerrum showed that the Metropolis algorithm cannot find a clique of size $k=\Theta(n^{\alpha}), \alpha \in (0,1/2)$, which is planted in the Erd\H{o}s-R\'{e}nyi random graph $G(n,1/2)$, in polynomial time. Information theoretically it is possible to find such planted cliques as soon as $k \ge (2+\eps) \log n$.

Since the work of Jerrum, the computational problem of finding a planted clique in $G(n,1/2)$ was studied extensively and many polynomial time algorithms were shown to find the planted clique if it is of size $k = \Omega(\sqrt{n})$, while no polynomial-time algorithm is known to work when $k=o(\sqrt{n})$. The computational problem of finding a planted clique of $k = o(\sqrt{n})$ is now widely considered as a foundational problem in the study of computational-statistical gaps. Notably, the first evidence of the problem's algorithmic hardness is commonly attributed to the result of Jerrum from 1992.

In this paper we revisit the original Metropolis algorithm suggested by Jerrum.
Interestingly, we find that the Metropolis algorithm actually {\em fails} to recover a planted clique of size $k=\Theta(n^{\alpha})$ for {\em any} constant $0 \leq \alpha < 1$, 
unlike many other efficient algorithms that succeed when $\alpha>1/2$. Moreover, we strengthen Jerrum's results in a number of other ways including:

\begin{itemize}

\item Like many results in the MCMC literature, the result of Jerrum shows that there exists a starting state (which may depend on the instance) for which the Metropolis algorithm fails to find the planted clique in polynomial time. For a wide range of temperatures, we show that the algorithm fails when started at the most natural initial state, which is the empty clique. This answers an open problem stated in Jerrum (1992). It is rather rare to be able to show the failure of a Markov chain starting from a specific state.

\item We show that the simulated tempering version of the Metropolis algorithm, a more sophisticated temperature-exchange variant of it, also fails at the same regime of parameters.

\end{itemize}

Our results substantially extend Jerrum's result. Furthermore, they confirm recent predictions by Gamarnik and Zadik (2019) and Angelini, Fachin, de Feo (2021). Finally, they highlight the subtleties of using the sole failure of one, however natural, family of algorithms as a strong sign of a fundamental statistical-computational gap.
\end{abstract}

\thispagestyle{empty}

\newpage
\tableofcontents
\thispagestyle{empty}

\newpage
\setcounter{page}{1}
\section{Introduction}

The problem of finding in polynomial-time large cliques in the $n$-vertex Erd\H{o}s-R\'{e}nyi random graph $\GG(n,1/2),$ where each edge is present independently with probability $1/2$, is a fundamental open problem in algorithmic random graph theory \cite{karp1979recent}. In $\GG(n,1/2)$ it is known that there is a clique of size $(2-o(1))\log n$ with high probability (w.h.p.) as $n \rightarrow +\infty$, and several simple polynomial-time algorithms can find a clique of size $(1+o(1))\log n$ w.h.p.  which is nearly half the size of the maximum clique (e.g. see \cite{grimmett_mcdiarmid_1975}). (Note that here and everywhere we denote by $\log$ the logarithm with base 2.) Interestingly, there is no known polynomial-time algorithm which is able to find a clique of size $(1+\epsilon)\log n$ for some constant $\epsilon>0$. The problem of finding such a clique in polynomial-time was suggested by Karp \cite{karp1979recent} and still remains open to this day.

\paragraph{Jerrum's result and the planted clique model} Motivated by the challenge of finding a clique in $\GG(n,1/2)$, Jerrum in \cite{Jer92} established that \emph{large cliques elude the Metropolis process} in $\GG(n,1/2)$. Specifically, he considered the following {\em Gibbs Measure} for $\beta> 0,$ $G \sim \GG(n,1/2)$
\begin{equation}
\label{Gibbs_intro} 
\pi_{\beta}(C) \propto \exp( \beta |C|),
\end{equation} 
where $C \subseteq V(G)$ induces a clique in the instance $G$. Notice that since $\beta>0,$ by definition $\pi_{\beta}$ assigns higher mass on cliques of larger size. Jerrum considered the Metropolis process with stationary measure $\pi_{\beta}$, which is initialized in some clique, say $X_0$ of $G.$ Then the process moves ``locally" between cliques which differ in exactly one vertex. In more detail, every step of the Metropolis process is described as follows (see also Algorithm \ref{alg:metropolis_intro}). 
Choose a vertex $v \in V$ uniformly at random. If $v \in X_t$ where $X_t$ is the current clique, then let $X_{t+1} = X_t \setminus \{v\}$ (a ``downward" step) with probability $e^{-\beta}$ and let $X_{t+1} = X_t$ with the remaining probability; else if $v \notin C_t$ and $X_t \cup \{v\}$ is a clique, then let $X_{t+1} = X_t \cup \{v\}$ (an ``upward" step);  otherwise, let $X_{t+1} = X_t$.  For this process, Jerrum established the negative result that it fails to reach a clique of size $(1+\epsilon)\log n$ in polynomial-time for any constant $\epsilon>0$. We note that, as customary in the theory of Markov chains, the failure is subject to the Metropolis process being ``worst-case initialized", that is starting from some ``bad" clique $X_0$. This is a point we revisit later in this work.

The {\em planted clique problem} was introduced by Jerrum in \cite{Jer92} in order to highlight the failure of the Metropolis process. For $k \in \mathbb{N}, k \leq n$ the \emph{planted clique model} $\GG(n,1/2,k)$ is defined 
by first sampling an instance of $\GG(n,1/2)$, then choosing $k$ out of the $n$ vertices uniformly at random and finally adding all the edges between them (if they did not already exist from the randomness of $\GG(n,1/2)$). The set of $k$ chosen vertices is called the planted clique $\mathcal{PC}$. It is perhaps natural to expect that the existence of $\mathcal{PC}$ in $G$ can assist the Metropolis process to reach faster cliques of larger size. Yet, Jerrum proved that as long as $k=\floor{n^{\alpha}}$ for some constant $\alpha<1/2$ the Metropolis process continues to fail to find a clique of size $(1+\epsilon)\log n$ in polynomial-time, for any $\epsilon>0$. As he also noticed when $\alpha>1/2$ one can actually trivially recover $\mathcal{PC}$ from $G$ via a simple heuristic which chooses the top-$k$ degrees of the observed graph (see also \cite{kuvcera1995expected}). In particular, one can trivially find a clique of size much larger than $\log n$ when $\alpha>1/2.$ Importantly though, he \emph{never proved} that the Metropolis process actually succeeds in finding large cliques when $\alpha>1/2,$ leaving open a potentially important piece of the performance of the Metropolis process in the planted clique model. In his words from the conclusion of \cite{Jer92}:

\begin{quotation}
\textit{``If the failure of the Metropolis process to reach $(1+\epsilon)\log n$-size cliques could be shown to hold for some $\alpha>1/2$, it would represent a severe indictment of the Metropolis process as a heuristic search technique for large cliques in a random graph.''}
\end{quotation} In this work, we seek to investigate the performance of the Metropolis process for all $\alpha \in (0,1).$

\begin{algorithm}
\caption{Metropolis Process \cite{Jer92}}\label{alg:metropolis_intro}
\KwIn{a graph $G$, a starting clique $X_0 \in \XX$, stopping time $T$}
\For{$t = 1,\dots,T$}{
  Pick $v\in V$ uniformly at random\;
  $C \gets X_{t-1} \oplus \{v\}$\;
  \eIf{$C \in \XX$}{
    $X_t \gets 
    \begin{cases}
    C, &\text{with probability~} \min\{1,\exp(\beta(|C|-|X_t|))\};\\
    X_{t-1}, &\text{with remaining probability};
    \end{cases}$
  }{$X_t \gets X_{t-1}$\;
  }
}
\KwOut{$X_T$}
\end{algorithm}

\paragraph{The planted clique conjecture} 
Following the work of Jerrum \cite{Jer92} and Kucera \cite{kuvcera1995expected} the planted clique model 
has received a great deal of attention 
and became a hallmark of a research area that is now called {\em study of statistical-computational gaps}. 
The planted clique problem can be phrased as a statistical or inference problem in the following way: given an instance of $\GG(n, 1/2, k)$ recover $\mathcal{PC},$ the planted clique vertices. It is impossible to recover the clique when it is of size $k<(2-\epsilon)\log n$ for any constant $\epsilon>0$ (see e.g. \cite{arias2014community}), but possible information theoretically (and in quasi-polynomial time $n^{O(\log n)}$) whenever $k > (2+\eps) \log n$ for any constant $\eps > 0$ (see e.g. the discussion in \cite{feldman2017statistical}). Meanwhile, multiple polynomial-time algorithms have been proven to succeed in recovering the planted clique but only under the much larger size $k=\Omega(\sqrt{n})$ \cite{alon1998finding, ron2010finding, dekel2014finding, deshpande2015finding}. The intense study of the model, as well as the failure to find better algorithms, has lead to \emph{the planted clique conjecture}, stating that the planted clique recovery task is impossible in polynomial-time, albeit information-theoretically possible, whenever $(2+\epsilon) \log n \le k=o(\sqrt{n})$. The planted clique conjecture has since lead to many applications, a highlight of which is that it serves as the main starting point for building reductions between a plethora of statistical tasks and their computational-statistical trade-offs (see e.g. \cite{berthet2013complexity, ma2015computational, brennan2020reducibility}).

Unfortunately, because of the average-case nature of the planted clique model, a complexity theory explanation is still lacking for the planted clique conjecture. For this reason, researchers have so far mainly focused on supporting the conjecture by establishing the failure of restricted families of polynomial-time methods, examples of which are Sum-of-Squares lower bounds \cite{barak2019nearly}, statistical-query lower bounds \cite{feldman2017statistical} and low-temperature MCMC lower bounds \cite{gamarnik2019landscape}. Because of the focus on establishing such restricted lower bounds, the vast majority of works studying the planted clique conjecture cite the result of Jerrum \cite{Jer92} on the failure of the Metropolis process as the ``first evidence" for the validity of it. Note though that given what we discussed above, such a claim for Jerrum's result can be problematic. Indeed, recall that Jerrum have not established the success of the Metropolis process when $k=\floor{n^{\alpha}}, \alpha>1/2$ in reaching cliques of size $(1+\epsilon)\log n$, let alone identifying the planted clique. That means that the Metropolis process could in principle simply be not recovering the planted clique for all $\alpha \in (0,1),$ offering no evidence for the planted clique conjecture.


In fact, in 2019, Gamarnik and Zadik \cite{gamarnik2019landscape} studied the performance of various MCMC methods (not including the Metropolis process described above) for the planted clique model and conjectured their failure to recover the planted clique when $k<n^{2/3}$, that is the failure much beyond the $\sqrt{n}$ threshold. For this reason, they raised again the question whether Jerrum's original Metropolis process actually fails for $k$ larger than $\sqrt{n}$. Later work by Angelini, Fachin and de Feo \cite{angelini2021mismatching} simulated the performance of the Metropolis process and predicted its failure to recover the planted clique again much beyond the $\sqrt{n}$ threshold. Both of these results suggest that the Metropolis process may not be able to recover the planted clique for some values of $k=n^{\alpha}, \alpha>1/2$. We consider the absence of a negative or positive result for whether the Metropolis process of \cite{Jer92} recovers the planted clique a major gap in the literature of the planted clique conjecture which we investigate in this work.

\paragraph{Empty clique initialization}

A common deficiency of many Markov chain lower bounds is their failure to establish lower bounds for starting from any particular state. Indeed, many of the lower bounds in the theory of Markov chains including spectral and conductance lower bounds are proved with high probability or in expectation over the stationary measure of the chain. Of course, since the lower bounds provide usually evidence that is hard to sample from the stationary distribution, the deficiency of the lower bound is that it is hard to find a state where the lower bound applies! 
This is indeed the case for the specific example of the Metropolis process in \cite{Jer92}, where a-priori, other than the empty set we do not know which subsets of the nodes make a clique. 

Jerrum noted this deficiency in his paper~\cite{Jer92}:
\begin{quotation}
\textit{``The most obviously unsatisfactory feature of Theorems ... is that these theorems assert only the existence of a starting state from which the Metropolis process takes super-polynomial time to reach a large clique ... It seems almost certain that the empty clique is a particular example of a bad starting state, but different proof techniques would be required to demonstrate this.''}
\end{quotation}
In this work we develop a new approach based on comparison with birth and death processes allowing us to prove lower bounds starting from the empty clique. We note that previous work on birth and death processes established Markov chain bounds starting from a specific state~\cite{diaconis2006separation}.

\section{Main Contribution}

We now present our main contributions on the performance of the Metropolis process for the planted clique model $\GG(n,1/2,k),$ where $k=\floor{n^{\alpha}}$ for some constant $\alpha \in (0,1)$. Our results hold for the Metropolis process associated with any Gibbs measure defined as follows. For an arbitrary Hamiltonian vector $h=(h_q, q \in [n])$ and arbitrary $\beta \geq 0,$ let
\begin{align}\label{gibbs_contrib}
    \pi_{\beta}(C) \propto \exp(\beta h_{|C|}), C \in \XX(G)
    \end{align} where by $\XX(G)$ we denote the cliques of $G$. In some results, we would require a small degree of regularity from the vector $h$, which for us is to satisfy $h_0=0$ and to be 1-Lipschitz in the sense that \begin{align}
        |h_q-h_{q'}| \leq |q-q'|, \forall q,q' \in [\floor{2\log n}].
    \end{align}We call these two conditions, as simply $h$ being ``regular". The regularity property is for technical reasons, as it allows to appropriately bound the transition probabilities of the Metropolis process between small cliques. Notice that $h_q=q, q\in [n]$ is trivially regular and corresponds to the Gibbs measure and Metropolis process considered by Jerrum, per \eqref{Gibbs_intro}.

Our first theorem is a very general lower bound which holds under worst-case initialization for all $\alpha \in (0,1).$

\begin{theorem}[Informal version of Theorem \ref{main_result_mixing} and Theorem \ref{thm:large-clique}]\label{main_result_mixing_contrib}
Let $k=\floor{n^{\alpha}}$ for any $\alpha \in (0,1)$.
\begin{enumerate}
    \item[I.] For arbitrary $\ham$ and arbitrary inverse temperature $\beta$, for any constant $\gamma>0$ there exists a ``bad" initialization such that  the Metropolis process requires $n^{\Omega(\log n)}$ time to reach $\gamma \log n$ intersection with the planted clique.
    
    \item[II.] For arbitrary regular $\ham$ and arbitrary inverse temperature $\beta=O(\log n)$, for any constant $\epsilon>0,$ there exists a ``bad" initialization such that the Metropolis process requires $n^{\Omega(\log n)}$ time to reach a clique of size $(1+\epsilon)\log n$.
\end{enumerate}
\end{theorem}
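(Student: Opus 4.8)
The plan is to exhibit an explicit "bad" initialization and then prove a bottleneck/conductance-type estimate that traps the Metropolis process far from the planted clique for super-polynomial time. For part~I, the natural candidate for the bad start $X_0$ is a maximum-sized clique of $G(n,1/2,k)$ that is \emph{disjoint} from the planted clique $\PC$ — or, more robustly, a clique of size roughly $\log n$ that intersects $\PC$ in fewer than $\gamma \log n$ vertices. Such cliques exist w.h.p.\ by a first-/second-moment computation on $G(n,1/2)$ restricted to $V \setminus \PC$. Define the "bad set" $B = \{C \in \XX(G) : |C \cap \PC| < \gamma \log n\}$ and its complement, and show that to leave $B$ the chain must pass through a set of cliques $C$ with $|C\cap \PC|$ exactly around $\gamma\log n$ but $|C|$ small; every such clique has only a handful of upward moves into $\PC$ available (about $k$ out of $n$ choices give a $\PC$-vertex, but to \emph{grow} $|C\cap\PC|$ while the total size stays $O(\log n)$ one needs the new vertex to be adjacent to all of the current clique, which for a non-$\PC$ vertex happens with probability $2^{-|C|}$). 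The key quantity to control is the stationary weight $\pi_\beta$ of the "interface" cliques versus the weight of a reservoir of bad cliques; since $\pi_\beta(C)$ depends only on $|C|$ through $\exp(\beta h_{|C|})$, the combinatorial count of cliques of each size in $G(n,1/2)$ (which is $n^{\Theta(\log n)}$-like, peaked at size $\approx 2\log n$) dominates, and no choice of $\beta$ or $h$ can overcome the fact that the number of cliques \emph{avoiding} $\PC$ dwarfs the number using many $\PC$-vertices by a factor $n^{\Omega(\log n)}$. A standard bottleneck lemma (e.g.\ the hitting-time bound via conductance, or a direct martingale/supermartingale argument on $|X_t \cap \PC|$) then gives the $n^{\Omega(\log n)}$ lower bound on the time to reach $\gamma\log n$ intersection.

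For part~II, once the chain is known (from the analysis of part~I, or by a similar argument) to behave as though the planted clique were essentially invisible — i.e.\ the intersection with $\PC$ stays $o(\log n)$ for super-polynomial time — the process restricted to cliques with small $\PC$-intersection is, up to lower-order corrections, the Metropolis process on the \emph{unplanted} graph $G(n,1/2)$, which is exactly Jerrum's original setting. Here I would invoke Jerrum's bottleneck at clique size $\approx 2\log n$: reaching a clique of size $(1+\epsilon)\log n$ requires climbing through sizes where the number of cliques of size $q$ grows like $\binom{n}{q}2^{-\binom q2}$, and the downward move probability $e^{-\beta(h_{|C|}-h_{|C|-1})} \geq e^{-\beta}$ together with $\beta = O(\log n)$ and regularity ($1$-Lipschitz $h$) keeps the effective drift downward once $q \gtrsim \log n$, so the chain cannot sustain a climb to $(1+\epsilon)\log n$. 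The regularity hypothesis is used precisely to bound $|h_{q}-h_{q-1}|\le 1$ so that the single-step bias at small cliques is at most $e^{\beta} \le n^{O(1)}$, which is swamped by the combinatorial shrinkage factor; this is where the $\beta=O(\log n)$ ceiling enters.

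The main obstacle I anticipate is \textbf{controlling the interaction between the planted clique and the ambient random cliques} at the interface — specifically, proving that a clique which has already accumulated $j$ planted vertices (for $j$ up to $\gamma\log n$) still has exponentially (in $|C|$) few ways to extend into $\PC$ while keeping its total size small, uniformly over all such cliques simultaneously. This requires a union bound over $n^{\Theta(\log n)}$ cliques against events of probability $2^{-\Theta(\log n)}$, which is delicate and is presumably where the "$\alpha<1$" hypothesis (rather than $\alpha<1/2$) is genuinely exploited: when $\alpha<1$ we have $k = n^{\alpha} = o(n)$, so a uniformly random vertex lands in $\PC$ with probability $n^{\alpha-1}\to 0$, keeping the "upward pull" toward $\PC$ polynomially small per step, whereas the combinatorial reservoir of $\PC$-avoiding cliques supplies an $n^{\Omega(\log n)}$ entropic advantage. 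A secondary technical point is making the bottleneck argument robust to \emph{arbitrary} $h$ in part~I (no regularity assumed): there one cannot bound individual transition probabilities nicely, so the argument must be purely in terms of stationary measures of sets — i.e.\ one shows $\pi_\beta(\text{interface})/\pi_\beta(B) = n^{-\Omega(\log n)}$ for \emph{every} $h$ and $\beta$, using only that $\pi_\beta$ is a function of clique size and the size-wise clique counts of $G$, and then applies the general fact that hitting time from within $B$ is $\Omega(\pi_\beta(B)/(\pi_\beta(\text{interface})\cdot \text{something}))$. I would set up that clean measure-theoretic bottleneck lemma first, since both parts reduce to instances of it.
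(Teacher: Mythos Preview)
Your plan for Part~I is essentially the paper's argument: the bad set is $\XX_{*,\le r}$, the interface is $\XX_{*,r}$, and one shows $\pi_\beta(\XX_{*,r})/\pi_\beta(\XX_{*,\le r})\le n^{-\Omega(\log n)}$ via pure counting. One simplification you are circling but not quite landing on: the way the bound becomes uniform in $\beta$ and $h$ is not a delicate union bound over extensions into $\PC$, but the trivial observation that $\pi_\beta$ depends only on $|C|$, so one compares cliques \emph{of the same size} with intersection $r$ versus intersection $0$. Concretely, letting $q^*=\arg\max_q \E[W_{q,r}]e^{\beta h_q}$ (with $q^*\le(2-\eps)\log n$ since $W_{q,r}=0$ beyond that), the numerator is at most $n^{O(1)}\E[W_{q^*,r}]e^{\beta h_{q^*}}$ while the denominator is at least $\tfrac12\E[W_{q^*,0}]e^{\beta h_{q^*}}$; the factor $e^{\beta h_{q^*}}$ cancels and one is left with $\E[W_{q^*,r}]/\E[W_{q^*,0}]=n^{-\Omega(\log n)}$. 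No dynamics, no union bound over cliques.

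Part~II has a genuine gap. Your plan is ``first Part~I shows the intersection stays small, then restricted to small intersection it is Jerrum's $G(n,1/2)$ setting, so invoke Jerrum.'' Two problems. First, the conductance lemma you (correctly) want to use only furnishes \emph{some} bad start in each bad set; running it twice gives two possibly different initializations, and there is no mechanism to merge them. Second, and more structurally, Jerrum's gateway bottleneck at size $p=(1+\eps-\theta)\log n$ fails outright once $\alpha>1/2$: the denominator $|\XX_{p,*}|$ is then dominated by subsets of $\PC$, and every such subset \emph{is} a $q$-gateway (it extends inside $\PC$), so the ratio $|\Psi_q\cap\XX_{p,*}|/|\XX_{p,*}|$ is no longer small. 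The paper's fix is not a sequential reduction but a \emph{single combined bottleneck}: take $\mathcal{B}=(\Psi_q\cap\XX_{p,<r})\cup\XX_{<q,r}$, i.e.\ Jerrum's gateways restricted to low intersection together with the intersection-$r$ boundary, and let $\mathcal{A}$ be everything reachable from $\emptyset$ without crossing $\mathcal{B}$. Then $\pi_\beta(\mathcal{B})/\pi_\beta(\mathcal{A})$ splits into (i) $|\Psi_q\cap\XX_{p,<r}|/|\XX_{p,<r}|$, which is small by a version of Jerrum's count now that the $\PC$-heavy cliques are excluded, and (ii) $\pi_\beta(\XX_{<q,r})/\pi_\beta(\XX_{\le p,0})$, which is handled as in Part~I but with a residual factor $e^{\beta(h_{q^*}-h_{q^{*\prime}})}$ coming from the size mismatch $|q^*-q^{*\prime}|\le q-p=\theta\log n$; this is exactly where the Lipschitz hypothesis on $h$ and $\beta=O(\log n)$ are consumed (choose $\theta$ so that $\hat\beta\theta$ is small). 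Your ``drift downward past $\log n$'' picture is not the mechanism here and mislocates the bottleneck (it is at $p<(1+\eps)\log n$, not near $2\log n$); drift-style reasoning belongs to the empty-clique-initialization results, not to the worst-case statements.
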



One way to think about the two parts of \cref{main_result_mixing_contrib} is in terms of the statistical failure and the optimization failure of the Metropolis algorithm. The first part establishes the statistical failure as the algorithm cannot even find $\gamma \log n$ vertices of the planted clique. The second part shows that it fails as an optimization algorithm: the existence of a huge planted clique still does not 
improve the performance over the $(1+\eps) \log n$ level. 


Note that second part extends the result of \cite{Jer92} to all $\alpha \in (0,1),$ when $\beta=O(\log n)$. (The case $\beta=\omega(\log n)$ is proven below in \cref{thm:greedy_contr} since for that low temperature the process behaves like the greedy algorithm). In Jerrum's words, our result reveals ``a severe indictment of the Metropolis process in finding large cliques in random graphs"; even a $n^{1-\delta}$-size planted clique does not help the Metropolis process to reach cliques of size $(1+\epsilon)\log n$ in polynomial time.  At a technical level it is an appropriate combination of the bottleneck argument used by Jerrum in \cite{Jer92}, which focus on comparing cliques based on their size, and a separate bottleneck argument comparing cliques based on how much they intersect the planted clique.

Our next result concerns the case where the Metropolis process is initialized from the empty clique. We obtain for all $\beta=o(\log n)$ the failure of the Metropolis process starting from any $o(\log n)$-size clique (including in particular the empty clique). In particular, this answers in the affirmative the question from \cite{Jer92} for all $\beta=o(\log n).$ 

\begin{theorem}[Informal version of Theorem \ref{thm:start-from-empty} and Theorem \ref{thm:start-from-empty-hit-large}]\label{thm:start-from-empty_contr}

Let $k=\floor{n^{\alpha}}$ for any $\alpha \in (0,1)$.
Then for arbitrary regular $\ham$ and arbitrary inverse temperature $\beta=o(\log n)$, the Metropolis process started from any clique of size $o(\log n)$ (in particular the empty clique) requires $n^{\Omega(\log n)}$ time to reach a clique which for some constants $\gamma, \epsilon>0$ either
\begin{itemize}
    \item has at least $\gamma \log n$ intersection with the planted clique or,
    \item has size at least $(1+\epsilon)\log n.$
\end{itemize}
\end{theorem}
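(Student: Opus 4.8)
The plan is to reduce the empty-clique case to the worst-case-initialization bottleneck of \cref{main_result_mixing_contrib} via a \emph{comparison with a birth-and-death process}. The key point is that, starting from a tiny clique, the Metropolis process cannot ``teleport'' into either of the two dangerous regions (large overlap with $\PC$, or large total size): to get there it must pass through a long sequence of intermediate cliques, and along that path the dynamics on the relevant one-dimensional statistic (either $|C|$ or $|C\cap\PC|$) is dominated by a birth-and-death chain that is strongly biased downward at the scales $o(\log n)$. First I would set up, for a fixed instance $G\sim\GG(n,1/2,k)$ that is ``typical'' (all the usual high-probability structural facts: no clique of size $(2+o(1))\log n$ outside $\PC$, degree concentration, and the counting bounds on how many cliques of a given size contain a given small set), the potential function and a two-phase argument mirroring the two bullets of the statement.

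The core estimate is the following. Let $\Phi(C)=\max\{|C|,\ c\,|C\cap\PC|\}$ for a suitable constant $c>1$, and consider the first time $\tau$ that $\Phi(X_t)$ reaches level $m:=\lceil(1+\epsilon)\log n\rceil$ (or the overlap reaches $\gamma\log n$). Starting from $|X_0|=o(\log n)$, I want to show $\tau\ge n^{\Omega(\log n)}$ w.h.p. For the \emph{size} coordinate: while $|X_t|=q$ is small, an upward step requires picking one of the (at most $n$) vertices adjacent to all of $X_t$, and for a typical graph the number of such common neighbors of a $q$-clique with small planted overlap is $\approx n\,2^{-q}$ (up to the planted contribution), so the upward rate is $\approx 2^{-q}$, while the downward rate is $\approx (q/n)e^{-\beta(h_{q-1}-h_q)}\ge (q/n)e^{-\beta}$ using $1$-Lipschitzness of $h$; since $\beta=o(\log n)$, the ratio (down)/(up) is $\approx (q/n)\,2^{q}e^{-\beta}$, which already exceeds $1$ once $q\ge(1-\delta)\log n$ and grows geometrically thereafter. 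So the walk restricted to the window $[(1-\delta)\log n,\ (1+\epsilon)\log n]$ is dominated by a birth-and-death chain with a geometric downward drift, whose expected hitting time of the top from the bottom of the window is $n^{\Omega(\log n)}$ by a standard electrical-network / product-of-ratios computation. The point of the $o(\log n)$ restriction on $\beta$ is exactly that $e^{-\beta}$ cannot overcome the $2^{q}$ factor when $q=\Theta(\log n)$. For the \emph{overlap} coordinate I would argue symmetrically: to increase $|C\cap\PC|$ from $j$ to $j+1$ the chain must select one of the $k-j\le n^{\alpha}$ planted vertices not yet in $C$ (and it must be a common neighbor of $C$, but restricting to planted vertices only helps), so the upward rate is at most $n^{\alpha-1}=n^{-(1-\alpha)}$, whereas the downward rate out of a clique of total size $q=\Theta(\log n)$ is $\ge (q/n)e^{-\beta}$; the ratio is again $\ge$ polynomially large in $n$ once the total size is $\Theta(\log n)$ — but here one must first use the \emph{size} bottleneck to guarantee the total size is at most $O(\log n)$ before the overlap can grow, which is why the two bullets are proved together with a single potential $\Phi$.

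Concretely the steps are: (1) isolate the good event for $G$ and condition on it; (2) prove the ``common-neighborhood count'' lemma giving upward transition rates $\lesssim 2^{-q}$ (size) and $\lesssim n^{-(1-\alpha)}$ (overlap) uniformly over cliques with $\Phi<m$; (3) define the birth-and-death dominating chain $Y_t$ on $\{0,1,\dots,m\}$ with birth probabilities set to the worst-case upward rates and death probabilities set to the best-case downward rates, and invoke a monotone-coupling lemma to get $\tau\ge\tau_Y$ stochastically (this coupling is where the $1$-Lipschitz ``regular'' hypothesis on $h$ is used, to keep the downward probabilities from being too small and the comparison clean); (4) lower-bound $\E[\tau_Y]$ by the product $\prod$ of ratios over the window $[(1-\delta)\log n,m]$, giving $n^{\Omega(\log n)}$; (5) upgrade expectation to a w.h.p. statement via Markov's inequality on the number of visits, exactly as in the birth-and-death-process references cited in the excerpt.

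I expect the main obstacle to be step~(3)–(4): making the one-dimensional domination rigorous \emph{uniformly} over the exponentially many cliques at each level, in particular handling the ``boundary'' configurations where a clique has moderate planted overlap and moderate size simultaneously (so that the common-neighborhood estimate gets a planted-vertex correction term $\approx (k-j)$ on top of $n2^{-q}$), and ensuring that crossing from the ``size is dangerous'' regime to the ``overlap is dangerous'' regime does not create a low-resistance shortcut. This is precisely the reason for working with the combined potential $\Phi$ and for the two-bullet conclusion: whichever coordinate the process tries to push up first, it faces a geometric barrier, and $\Phi$ certifies that it cannot make progress on one coordinate by sacrificing the other. Once the uniform upward-rate bounds of step~(2) are in hand, the rest is the standard birth-and-death hitting-time estimate already used for the worst-case result, so the novelty — and the delicate part — is entirely in controlling the chain near the empty clique where it genuinely wants to escape but cannot find the planted structure quickly.
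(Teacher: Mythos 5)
The high-level idea (compare with a one-dimensional birth-and-death process) is indeed what the paper does, but the specific mechanism you propose is genuinely different from the paper's and appears to have a real gap.

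\textbf{The paper's key device is time reversal, which you do not use.} Rather than directly lower-bounding the hitting time of large cliques from $\emptyset$, the paper writes, via reversibility of the Metropolis chain,
\[
\Pr\left( X_t = \sigma \mid X_0 = C \right)
= e^{\beta(h_{|\sigma|}-h_{|C|})}\,\Pr\left( X_t = C \mid X_0 = \sigma \right),
\]
and then bounds the \emph{reverse} probability of reaching the tiny clique $C$ from a large clique $\sigma$. To do that it introduces a birth--death chain $\{Y_t\}$ that is stochastically dominated \emph{from below} by $\{|X_t|\}$ (i.e.\ $Y_t\le|X_t|$), which only requires a \emph{lower} bound on the number $|A(C)|$ of common neighbors --- the ``expansion property'' $|A(U)|\ge n/(20\cdot 2^{|U|})$ for $|U|\le(1-\eta)\log n$. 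This is a union-boundable statement because the relevant means $n2^{-|U|}$ are polynomially large. It then bounds $\Pr(Y_t=p\mid Y_0=q)$ by a ratio of stationary probabilities $\nu(p)/\nu(q)$, and finally sums over $\sigma\in\XX_{q,r}$ using the first-moment bound $\E[W_{q,r}]$. The size and overlap coordinates are handled separately, in \cref{thm:start-from-empty} and \cref{thm:start-from-empty-hit-large}, via the common \cref{lem:reaching-prob}.

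\textbf{Your forward coupling needs an upper bound on $|A(C)|$ that you cannot get uniformly.} You want a chain $Y_t\ge|X_t|$ (so that $\tau\ge\tau_Y$), whose birth probability is an \emph{upper} bound on the Metropolis up-rate $|A(C)|/n$, uniformly over all cliques $C$ at the given size. For $|C|$ near or above $\log n$, the mean $n2^{-|C|}$ is $O(1)$ or smaller, and the number of cliques of that size is $\exp(\Theta(\log^2 n))$, so Chernoff plus union bound cannot give $|A(C)|\lesssim n2^{-|C|}$ uniformly; in fact the worst-case $|A(C)|$ at size $(1\pm\epsilon)\log n$ is $\Theta(\log n)$, polynomially larger than the mean, and with that worst-case up-rate the birth--death chain is no longer biased downward in the window and the hitting-time lower bound collapses. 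The paper's reversal trick sidesteps this entirely by only needing the lower bound, and by explicitly cutting off the birth--death chain at $(1-\eta)\log n$ so that no structural property of cliques of size $\ge(1-\eta)\log n$ is ever invoked. Two smaller points: your combined potential $\Phi=\max\{|C|,c|C\cap\PC|\}$ does not appear in the paper (the two coordinates are decoupled and glued with counting estimates), and your quantitative claim that the down/up ratio $(q/n)2^qe^{-\beta}$ exceeds $1$ once $q\ge(1-\delta)\log n$ is wrong --- the crossover is near $q\approx\log n$, and for $q=(1-\delta)\log n$ the drift is strongly \emph{upward}. This does not doom a direct argument (the quasi-polynomial hitting time comes from the stationary-mass ratio, not from a uniformly downward drift), but it is another sign that the forward computation is more delicate than you suggest. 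I would recommend re-orienting the proof around the reversibility identity and the lower-bound expansion property, which is both cleaner and actually provable.
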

The proof of \cref{thm:start-from-empty_contr} is based on the expansion properties of all $(1-\epsilon)\log n$-cliques of $G(n,1/2)$. 
The expansion properties allow us to compare the Metropolis process to an one dimensional birth and death process that keeps track of the size of the clique (or the size of the intersection with the hidden clique). The analysis of this process is based on a time-reversal argument. 

One can wonder, whether a lower bound can be also established in the case $\beta=\Omega(\log n)$ when we start from the empty clique. We partially answer this, by obtaining the failure of the Metropolis process starting from the empty clique in the case where $h_q=q, q \in [n]$ and $\beta=\omega(\log n)$. 
\begin{theorem}[Informal version of \cref{thm:greedy}]\label{thm:greedy_contr}

Let $k=\floor{n^{\alpha}}$ for any $\alpha \in (0,1)$.
For $\ham_q=q, q \in [n]$ and arbitrary inverse temperature $\beta=\omega(\log n)$, the Metropolis process started from any clique of size $o(\log n)$ (in particular the empty clique) requires $n^{\omega(1)}$ time to reach a clique which for some constants $\gamma, \epsilon>0$ either
\begin{itemize}
    \item has at least $\gamma \log n$ intersection with the planted clique or,
    \item has size at least $(1+\epsilon)\log n.$
\end{itemize}
\end{theorem}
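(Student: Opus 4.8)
The plan is to use that at inverse temperature $\beta=\omega(\log n)$ the Metropolis process essentially coincides with the greedy clique-growing process, and then to show that greedy gets stuck, far from the thresholds, at a maximal clique of size $(1+o(1))\log n$ having the same intersection with $\PC$ as its starting point. Since $\ham_q=q$, an upward move is always accepted while a downward move from a clique of size $q$ to size $q-1$ is accepted with probability exactly $e^{-\beta}$; hence at each step the probability of a downward move is at most $e^{-\beta}$. Coupling the Metropolis process with the greedy process (both using the same sequence of uniformly random vertices, greedy always rejecting the downward moves), they agree until the first downward move, which within $T:=e^{\beta/2}$ steps happens with probability at most $Te^{-\beta}=e^{-\beta/2}=o(1)$; note $T=e^{\beta/2}=n^{\omega(1)}$ precisely because $\beta=\omega(\log n)$. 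So it suffices to prove: for every fixed clique $C_0$ with $|C_0|=o(\log n)$, with probability $1-o(1)$ over $G\sim\GG(n,1/2,k)$ and the greedy trajectory, greedy started from $C_0$ is absorbed — within $O(n^2)\ll T$ steps — at a maximal clique $C_\tau$ with $|C_\tau|=(1+o(1))\log n$ and $|C_\tau\cap\PC|=|C_0\cap\PC|$. Since $|C_0\cap\PC|\le|C_0|=o(\log n)$ lies below $\gamma\log n$ and $(1+o(1))\log n$ lies below $(1+\epsilon)\log n$ for any constants $\gamma,\epsilon>0$, and greedy is monotone non-decreasing in both clique size and $\PC$-intersection, greedy never visits a clique violating the theorem's thresholds; together with the coupling this gives the theorem with time bound $T=n^{\omega(1)}$ (so $n^{\Omega(\log n)}$ when $\beta=\Omega(\log^2 n)$).

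To analyze greedy I would reveal the edges of $G$ lazily. Let $L_t=\{v\notin X_t:v\sim u\text{ for all }u\in X_t\}$ be the live set; a successful step adds a uniformly random element of $L_t$, and greedy halts iff $L_t=\emptyset$. Exposing the edges from $C_0$ to $V\setminus C_0$ gives $\E|L_0|\ge(n-k)2^{-|C_0|}=n^{1-o(1)}$, sharply concentrated by Chernoff, and $|L_0\cap\PC|\le k=n^\alpha$. The crucial point is that whenever the vertex $v$ added at a step lies outside $\PC$, all edges incident to $v$ are fair independent coins, so each $u\in L_t$ independently survives into $L_{t+1}$ with probability exactly $1/2$; consequently, on the event that the first $t$ successful steps added no planted vertex, $|L_{t+1}|$ stochastically equals $\mathrm{Bin}(|L_t|-1,1/2)$ and $|L_{t+1}\cap\PC|$ is stochastically at most $\mathrm{Bin}(|L_t\cap\PC|,1/2)$. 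Applying Chernoff at each of the at most $O(\log n)$ successful steps and taking a union bound yields, with probability $1-o(1)$: (i) $|L_t|$ halves up to constant factors each step, so greedy halts at some $\tau\le\log n+O(\log\log n)$ and $|C_\tau|=|C_0|+\tau=(1+o(1))\log n$; and (ii) while no planted vertex has yet been added, $\E|L_t\cap\PC|\le k\,2^{-t}=n^\alpha 2^{-t}$, which — splitting the run at the step where $|C_t|\approx\alpha\log n$ — is an $n^{\alpha-1+o(1)}=o(1)$ fraction of $\E|L_t|\ge n^{1-o(1)}2^{-t}$ while $|L_t|$ is still polynomially large, and has sum $o(1)$ over the remaining $O(\log n)$ steps, so by Markov $L_t\cap\PC=\emptyset$ throughout the tail with probability $1-o(1)$ (here $\alpha<1$ is used). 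In either regime the probability that a given successful step adds a planted vertex is $n^{-\Omega(1)}$; summing over the $O(\log n)$ successful steps, the probability that greedy \emph{ever} adds a planted vertex is $o(1)$. On the complementary event $|X_t\cap\PC|$ stays equal to $|C_0\cap\PC|$ throughout, closing the induction; geometric waiting times (means $n/|L_t|\le n$ per successful step) give absorption within $O(n^2)\ll T$ steps with probability $1-o(1)$.

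Assembling: fix $G$ in the intersection of the above probability-$1-o(1)$ events. For that $G$, the probability over the process that the Metropolis process started from $C_0$ reaches within $T$ steps a clique with at least $\gamma\log n$ planted vertices or of size at least $(1+\epsilon)\log n$ is at most $\Pr[\text{some downward move within }T\text{ steps}]+\Pr[\text{greedy reaches such a clique within }T]$, both $o(1)$ by the previous two steps; since $T=e^{\beta/2}=n^{\omega(1)}$ this is the desired lower bound, and taking $C_0$ to be the empty clique resolves the empty-start question in this temperature regime.

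I expect the main obstacle to be the greedy analysis \emph{at the end of its run}, once $|L_t|$ has fallen to polylogarithmic size: there multiplicative Chernoff control of both the halving and of $|L_t\cap\PC|$ is too weak, and one must instead use that by the time the clique has grown past $\alpha\log n$ (which, since $\alpha<1$, happens a constant factor before $|L_t|$ shrinks to polylogarithmic size) the expected number of surviving planted vertices is already below $1$, so that no planted vertex is even available to be picked up in the final $\Theta(\log n)$ steps even though the "fraction" estimate has become vacuous. A secondary subtlety is the quantifier order: the greedy analysis is a joint statement over the randomness of $G$ and the trajectory for a \emph{single fixed} starting clique $C_0$ of size $o(\log n)$; this is exactly what "started from any clique of size $o(\log n)$" should mean here (for each such $C_0$, with probability $1-o(1)$ over $G$ the process fails — in particular for the empty clique), and one should not try to union-bound over the super-polynomially many small $C_0$, which the per-instance failure probabilities cannot support.
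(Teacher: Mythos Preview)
Your approach is correct for the empty-clique start (and for any \emph{fixed} vertex set $C_0$ conditioned on being a clique), and it is genuinely different from the paper's. Both proofs begin with the same observation that at $\beta=\omega(\log n)$ no vertex is ever removed in polynomially many steps, so the process coincides with greedy. After that the paper does \emph{not} analyze greedy directly: it reuses the reversibility/birth--death machinery of \cref{lem:reaching-prob} together with the deterministic properties $\pupp$ and $\pexp$, counting cliques of size $q'=(1-\eta)\log n$ that are \emph{subsets} of cliques in $\XX_{q,<s}$ or $\XX_{<q,s}$ (\cref{lem:extend}). Your route --- deferred decisions on the live set $L_t$, tracking both $|L_t|$ and $|L_t\cap\PC|$ --- is more elementary and self-contained for this particular theorem.

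Two points where the paper's approach buys more. First, quantifier order: the formal \cref{thm:greedy} asserts that w.h.p.\ over $G$, \emph{for every} clique $C$ of size $\le \rhozero\log n$ the process fails. The paper achieves this because the analysis rests only on the deterministic graph properties $\pupp$ and $\pexp$. Your deferred-decisions argument intrinsically fixes $C_0$ before revealing $G$, so it proves only ``for each fixed $C_0$, w.h.p.\ over $G$ the process fails''; your suggestion that this is the intended reading is not what the paper establishes, and a union bound over all $n^{\Theta(\log n)}$ small cliques indeed does not close the gap. Second, failure probability: the paper obtains $1-n^{-\omega(1)}$, whereas the probability that greedy ever picks a planted vertex in your analysis is only $O(\log n)\cdot n^{\alpha-1}=n^{-\Omega(1)}$, so your overall bound is $1-n^{-\Omega(1)}$. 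Both points are immaterial for the empty-clique headline, but they explain why the paper routes the argument through \cref{lem:reaching-prob} rather than through a direct greedy analysis.
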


The key idea here is that for $h_q=q,$ if $\beta=\omega(\log n)$ then with high probability the Metropolis process
never removes vertices, so it is in fact the same as the greedy algorithm. 
This observation allows for a much easier analysis of the algorithm. 

In a different direction, Jerrum in \cite{Jer92} asked whether one can extend the failure of the Metropolis process to the failure of simulated annealing on finding large cliques in random graphs. We make a step also in this direction, by considering the simulated tempering (ST) version of the Metropolis process \cite{marinari1992simulated}. The simulated tempering is a celebrated Monte Carlo scheme originated in the physics literature that considers a Gibbs measure, say the one in \eqref{gibbs_contrib}, at different temperatures $\beta_1<\beta_2<\ldots<\beta_m$, in other words considers the Gibbs measures $\pi_{\beta_1},\pi_{\beta_2},\ldots, \pi_{\beta_M}$. Then it runs a Metropolis process on the product space between the set of temperatures and the Gibbs measures, which allows to modify the temperature during its evolution and interpolate between the different $\pi_{\beta_i}$ (for the exact definitions see Section \ref{sec:st_dfn}). The ST dynamics have been observed extensively in practise to outperform their single temperature Metropolis process counterparts but rigorous analysis of these processes is rather rare, see~\cite{bhatnagar2004torpid}.



It turns out that our ``bad" initialization results extend in a straightforward manner to the ST dynamics.

\begin{theorem}[Informal version of Theorem \ref{main_result_mixing_st}, and Theorem \ref{thm:large-clique_st}]\label{st_bad_contr}
Let $k=\floor{n^{\alpha}}$ for any $\alpha \in (0,1)$.
\begin{itemize}
    \item For arbitrary $\ham$, arbitrary $m \in \N$ and arbitrary sequence of inverse temperatures $\beta_1<\beta_2<\ldots<\beta_m$, there exists a ``bad" initialization such as the ST dynamics require $n^{\Omega(\log n)}$ time to reach $\epsilon \log n$ intersection with the planted clique, for some constant $\epsilon>0.$
    
    \item For arbitrary regular $\ham$, arbitrary $m \in \mathbb{N}$ and arbitrary sequence of inverse temperatures $\beta_1<\beta_2<\ldots<\beta_m$ with $\max_i |\beta_i|=O(\log n)$, there exists a ``bad" initialization such as the ST dynamics require $n^{\Omega(\log n)}$ time to reach a clique of size $(1+\epsilon)\log n$-size for some constant $\epsilon>0.$
\end{itemize}
\end{theorem}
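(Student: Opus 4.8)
The plan is to \emph{lift} the single-temperature bottleneck estimates behind \cref{main_result_mixing_contrib} --- formally, those established en route to \cref{main_result_mixing} (part~I) and to \cref{thm:large-clique} (part~II) --- to the product chain underlying the simulated tempering dynamics, exploiting the elementary fact that a temperature exchange never modifies the current clique. Recall (\cref{sec:st_dfn}) that the ST dynamics run a chain on $[m]\times\XX(G)$, reversible with respect to $\widehat\pi(i,C)\propto w_i\,\pi_{\beta_i}(C)$ for positive weights $w_i$, whose every step is either a \emph{Metropolis move} --- one step of the single-temperature Metropolis process for $\pi_{\beta_i}$ at the current index $i$, which changes the clique but not $i$ --- or a \emph{temperature move} --- a proposed change of $i$ accepted with the Metropolis--Hastings probability, which changes $i$ but not the clique. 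First, from the proof of \cref{main_result_mixing} (for the first bullet) and from the proof of \cref{thm:large-clique} (for the second), I would extract a bottleneck set $S'\subseteq\XX(G)$ with two properties: (a) no clique in $S'$ attains $\epsilon\log n$ intersection with $\PC$ (in the first bullet) or size $(1+\epsilon)\log n$ (in the second); and (b) $Q_\beta(S',\XX(G)\setminus S')\le n^{-\Omega(\log n)}\,\pi_\beta(S')$, where $Q_\beta(C,D)=\pi_\beta(C)P_\beta(C,D)$ denotes the edge-flow of the single-temperature chain at inverse temperature $\beta$ --- crucially, this bound holding uniformly over all $\beta\ge 0$ for the first bullet, and over all $\beta=O(\log n)$ for the second. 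Then I would lift $S'$ to $S:=[m]\times S'$.

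The key step is a flow decomposition for the product chain. Because a temperature move leaves the clique coordinate unchanged, no temperature move can cross $\partial S$; hence the flow $Q(S,S^{c})=\sum_{(i,C)\in S,\,(j,D)\notin S}\widehat\pi(i,C)\,\widehat P((i,C),(j,D))$ out of $S$ collects only Metropolis moves --- at a single index, with clique coordinate leaving $S'$ --- so that
\[
Q(S,S^{c})=\frac{1}{\sum_{j}w_j}\sum_{i=1}^{m} c_i\,w_i\,Q_{\beta_i}(S',\,\XX(G)\setminus S'),
\qquad
\widehat\pi(S)=\frac{1}{\sum_{j}w_j}\sum_{i=1}^{m} w_i\,\pi_{\beta_i}(S'),
\]
where $c_i\in(0,1]$ is the probability of attempting a Metropolis move at index $i$. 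Since $c_i\le 1$ and a weighted average is at most its largest term, property~(b) then gives
\[
\frac{Q(S,S^{c})}{\widehat\pi(S)}\ \le\ \max_{1\le i\le m}\ \frac{Q_{\beta_i}(S',\,\XX(G)\setminus S')}{\pi_{\beta_i}(S')}\ \le\ n^{-\Omega(\log n)},
\]
because every $\beta_i$ lies in the admissible range ($\beta_i\ge 0$ always; $\beta_i\le\max_j|\beta_j|=O(\log n)$ for the second bullet). Note this conductance-type bound does not depend on $m$.

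To produce the ``bad'' initial state I would then invoke the standard escape inequality for reversible chains, $\sum_{(i,C)\in S}\widehat\pi(i,C)\,\Pr_{(i,C)}[\tau_{S^{c}}\le t]\le t\,Q(S,S^{c})$, and choose $t=n^{\Omega(\log n)}$ small enough that $t\,Q(S,S^{c})<\tfrac12\widehat\pi(S)$; averaging over $S$ then yields a state $(i_0,C_0)\in S$ with $\Pr_{(i_0,C_0)}[\tau_{S^{c}}\le t]\le\tfrac12$ (pushable to $o(1)$ at the cost of a slightly smaller exponent). Since $C_0\in S'$ it is a legitimate starting clique and, by property~(a), an honest ``bad'' initialization; started from $(i_0,C_0)$ the ST dynamics remain in $S$ --- equivalently, fail to reach a clique with $\epsilon\log n$ intersection with $\PC$ (first bullet) or of size $(1+\epsilon)\log n$ (second bullet) --- for $n^{\Omega(\log n)}$ steps, which is the assertion.

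The hard part, such as it is, is not in this proof at all but is already discharged by the single-temperature theorems: the only point that genuinely matters here is the \emph{uniformity} of the bottleneck estimate~(b) over the stated range of inverse temperatures, which is exactly what the proofs of \cref{main_result_mixing} and \cref{thm:large-clique} provide, and which is what keeps the maximum over the $m$ temperatures at $n^{-\Omega(\log n)}$ no matter how many --- or which --- temperatures the scheme uses. Everything else (constructing $\widehat\pi$, the two-case split of a step, and checking that temperature moves contribute nothing to the boundary flow) is routine bookkeeping, which is why this extension is ``straightforward''.
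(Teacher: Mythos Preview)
Your proposal is correct and follows essentially the same approach as the paper: lift the single-temperature bottleneck set $S'$ (which is the \emph{same} set for every admissible $\beta$) to $[m]\times S'$, observe that temperature moves never change the clique and hence never cross the boundary, and then bound the ratio $\mu(\partial S)/\mu(S)$ (or equivalently the edge flow $Q(S,S^c)/\widehat\pi(S)$) by the maximum over $i$ of the single-temperature ratios via the elementary inequality $\sum_i a_i / \sum_i b_i \le \max_i a_i/b_i$. The paper phrases this using the boundary-measure ratio and \cref{lem:conductance} rather than edge flow, but the two formulations are equivalent and the substance of the argument is identical; in particular, for the second bullet the paper also chooses a single $\theta$ (hence a single bottleneck set) that works for all $\beta_i$ simultaneously since $\max_i|\beta_i|=O(\log n)$, exactly as you need for your uniformity claim~(b).
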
The key idea behind the proof of \cref{st_bad_contr} is that the bottleneck set considered in the proof of Theorem \ref{main_result_mixing_contrib} is ``universal", in the sense that the same bottleneck set applies to all inverse temperatures $\beta.$ For this reason, the same bottleneck argument can be applied to simulated tempering that is allowed to change temperatures during its evolution.

On top of this, we establish a lower bound on the ST dynamics, when started from the empty clique.

\begin{theorem}[Informal version of \cref{thm:Simulated-Tempering-empty}]\label{thm:st_empty}

Let $k=\floor{n^{\alpha}}$ for any $\alpha \in (0,1)$.
For arbitrary regular and increasing $\ham_q=q, q \in [n],$ arbitrary $m=o(\log n)$ and arbitrary sequence of inverse temperatures $\beta_1<\beta_2<\ldots<\beta_m$ with $\max_i |\beta_i|=O(1)$, the ST dynamics started from the pair of the empty clique and the temperature $\beta_1$ require $n^{\omega(1)}$ time to reach a clique which for some constant $\gamma, \epsilon>0$ either
\begin{itemize}
    \item has at least $\gamma \log n$ intersection with the planted clique or,
    \item has size at least $(1+\epsilon)\log n.$
\end{itemize}
\end{theorem}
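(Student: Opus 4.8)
\emph{Proof plan.}
Write the ST state as $(I_t,X_t)$, with $I_t\in\{1,\dots,m\}$ the temperature index and $X_t\in\XX(G)$ the current clique, started at $(I_0,X_0)=(1,\emptyset)$; let $\mathcal B:=\{C\in\XX(G):|C|\ge(1+\eps)\log n\text{ or }|C\cap\PC|\ge\gamma\log n\}$ and $\tau:=\inf\{t:X_t\in\mathcal B\}$. We must show $\tau\ge n^{\omega(1)}$ with probability $1-o(1)$ over $\GG(n,1/2,k)$ and the dynamics. I would first condition on the same high-probability ``good event'' $\mathcal E$ on $G$ used for \cref{thm:start-from-empty_contr}: for every clique $C$ with $|C|\le(1+\eps)\log n$ and $|C\cap\PC|\le\gamma\log n$, the number of one-vertex extensions of $C$, and the number of extensions of $C$ by a vertex of $\PC$, are within constant factors of explicit functions of $(|C|,|C\cap\PC|)$. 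On $\mathcal E$, for every $t<\tau$ the clique $X_t$ is non-bad, so these estimates control the Metropolis transition probabilities out of $X_t$ at every temperature.

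The main step is to \textbf{project the product chain onto a one-dimensional birth-and-death chain}, mirroring the proof of \cref{thm:start-from-empty_contr} while accounting for the extra temperature coordinate. I would run two comparisons: one with statistic $\sigma_t=|X_t|$ against the target $L=(1+\eps)\log n$, and one with a statistic that is a function of $(|X_t|,|X_t\cap\PC|)$ penalizing both small cliques and small intersection, against the target corresponding to $|X_t\cap\PC|=\gamma\log n$. The two facts that make the reduction go through are: (i) a temperature-swap move leaves $X_t$, hence $\sigma_t$, unchanged; and (ii) since $\ham_q=q$ and $\max_i|\beta_i|=O(1)$, the Metropolis acceptance probabilities at the different temperatures differ only by $e^{\pm O(1)}$ factors, so — using $\mathcal E$ — conditioned on the full state $(I_t,X_t)$ with $X_t\notin\mathcal B$, the probability that $\sigma$ increases in one ST step is at most $\bar p(\sigma_t)$ and the probability it decreases is at least $\underline q(\sigma_t)$, for functions $\bar p,\underline q$ of the statistic alone (here the hypotheses $m=o(\log n)$ and $\max_i|\beta_i|=O(1)$ guarantee that the temperature coordinate perturbs the statistic's dynamics only by lower-order factors). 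A monotone coupling then shows that up to time $\tau$ the statistic $\sigma_{t\wedge\tau}$ is dominated by the birth-and-death chain $Y$ on $\Z_{\ge 0}$ with up-rate $\bar p$, down-rate $\underline q$, and $Y_0=\sigma(\emptyset)=0$.

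I would then \textbf{analyze $Y$ by time reversal}, exactly as in \cref{thm:start-from-empty_contr}: the extension estimates show $\bar p,\underline q$ reproduce the single-temperature drift profile, so there is a ``waist'' window $W$ strictly between $0$ and $L$ on which $\bar p(\ell)/\underline q(\ell)\le n^{-\Omega(1)}$. With $w(\ell)=\prod_{j=0}^{\ell-1}\bar p(j)/\underline q(j+1)$ the unnormalized invariant measure of $Y$, the identity $\E_0[\tau^Y_L]=\sum_{\ell=0}^{L-1}\frac{1}{\bar p(\ell)w(\ell)}\sum_{j\le\ell}w(j)$ together with the fact that $w$ decays by a factor $n^{-\omega(1)}$ across $W$ gives $\E_0[\tau^Y_L]\ge n^{\omega(1)}$, and a gambler's-ruin estimate across $W$ upgrades this to $\tau^Y_L\ge n^{\omega(1)}$ with probability $1-o(1)$. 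Since, on $\mathcal E$, the event $\{\tau<n^{\omega(1)}\}$ forces one of the two comparison chains to reach its target within $n^{\omega(1)}$ steps, a union bound over the two failure modes yields $\tau\ge n^{\omega(1)}$ with probability $1-o(1)$, which is the claim.

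\medskip
\noindent\textbf{Main obstacle.} The graph estimates ($\mathcal E$) and the birth-and-death computation are inherited from \cref{thm:start-from-empty_contr}; the real work is the second step — showing that the clique statistic of the \emph{product} chain is still sandwiched between clean, state-independent birth-and-death rates even though the temperature performs its own random walk. This rests entirely on swaps being null moves for $\sigma_t$ and on \emph{uniform} (over all $m$ temperatures simultaneously) control of the Metropolis up/down ratio, which is exactly where $\ham_q=q$, $\max_i|\beta_i|=O(1)$ and $m=o(\log n)$ enter, and one must be careful to maintain the monotone coupling up to the stopping time $\tau$ so that the extension estimates are always in force. A secondary subtlety, already present in \cref{thm:start-from-empty_contr}, is that ``increasing $|X_t\cap\PC|$'' is hard only when $|X_t|$ is not small, so the statistic used for the intersection comparison must also penalize small cliques — a mild two-dimensional variant of the argument.
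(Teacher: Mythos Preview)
Your forward-coupling plan has a real gap at the point where you set up $\bar p$. You assert a good event $\mathcal E$ under which the number of one-vertex extensions of every non-bad clique $C$ is within constant factors of an explicit function of $(|C|,|C\cap\PC|)$; the only candidate is $\asymp n/2^{|C|}$. The \emph{lower} bound does hold uniformly (this is the paper's expansion property $\pexp$), but the matching \emph{upper} bound fails once $|C|$ is near $\log n$ or larger: there are $n^{\Theta(\log n)}$ cliques of each such size while the expected number of extensions is $O(1)$, so no concentration inequality beats the union bound, and indeed $n^{\Theta(\log n)}$ of them have at least one extension (witnessed by the cliques one size larger). A Chernoff-plus-union-bound argument only yields $\max_{|C|=s,\,C\notin\mathcal B}|A(C)|=O((\log n)^2)$ for $s\in[\log n,(1+\eps)\log n]$, forcing $\bar p(s)/\underline q(s+1)=\Theta_\eps(1)$ there rather than $n^{-\Omega(1)}$. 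Your ``waist'' therefore does not exist: the product across any window of width $\Theta(\log n)$ is at best polynomial, and the dominating chain $Y$ reaches $L$ in polynomial time, so the domination $|X_t|\le Y_t$ is vacuous.

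The paper avoids this by reversing the \emph{original} ST chain first, not the auxiliary one: by reversibility,
\[
\Pr\bigl((X_t,I_t)=(\sigma,\ell)\,\big\vert\,(X_0,I_0)=(\emptyset,0)\bigr)
=\frac{\widehat Z(\beta_0)}{\widehat Z(\beta_\ell)}\,e^{\beta_\ell h_q-\beta_0 h_0}\,
\Pr\bigl((X_t,I_t)=(\emptyset,0)\,\big\vert\,(X_0,I_0)=(\sigma,\ell)\bigr),
\]
and the right-hand probability is bounded by coupling $(|X_t|,I_t)$ from \emph{below} by a two-dimensional auxiliary process $(Y_t,J_t)$ on $(\text{size},\text{temperature})$---this needs only the \emph{lower} bound on extensions, which does hold uniformly. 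The stationary law of $(Y_t,J_t)$ is computed explicitly and reversibility of the auxiliary chain gives the final estimate. Your observation that $\max_i|\beta_i|=O(1)$ makes all acceptance ratios $e^{\pm O(1)}$ is correct but is not where the hypotheses bite; the assumption $m=o(\log n)$ is used instead to control the factor $\widehat Z(\beta_0)/\widehat Z(\beta_\ell)$ appearing above. One first prunes any temperature $\beta_{i+1}$ with $\widehat Z(\beta_i)/\widehat Z(\beta_{i+1})<n^{-\sqrt{\log n/m}}$ (the chain cannot reach it before leaving $\mathcal B^\ccomp$ anyway), after which the surviving ratio satisfies $\widehat Z(\beta_m)/\widehat Z(\beta_0)\le n^{\sqrt{m\log n}}=n^{o(\log n)}$, keeping the reversibility prefactor harmless.
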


\subsection{Further Comparison with Related Work}

\paragraph{Comparison with \cite{angelini2021mismatching}} As mentioned in the Introduction, the authors of \cite{angelini2021mismatching} predicted the failure of the Metropolis process for the Gibbs measure \eqref{Gibbs_intro} to recover the planted clique. Specifically, based on simulations they predicted the failure for all $k=\floor{n^{\alpha}},$ for $\alpha < \alpha^* \approx 0.91,$ though they comment that the exact predicted threshold $\alpha^*\approx 0.91$ could be an artifact of finite sized effects. In this work, using \cref{main_result_mixing_contrib} we establish that (worst-case initialized) the Metropolis process fails for all $\alpha \in (0,1),$ confirming their prediction but for $\alpha^*=1.$ In the same work, the authors suggest studying instead the Metropolis process for another Gibbs measure of the form \eqref{gibbs_contrib}, which they call BayesMC. Their suggested Gibbs measure for a specific value of $\beta$ matches a (slightly perturbed) version of the posterior of the planted clique recovery problem. The authors predict based on simulations that by appropriately turning (and ``mismatching") $\beta>0$, the Metropolis chain now recovers the planted clique for all $k=\floor{n^{\alpha}}, \alpha>1/2$.  In this work, we partially refute this prediction using \cref{main_result_mixing_contrib}, as (worst-case initialized) the Metropolis process for any Gibbs measure \eqref{gibbs_contrib}, including the mismatched posterior that \cite{angelini2021mismatching} considers, fails to recover the planted clique for all $\alpha \in (0,1).$ 

\paragraph{MCMC underperformance in statistical inference}
Our lower bounds show the suboptimality of certain natural MCMC methods in inferring the planted clique in a regime where simple algorithms work. Interestingly, this agrees with a line of work establishing the suboptimality of MCMC methods in inferring hidden structures in noisy environments. Such a phenomenon have been generally well-understood in the context of tensor principal component analysis \cite{richard2014statistical, benarousPCA}, where Langevin dynamics and gradient descent on the empirical landscape are known to fail to infer the hidden tensor, when simple spectral methods succeed.  Moreover, this suboptimality has been recently observed through statistical physics methods for other models including sparse PCA \cite{Antenucci19}, the spiked matrix-tensor model \cite{pmlr-v97-mannelli19a} and phase retrieval \cite{mannelli2020complex}. Finally, it has been also observed for a different family of MCMC methods but again for the planted clique model in \cite{gamarnik2019landscape}.

\section{Proof Techniques and Intuitions}

In this section, we offer intuition regarding the proofs of our results.  In the first two subsections, we make a proof overview subsection we provide intuition behind our worst-case initialization results for the Metropolis process \cref{main_result_mixing_contrib} and ST dynamics \cref{st_bad_contr}. In the following one we discuss about our results on the Metropolis process and ST dynamics with the empty clique initialization \cref{thm:start-from-empty,thm:start-from-empty-hit-large}.

\subsection{Worst-case Initialization for Reaching Large Intersection}

We start with discussing the lower bound for obtaining $\gamma \log n$ intersection with the planted clique. We employ a relatively simple bottleneck argument, based on \cref{lem:conductance}.

For $q,r \in \mathbb{N}$ let us denote by $W_{q,r}$ the number of cliques in $G \sim \GG(n,1/2,k)$ which have size $q$ and intersect the planted clique exactly on $r$ vertices. Our first starting point towards building the bottleneck is the following simple observation. For any $q<2\log n,$ and $r=\gamma \log n$ for a constant $0<\gamma<2(1-\alpha)$ we have w.h.p. as $n \rightarrow +\infty.$
\begin{align}\label{ratio}
W_{q,r}/W_{q,0} \leq n^{-\Omega(\log n)}.
\end{align}In words, the number of $q$-cliques of intersection $r$ with the planted clique are a quasi-polynomial factor less than the number of $q$-cliques which are disjoint with the planted clique. 
Indeed, at the first-moment level it can be checked to hold $\E[W_{q,r}]/\E[W_{q,0}]=\exp((-((1-\alpha)\gamma+\frac{\gamma^2}{2})\ln 2(\log n)^2)$ and a standard second moment argument gives the result (both results are direct outcomes of \cref{basic_rg}). Notice that this property holds for all $\alpha \in (0,1).$

Now let us assume for start that $\beta=0$. In that case the Gibbs measure $\pi_0$ is simply the uniform measure over cliques of $G$. For $r=\gamma \log n$ let $\mathcal{A}_r$ be the subset of the cliques with intersection with the planted clique at most $r$. Using standard results (see \cref{lem:conductance}) to prove our hitting time lower bound it suffice to show for any $r=\gamma \log n$ with constant $\gamma>0,$
\begin{align}\label{eq:clique_bound}
   \frac{\pi_0(\partial \mathcal{A}_r)}{\pi_0(\mathcal{A}_r)}= \frac{\sum_{q=1}^n W_{q,r}}{\sum_{q \in [n], 0 \leq s \leq r} W_{q,s}} \leq n^{-\Omega(\log n)},
\end{align}w.h.p. as $n \rightarrow +\infty$. Indeed, given such result based on \cref{lem:conductance} there is an initialization of the Metropolis process in $\mathcal{A}_r$ from which it will take quasi-polynomial time to reach the boundary of $\mathcal{A}_r$, which are exactly the cliques of intersection $r$ with the planted clique.

Now another first moment argument, (an outcome of part (3) of \cref{basic_rg}) allows us to conclude that since $\gamma<2(1-\alpha)$ is not too large, the only cliques of intersection $r=\gamma \log n$ with the planted clique satisfy $q<(2-\epsilon)\log n$ for small enough $\epsilon>0.$ In other words $W_{q,r}=0$ unless $q<(2-\epsilon)\log n.$
But now using also \eqref{ratio} we have 
\begin{align*}
    \sum_{q \in [n], 0 \leq s \leq r} W_{q,s} \geq \sum_{q \in [(2-\epsilon)\log n]} W_{q,0} \geq n^{\Omega(\log n)}\sum_{q \in [(2-\epsilon)\log n]} W_{q,r}=n^{\Omega(\log n)}\sum_{q \in [n]} W_{q,r},
\end{align*}and the result follows. 

Now, the interesting thing is that the exact same calculation works for arbitrary $\beta \geq 0$ and arbitrary Hamiltonian vector $h.$ Consider as before the same subset of cliques $\mathcal{A}_r.$ It suffices to show
\begin{align}\label{gen_beta}
   \frac{\pi_{\beta}(\partial \mathcal{A}_r)}{\pi_{\beta}(\mathcal{A}_r)}= \frac{\sum_{q=1}^n W_{q,r}e^{\beta h_q}}{\sum_{q \in [n], 0 \leq s \leq r} W_{q,s}e^{\beta h_q}} \leq n^{-\Omega(\log n)},
\end{align}But as before
\begin{align*}
    \sum_{q \in [n], 0 \leq s \leq r} W_{q,s}e^{\beta h_q} \geq \sum_{q \in [(2-\epsilon)\log n]} W_{q,0}e^{\beta h_q} \geq n^{\Omega(\log n)}\sum_{q \in [(2-\epsilon)\log n]} W_{q,r}e^{\beta h_q}=n^{\Omega(\log n)}\sum_{q \in [n]} W_{q,r}e^{\beta h_q},
\end{align*}and the general result follows.

Interestingly, this bottleneck argument transfers easily to a bottleneck for the ST dynamics. The key observation is that to construct the bottleneck set for the Metropolis process, we used \emph{the same bottleneck set} $\mathcal{A}_r$ for all inverse temperatures $\beta$ and Hamiltonian vectors $h$. Now the ST dynamics are a Metropolis process on the enlarged product space of the temperatures times the cliques. In particular, the stationary distribution of the ST dynamics is simply a mixture of the Gibbs distributions $\pi_{\beta_i}, i \in [m]$ say $\pi_{\mathrm{ST}}=\sum_{i=1}^m v_i \pi_{\beta_i}$ for some weights $v_i, i \in [m]$ (see Section \ref{sec:st_dfn} for the exact choice of the weights $v_i, i \in [m]$, though they are not essential for the argument). Now using \eqref{gen_beta} and that the boundary operator satisfies $\partial ([m]\times\mathcal{A}_r)= [m]\times\partial(\mathcal{A}_r)$ we immediately have
\begin{align}\label{gen_beta_2}
  \frac{\pi_{\mathrm{ST}}(\partial ([m]\times\mathcal{A}_r))}{\pi_{\mathrm{ST}}([m]\times \mathcal{A}_r)}= \frac{\sum_{i=1}^m v_i \pi_{\beta_i}(\partial \mathcal{A}_r)}{\sum_{i=1}^m v_i \pi_{\beta_i}(\mathcal{A}_r)}\leq n^{-\Omega(\log n)}.
\end{align}The result then follows again using Lemma \ref{lem:conductance}.

\subsection{Worst-case Initialization for Reaching Large Cliques}

 The worst-case initialization lower bound for reaching $(1+\epsilon)\log n$ cliques is also based on a bottleneck argument. This time the argument is more involved and is a combination of the bottleneck argument by Jerrum in \cite{Jer92} which worked for $\alpha<1/2$ and the bottleneck argument used in the previous subsection for reaching large intersection with the planted clique which worked for all $\alpha<1.$

We first need some notation. For $r\leq q$ we denote by $\XX_{q,r}$ the set of $q$-cliques of intersection $r$ with the planted clique and $\XX_{q,*}$ the set of all cliques of size $q$.  We also define $\XX_{q,<r}$ the set of $q$-cliques of intersection less than $r$ with the planted clique and analogously by $\XX_{<q,r}$ the set of cliques of size less than $q$ and intersection $r$ with the planted clique

We first quickly remind the reader the argument of Jerrum. Recall the notion of a $q$-gateway clique from \cite{Jer92}, which informally is the last clique of its size in some path starting from the empty clique and ending to a clique of size $q$ (see Section \ref{sec:bad_large} for the exact definition). For $q$ we call $\Psi_{q}$ the set of of $q$-gateway cliques. Importantly, for any $p<q$ any path from the empty clique to a $q$-clique crosses from some $q$-gateway of size $p.$  Jerrum's bottleneck argument in \cite{Jer92} is then based on the observation that assuming $\alpha<1/2$ for $\epsilon>0$ a small enough constant, if $p=\floor{(1+2\epsilon/3)\log n}$ and $q=\floor{(1+\epsilon)\log n}$ then \begin{align}\label{eq:gate_jer}|\Psi_{q} \cap \XX_{p,*}|/|\XX_{p, *}| \leq n^{-\Omega(\log n)}.
\end{align}Unfortunately, such a bottleneck argument is hopeless if $\alpha>1/2$ since in that case most cliques of size at most $q$ are fully included in the planted clique and the ratio trivializes.

We identify the new bottleneck by leveraging more the ``intersection axis" with the planted clique". Our first observation is that a relation like \eqref{eq:gate_jer} holds actually for all $\alpha<1$ if the cliques are restricted to have low-intersection with the planted clique, i.e. for all $\alpha \in (0,1)$ if $r=\gamma \log n$ for small enough constant $\gamma>0$ and $p,q$ defined as above then it holds
\begin{align}\label{cond_large_1}
    |\Psi_{q} \cap \XX_{p,<r}|/|\XX_{p, <r}| \leq n^{-\Omega(\log n)}.
\end{align} The second observation, is that for the Metropolis process to hit a clique of size $q=(1+\epsilon)\log n$ one needs to hit either $\XX_{<q,r},$ that is a clique of size less than $q$ and intersection $r$ with the planted clique, or a clique in $\Psi_{q} \cap \XX_{p,<r},$ that is a $q$-gateway of size $p$ and intersection less than $r.$ Set this target set $\mathcal{B}=(\Psi_{q} \cap \XX_{p,<r}) \cup \XX_{<q,r}$ which we hope to be ``small" enough to create a bottleneck.

We now make the following construction which has boundary included in the target set $\mathcal{B}.$ Consider $\mathcal{A}$ the set of all cliques reachable from a path with start from the empty clique and uses only cliques not included in $\mathcal{B},$ except maybe for the destination. It is easy to see $\partial A \subseteq B$ and because of the inclusion of $q$-gateways in the definition of $\mathcal{B},$ one can also see that no $q$-clique is in $\mathcal{A}$. Therefore using \cref{lem:conductance} it suffices to show that w.h.p.
\begin{align}\label{eq:goal_final_cond}
    \pi_{\beta}(\mathcal{B})/\pi_{\beta}(\mathcal{A})\leq n^{-\Omega(\log n)}.
\end{align}

To show \eqref{eq:goal_final_cond} we observe that $\XX_{p,<r} \cup \XX_{\leq p, 0} \subset \mathcal{A}$, that is $\mathcal{A}$ contains all cliques of size $p$ and intersection less than $r$ with the planted clique, or size less than $p$ and are disjoint with the planted clique. Indeed, it is straightforward than one can reach these cliques from a path from the empty clique without using cliques from $\mathcal{B}$ besides maybe the destination. 

A final calculation then gives
\begin{align}\label{eq:goal_final_cond_2}
    \pi_{\beta}(\mathcal{B})/\pi_{\beta}(\mathcal{A})\leq |\Psi_{q} \cap \XX_{p,<r}|/|\XX_{p, <r}|+\pi_{\beta}(\XX_{<q,r})/\pi_{\beta}(\XX_{ \leq p, 0})
\end{align}The first term is quasipolynomially small according to \eqref{cond_large_1}. For the second term, notice that from the equation \eqref{eq:clique_bound} from the previous subsection for all $x \leq q$ it holds $$|\XX_{x,r}|=W_{x,r} \leq n^{-\Omega(\log n)}W_{x,0}=n^{-\Omega(\log n)}|\XX_{x, 0}|.$$ Now using a first and second moment argument mostly based on \cref{basic_rg} we prove that for all $p<p' \leq q$ it holds w.h.p. $$\pi_{\beta}(\XX_{p,0}) \leq \exp({O((\beta+\log n)(|p-p'|))}\pi_{\beta}(\XX_{p', 0}).$$ Combining the above with a small case-analysis this allows us to conclude 
\begin{align*}
   \pi_{\beta}(\XX_{<q,r})/\pi_{\beta}(\XX_{ \leq p, 0}) \leq  \exp\left({O((\beta+\log n)(q-p))}\right)n^{-\Omega(\log n)}.
\end{align*}Now since $|q-p|=O(\epsilon \log n)$ and $\beta=O(\log n)$ we can always choose $\epsilon>0$ small enough but constant so that $\beta \epsilon/\log n$ is at most a small enough constant and in particular \begin{align*}
   \pi_{\beta}(\XX_{<q,r})/\pi_{\beta}(\XX_{ \leq p, 0}) \leq  n^{-\Omega(\log n)}.
\end{align*} This completes the proof overview for the failure of the Metropolis process.

For the ST dynamics notice that the only way the value of $\beta$ is used for the construction of the bottleneck is to identify a value of $\epsilon$ so that the term $\beta \epsilon/\log n$ is small enough, which then allows to choose the values of $p,q.$ But now if we have a sequence of inverse temperatures $\beta_1<\beta_2<\ldots<\beta_m$ with $\max_i |\beta_i|=O(\log n)$ we can choose a ``universal" $\epsilon$ so that for all $i$, $\left(\max_{i=1}^m \beta_i \right) \epsilon $ is small enough, leading to a ``universal" bottleneck construction for all $\pi_{\beta_i}.$ The proof then follows exactly the proof for the ST failure described in the previous subsection.

\subsection{Failure when Starting from the Empty Clique}

Here we explain the failure of the Metropolis process in the high temperature regime $\beta = o(\log n)$ when starting from the empty clique.
We show in \cref{thm:start-from-empty,thm:start-from-empty-hit-large} that, when starting from the empty clique which is the most natural and common choice of initialization, the Metropolis process fails to reach either cliques of intersection with the empty clique at least $\gamma \log n$ for any small constant $\gamma > 0$, or cliques of size $(1+\eps) \log n$ for any small constant $\eps > 0$.

One important observation is that since we are only considering when the process reaches either intersection $\gamma \log n$ or size $(1+\eps)\log n$, we may assume that the process will stop and stay at the same state once hitting such cliques. 
In particular, this means we can exclude from the state space all cliques of intersection $> \gamma \log n$ and all cliques of size $> (1+\eps)\log n$, and consider only cliques $C$ such that
\begin{equation}\label{eq:ov-size}
|C \cap \PC| \le \gamma \log n
\eqand
|C| \le (1+\eps) \log n.
\end{equation}
Indeed, the geometry of the restricted state space to these cliques are what really matters for whether the Metropolis process starting from the empty clique can reach our desired destinations or not. 
In particular, for $\gamma$ sufficiently small (say, $\gamma \le 1-\alpha$), most of cliques satisfying \cref{eq:ov-size} will have intersection $o(\log n)$ and also size $(1\pm o(1))\log n$. 
Note that this partially explains why having a large planted clique of size $n^{\alpha}$ does not help much for the tasks of interest, since for the restricted state space (those satisfying \cref{eq:ov-size}) most cliques do not have vertices from the planted clique and so any constant $\alpha<1$ does not help.

The key property that allows us to establish our result is a notion we call the ``expansion'' property.
Observe that, for a clique $C$ of size $q= \rho \log n$ with intersection $o(\log n)$, the expected number of vertices which can be added to $C$ to become a larger clique is $~n/2^q = n^{1-\rho}$; this is also the number of common neighbors of vertices from $C$.
Via the union bound and a concentration inequality, one can easily show that in fact, for all cliques with $\rho < 1$ the number of common neighbors is concentrated around its expectation with constant multiplicative error; see \cref{def:expansion,lem:satisfy-cond} (where we actually only need the lower bound).
This immediately implies that
\begin{align*}
\Pr(|X_t| = p-1 \mid |X_{t-1}| = p)
= \frac{p}{n} e^{-\beta}
\eqand
\Pr(|X_t| = p+1 \mid |X_{t-1}| = p)
\approx \frac{1}{2^p}.
\end{align*}
which allows us to track how the size of cliques evolves for the Metropolis process, under the assumption that the intersection is always $o(\log n)$. 

To actually prove our result, it will be helpful to consider the time-reversed dynamics and argue that when starting from cliques of intersection $\gamma \log n$ or cliques of size $(1+\eps) \log n$, it is unlikely to reach the empty clique. 
Suppose we have the identity Hamiltonian function $h_i = i$ for simplicity. 
Consider as an example the probability of hitting cliques of large intersection as in \cref{thm:start-from-empty}.
Recall that $\XX_{p,s}$ is the collection of cliques of size $p = \rho \log n$ for $\rho \le 1+\eps$ and of intersection $s = \gamma \log n$.
Then by reversibility we have that for all $t \ge 1$, 
\begin{align*}
\Pr\left( X_t \in \XX_{p,s} \mid X_0 = \emptyset \right)
= \sum_{\sigma \in \XX_{p,s}} \Pr\left( X_t = \sigma \mid X_0 = \emptyset \right)
= e^{\beta p} \sum_{\sigma \in \XX_{p,s}}
\Pr\left( X_t = \emptyset \mid X_0 = \sigma \right) 
\end{align*}
Notice that $e^{\beta p} = n^{o(\log n)}$ as $\beta=o(\log n)$. 
Our intuition is that in fact, for \emph{any} clique $\sigma$ of size $p$, the probability of reaching the empty clique when starting from $\sigma$ is at most $~ 1/\E[W_{p',*}]$ where $p' = \min\{p,\log n\}$; that is, for all integer $t \ge 1$ and all clique $\sigma$ of size $p$,
\begin{equation}\label{eq:ffni}
\Pr\left( X_t = \emptyset \mid X_0 = \sigma \right) 
\le \frac{1}{\E[W_{p',*}]},
\end{equation}
and hence we obtain
\[
\Pr\left( X_t \in \XX_{p,s} \mid X_0 = \emptyset \right)
\lesssim e^{\beta p} \frac{\E[W_{p,s}] }{\E[W_{p',*}]} \le n^{-c' \log n},
\]
where the last inequality is because most cliques has intersection $o(\log n)$ and size $(1+o(1)) \log n$.

The proof of \cref{eq:ffni} utilizes the expansion property mentioned above to analyze the time-reversed dynamics for $|X_t|$.
More specifically, we introduce an auxiliary birth and death process $\{Y_t\}$ on $[n]$ which is stochastically dominated by $\{|X_t|\}$, in the sense that
\begin{align*}
\Pr(Y_t = p-1 \mid Y_{t-1} = p)
&= \frac{p}{n} e^{-\beta} 
= \Pr(|X_t| = p-1 \mid |X_{t-1}| = p);\\
\Pr(Y_t = p+1 \mid Y_{t-1} = p)
&= \frac{1}{20 \cdot 2^p} < \frac{1}{2^p} \approx \Pr(|X_t| = p+1 \mid |X_{t-1}| = p).
\end{align*}
Through step-by-step coupling it is easy to see that $Y_t \le |X_t|$ always, and thus,
\begin{equation*}
\Pr\left( X_t = \emptyset \mid X_0 = \sigma \right) 
\le \Pr\left( Y_t = 0 \mid Y_0 = p \right).
\end{equation*}
This allows us to establish \cref{eq:ffni} by studying the much simpler process $\{Y_t\}$.
Furthermore, we assume the $Y_t$ process does not go beyond value $\log n$ (in fact, $(1-\eta)\log n$ for any fixed constant $\eta\in (0,1)$) so that we are in the regime where the expansion property holds; this is the reason $p' = \min\{p,\log n\}$ appears. 

The same approach works perfectly for bounding the probability of hitting cliques of size $(1+\eps)\log n$ when starting from the empty clique, as in \cref{thm:start-from-empty-hit-large}.
For the low-temperature metropolis process with $\beta = \omega(\log n)$ in \cref{thm:greedy}, we also need one more observation that the process in fact does not remove any vertex in polynomially many steps and so it is equivalent to a greedy algorithm. 
In particular, we can apply the same approach to argue that the process never reaches cliques of size $\log n$ that are \emph{subsets} of cliques with large intersection or large size, which have much smaller measure.
For the ST dynamics in \cref{thm:Simulated-Tempering-empty}, the same approach also works but we need a more sophisticated auxiliary process for the pair of clique size and inverse temperature, along with more complicated coupling arguments.

\section{Organization of Main Body}
The rest of the paper is organized as follows. In \cref{sec:prelims} we introduce the needed definitions and notation for the formal statements and proofs. In \cref{sec:large_ov} we present our lower bounds for reaching a clique of intersection $\gamma \log n$ with the planted clique. First we present the worst-case initialization result and then the case of empty clique initialization. Then in \cref{sec:bad_large} we discuss our lower bounds for reaching a clique of size $(1+\epsilon)\log n.$ As before, we first present the worst-case initialization result, and then discuss the empty clique initialization ones. Finally, in \cref{sec:st} we present our lower bounds for the simulated tempering dynamics.

\section{Getting Started}\label{sec:prelims}

For $n \in \N$, let $[n] = \{0,1,\dots,n\}$ to be the set of all non-negative integers which are at most $n$.
Throughout the paper, we use $\log$ to represent logarithm to the base $2$, i.e., $\log x = \log_2 x$ for $x \in \R^+$. 

We say an event $\mathcal{A}$ holds with high probability (w.h.p.) if $1- \Pr(\mathcal{A}) \le o(1)$. 

\subsection{Random Graphs with a Planted Clique}

Let $\GG(n,\frac{1}{2})$ denote the random graph on $n$ vertices where every pair $\{u,v\}$ is an edge with probability $1/2$ independently. 
For $k \in [n]$, we denote by $\GG(n,\frac{1}{2},k)$ the random graph $\GG(n,\frac{1}{2})$ with a planted $k$-clique, where a subset of $k$ out of $n$ vertices is chosen uniformly at random and the random graph is obtained by taking the union of $\GG(n,\frac{1}{2})$ and $\PC$ the $k$-clique formed by those vertices.

Let $\XX = \XX(G)$ be the collection of all cliques of an instance of graph $G \sim \GG(n,\frac{1}{2},k)$, and for $q,r \in [n]$ with $q \ge r$ let
\[
\XX_{q,r} = \{C \in \XX: |C| = q,\, |C \cap \PC| = r\}.
\]
We also define for convenience $\XX_{q,r} = \emptyset$ when $q < r$ or $q > n$.
Furthermore, let
\[
\XX_{q,*} = \bigcup_{r = 0}^q \XX_{q,r}
\eqand
\XX_{*,r} = \bigcup_{q = r}^n \XX_{q,r}.
\]
We also define $\XX_{\le q,*} = \bigcup_{q' = 0}^q \XX_{q',*}$ and $\XX_{*,\le r} = \bigcup_{r' = 0}^r \XX_{*,r'}$.

For $q,r \in [n]$ with $q \ge r$, let $W_{q,r} = |\XX_{q,r}|$ the number of $q$-cliques $C$ in $G$ with $|C \cap \PC|=r$. Similarly, $W_{q,r} = 0$ when $q < r$ or $q > n$.

\begin{lemma}\label{basic_rg}
Let a constant $\alpha \in [0,1)$ and consider the random graph $\GG(n,\frac{1}{2},k = \floor{n^\alpha})$ with a planted clique.
Fix any absolute constant $\epsilon>0$. 
\begin{enumerate}[(1)]
\item\label{item:X_qr-expectation} For any $q=\floor{\rho \log_2 n}$ with parameter $\rho > 0$ and any $r=\floor{\gamma \log_2 n}$ with parameter $0 \le \gamma \leq \rho$, it holds 
\begin{align*}
    \E[W_{q,r}]= \exp\left[ (\ln2) (\log n)^2 \left( \rho - \frac{\rho^2}{2} - (1-\alpha)\gamma + \frac{\gamma^2}{2} + o(1) \right) \right]
\end{align*} 
w.h.p.\ as $n \rightarrow +\infty$.

\item\label{item:X_q0>expectation} For $r=0$ and any $q=\floor{\rho \log_2 n}$ with $0<\rho \leq 2-\epsilon$, it holds 
    \begin{align*}
    W_{q,0} \geq  \frac{1}{2} \,\E[W_{q,0}] 
	\end{align*}
	w.h.p.\ as $n \rightarrow +\infty$.

\item\label{item:X_qr=0} For any $r=\floor{\gamma \log n}$ with $0 \le \gamma \le 1-\alpha$ and any $q=\floor{\rho \log_2 n}$ satisfying the inequality $\rho \ge 1+\sqrt{(1-\gamma)^2+2\alpha\gamma}+\eps$, it holds 
\begin{align*}
    W_{q,r}=0
\end{align*} 
w.h.p. as $n \rightarrow +\infty$. 
\end{enumerate} 
\end{lemma}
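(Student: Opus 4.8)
The plan is to derive all three parts from the exact first moment of the counts $W_{q,r}$. A clique of $G\sim\GG(n,\tfrac12,k)$ of size $q$ meeting $\PC$ in exactly $r$ vertices is obtained by choosing $r$ of the $k$ planted vertices and $q-r$ of the $n-k$ non-planted ones; conditioned on such a choice it is a clique precisely when the $\binom{q-r}{2}$ edges among the non-planted vertices and the $r(q-r)$ edges between planted and non-planted vertices are all present, an event of probability $2^{-\binom{q-r}{2}-r(q-r)}$. Hence
\[
\E[W_{q,r}] \;=\; \binom{k}{r}\binom{n-k}{q-r}\, 2^{-\binom{q-r}{2}-r(q-r)}.
\]
For part~(1) I would take $\log_2$ of this identity and estimate each factor by elementary binomial bounds: since $q=O(\log n)$ we have $r=o(k)$ (for $\alpha>0$; if $\alpha=0$ only $\gamma=0$ is relevant) and $q-r=o(n-k)$, so $\log\binom{k}{r}=r\log k-O(\log n\log\log n)=\alpha\gamma(\log n)^2-o((\log n)^2)$ and $\log\binom{n-k}{q-r}=(q-r)\log(n-k)-O(\log n\log\log n)=(\rho-\gamma)(\log n)^2-o((\log n)^2)$ (using $\log(n-k)=\log n-o(1)$ since $\alpha<1$), while $\binom{q-r}{2}=\tfrac{(\rho-\gamma)^2}{2}(\log n)^2+O(\log n)$ and $r(q-r)=\gamma(\rho-\gamma)(\log n)^2+O(\log n)$. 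Collecting the coefficient of $(\log n)^2$ gives $\alpha\gamma+(\rho-\gamma)-\tfrac{(\rho-\gamma)^2}{2}-\gamma(\rho-\gamma)$, which simplifies to $\rho-\tfrac{\rho^2}{2}-(1-\alpha)\gamma+\tfrac{\gamma^2}{2}$; multiplying by $\ln 2$ to pass to natural logarithms yields the asserted asymptotic. (As written, (1) is a statement about the deterministic number $\E[W_{q,r}]$.)

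Part~(3) then follows from part~(1) via Markov's inequality. Setting $f(\rho,\gamma)=\rho-\tfrac{\rho^2}{2}-(1-\alpha)\gamma+\tfrac{\gamma^2}{2}$, the equation $f(\rho,\gamma)=0$ reads $\rho^2-2\rho+\big(2(1-\alpha)\gamma-\gamma^2\big)=0$, with roots $\rho=1\pm\sqrt{(1-\gamma)^2+2\alpha\gamma}$; since $f$ is a downward parabola in $\rho$, for $\rho\ge 1+\sqrt{(1-\gamma)^2+2\alpha\gamma}+\eps$ we get $f(\rho,\gamma)\le-\tfrac{\eps^2}{2}$. Hence $\E[W_{q,r}]\le\exp(-\Omega((\log n)^2))=n^{-\Omega(\log n)}$ for $n$ large, and $\Pr[W_{q,r}\ge1]\le\E[W_{q,r}]=o(1)$, i.e.\ $W_{q,r}=0$ w.h.p. (If one wants this simultaneously for all $q$ above the threshold, union bound over the at most $n$ relevant values of $q$; each summand is $n^{-\Omega(\log n)}$, so the bound survives.)

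Part~(2) is the only place where a second moment is needed. The key observation is that $|C\cap\PC|=0$ forces $C\subseteq V\setminus\PC$, so $W_{q,0}$ is exactly the number of $q$-cliques of the induced subgraph $G[V\setminus\PC]$, which is distributed as $\GG(n-k,\tfrac12)$ with $n-k=(1-o(1))n$ and $\log(n-k)=\log n-o(1)$. Thus $W_{q,0}$ is the classical $q$-clique count of an Erd\H{o}s--R\'enyi graph, and the hypothesis $\rho\le 2-\eps$ puts us strictly below the clique-number threshold $2\log(n-k)$, which is exactly the regime of concentration. Writing $W_{q,0}=\sum_S\one[S\text{ is a clique}]$ over $q$-subsets $S\subseteq V\setminus\PC$ and grouping ordered pairs $(S,S')$ by $j=|S\cap S'|$,
\[
\frac{\E[W_{q,0}^2]}{\E[W_{q,0}]^2}\;=\;\sum_{j=0}^{q}\frac{\binom{q}{j}\binom{n-k-q}{q-j}}{\binom{n-k}{q}}\,2^{\binom{j}{2}};
\]
the terms $j\in\{0,1\}$ sum to $1-o(1)$, the $j=2$ term is $O\big((\log n)^4/n^2\big)$, and the remaining terms are controlled, as in the standard Bollob\'as--Erd\H{o}s / Matula / Grimmett--McDiarmid analysis, by a series that is $o(1)$ precisely because $2-\rho\ge\eps$ (for instance the $j=q$ term equals $2^{\binom q2}/\binom{n-k}{q}=\exp(-\Omega((\log n)^2))$). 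Hence $\Var[W_{q,0}]=o(\E[W_{q,0}]^2)$, and Chebyshev's inequality gives $\Pr\big[W_{q,0}<\tfrac12\E[W_{q,0}]\big]\le 4\Var[W_{q,0}]/\E[W_{q,0}]^2=o(1)$.

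The step I expect to be the main obstacle is the variance estimate in part~(2): one must verify that the $j$-indexed terms above are summably small \emph{uniformly} over the whole range $\rho\le 2-\eps$, the delicate contributions coming from the large-$j$ (near-$q$) terms, which are tamed only by the slack $2-\rho\ge\eps$. Parts~(1) and~(3) are routine Stirling bookkeeping once the exact expectation is in hand, and all estimates are uniform in $\rho,\gamma$ for fixed $\eps$, as the applications require.
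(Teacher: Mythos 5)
Your proposal is correct and follows essentially the same route as the paper: exact first moment via the binomial identity (your $2^{-\binom{q-r}{2}-r(q-r)}$ equals the paper's $2^{\binom{r}{2}-\binom{q}{2}}$), Stirling-type asymptotics for part~(1), a second-moment/Chebyshev argument for the non-planted clique count $W_{q,0}\sim$ (clique count in $\GG(n-k,\tfrac12)$) for part~(2), and Markov plus a union bound for part~(3). The paper compresses the variance estimate to the cited classical bound $\Var(W_{q,0})/\E[W_{q,0}]^2=O(q^4/n^2)$ from Bollob\'as--Erd\H{o}s, whereas you spell out the $j$-indexed sum; the content is the same.
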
The proof of the lemma is deferred to the Appendix.

\begin{definition}[Clique-Counts Properties $\pupp$ and $\plow$]
Let $\eps \in (0,1)$ be an arbitrary constant. 
\begin{enumerate}[(1)]
\item (Upper Bounds) We say the random graph $\GG(n,\frac{1}{2},k = \floor{n^\alpha})$ with a planted clique satisfies the property $\pupp(\eps)$ if the following is true: 
For all integers $q,r \in \N$ with $0 \le r \le q \le n$, it holds
\[
W_{q,r} \le n^3 \,\E[W_{q,r}];
\]
in particular, for $0 \le r \le (1-\alpha)\log n$ and $q \ge (1+\sqrt{(1-\gamma)^2+2\alpha\gamma}+\eps) \log n$ where $\gamma = r / \log n$, it holds
\[
W_{q,r} = 0.
\]

\item (Lower Bounds) We say the random graph $\GG(n,\frac{1}{2},k = \floor{n^\alpha})$ with a planted clique satisfies the property $\plow(\eps)$ if the following is true: 
For every integer $q \in \N$ with $0 \le q \le (2-\eps) \log n$, it holds
\[
W_{q,0} \ge \frac{1}{2} \,\E[W_{q,0}].
\]

\end{enumerate}
\end{definition}

\begin{lemma}\label{lem:rg_property1}
For any constant $\alpha \in (0,1)$ and any constant $\eps \in (0,1)$, the random graph $\GG(n,\frac{1}{2},k = \floor{n^\alpha})$ with a planted clique satisfies both $\pupp(\eps)$ and $\plow(\eps)$ with probability $1-o(1)$ as $n\to \infty$. 
\end{lemma}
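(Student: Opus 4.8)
The plan is to derive both properties from \cref{basic_rg} by a union bound over a polynomial-size family of "bad" events, using the fact that the relevant deviations in \cref{basic_rg} are already quasi-polynomially strong so they easily survive a $\poly(n)$ union bound. First I would handle $\pupp(\eps)$. The statement asks for $W_{q,r} \le n^3 \E[W_{q,r}]$ simultaneously for \emph{all} $0 \le r \le q \le n$. For this I would split into two cases. For $q \le C\log n$ for a suitable constant $C$ (say $C = 10$), there are only $O((\log n)^2)$ pairs $(q,r)$, and for each such pair Markov's inequality gives $\Pr(W_{q,r} > n^3 \E[W_{q,r}]) \le n^{-3}$; a union bound over these $O((\log n)^2)$ pairs leaves failure probability $o(1)$. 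For $q > C\log n$, the point is that $\E[W_{q,r}]$ is super-polynomially small (indeed, for $q$ a large constant multiple of $\log n$ one has $\E[W_{q,*}] = n^{-\Omega(\log n)}$ by summing the estimate in \cref{basic_rg}\eqref{item:X_qr-expectation} over $r$, since the exponent $\rho - \rho^2/2 - (1-\alpha)\gamma + \gamma^2/2$ is negative once $\rho$ is large), so $\Pr(W_{q,*} \ge 1) \le \E[W_{q,*}] = n^{-\omega(1)}$; union-bounding over the at most $n$ values of $q$ in this range still gives $o(1)$, and on the complementary event $W_{q,r} = 0 \le n^3 \E[W_{q,r}]$ trivially.

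Next I would observe that the "in particular" clause of $\pupp(\eps)$ is not an extra requirement but a consequence of $W_{q,r} \le n^3 \E[W_{q,r}]$ together with \cref{basic_rg}\eqref{item:X_qr=0}: when $0 \le r \le (1-\alpha)\log n$ and $\rho \ge 1 + \sqrt{(1-\gamma)^2 + 2\alpha\gamma} + \eps$, part \eqref{item:X_qr=0} of \cref{basic_rg} already says $W_{q,r} = 0$ w.h.p.; alternatively one checks directly that under this lower bound on $\rho$ the exponent in \cref{basic_rg}\eqref{item:X_qr-expectation} is $-\Omega((\log n)^2)$, so $n^3 \E[W_{q,r}] = o(1)$ and hence $W_{q,r} = 0$ on the event $W_{q,r} \le n^3\E[W_{q,r}]$. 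For $\plow(\eps)$, the bound $W_{q,0} \ge \tfrac12 \E[W_{q,0}]$ for every integer $0 \le q \le (2-\eps)\log n$ is exactly \cref{basic_rg}\eqref{item:X_q0>expectation} stated for a single $q$; since there are only $O(\log n)$ such values of $q$, I would invoke part \eqref{item:X_q0>expectation} for each and union-bound the $o(1)$ failure probabilities. Finally, combining the two events (each of probability $1 - o(1)$) by one more union bound gives that $\GG(n,\tfrac12,\floor{n^\alpha})$ satisfies both $\pupp(\eps)$ and $\plow(\eps)$ with probability $1 - o(1)$.

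The only mild subtlety — and the step I would be most careful about — is making the "super-polynomially small expectation for large $q$" argument uniform in $q$ over the whole range $C\log n < q \le n$, since \cref{basic_rg}\eqref{item:X_qr-expectation} is stated with $\rho$ a fixed constant and an $o(1)$ error in the exponent. For $q$ up to, say, $(\log n)^2$ one can still take $\rho = q/\log n$ and note the exponent $(\ln 2)(\log n)^2(\rho - \rho^2/2 + o(1))$ (maximizing over $\gamma$, which only helps since $-(1-\alpha)\gamma + \gamma^2/2 \le 0$ for $\gamma \le 1-\alpha$ and the range of relevant $r$ is controlled by part \eqref{item:X_qr=0}) is $\le -\Omega((\log n)^2)$ once $\rho \ge 3$; for larger $q$ one bounds $\E[W_{q,*}] \le \binom{n}{q} 2^{-\binom{q}{2}} \le (n \cdot 2^{-(q-1)/2})^q$, which is $n^{-\omega(1)}$ as soon as $q \ge 4\log n$, and this crude bound needs no regularity. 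A union bound of these at-most-$n$ tiny probabilities is $o(1)$, completing the argument.
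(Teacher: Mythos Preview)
Your proposal is correct and follows the same route as the paper's one-line proof: Markov's inequality and a union bound over \cref{basic_rg}. Note, though, that your case split on $q$ and the entire ``subtlety'' paragraph are unnecessary --- Markov's inequality gives $\Pr\bigl(W_{q,r} > n^3\,\E[W_{q,r}]\bigr) \le n^{-3}$ for \emph{every} pair $(q,r)$ regardless of how small $\E[W_{q,r}]$ is, and since there are at most $(n+1)^2$ pairs with $0 \le r \le q \le n$, a single union bound already gives failure probability $O(n^{-1})$ without ever appealing to the asymptotics of $\E[W_{q,r}]$ for large $q$.
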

\begin{proof}
Follows immediately from \cref{basic_rg}, the Markov's inequality, and the union bound. 
\end{proof}

\subsection{Hamiltonian and Gibbs Measure}

For given $n\in \N$, let $\ham: [n] \to \R$ be an arbitrary function. 
For ease of notations we write $\ham_q = \ham(q), q \in [n]$ and thus the function $\ham$ is identified by the vector $(\ham_0,\ham_1,\dots,\ham_n)$.  
Given an $n$-vertex graph $G$, 
consider the Hamiltonian function $H: \XX \to \R$ where $H(C) = \ham_{|C|}$. 
For $\beta \in \R$, the corresponding Gibbs measure is defined as
\begin{equation}\label{eq:Gibbs-general}
\pi_\beta(C) \propto w_\beta(C) := \exp\left(\beta \ham_{|C|}\right).
\end{equation}

Let $Z(\beta) = Z(G,\ham;\beta)$ to be the partition function given by
\[
Z(\beta) = \sum_{C \in \XX} w_\beta(C).
\]
Furthermore, let
\[
Z_{q,*}(\beta) = \sum_{C \in \XX_{q,*}} w_\beta(C)
\eqand
Z_{*,r}(\beta) = \sum_{C \in \XX_{*,r}} w_\beta(C).
\]


\begin{assumption}\label{ass:ham}
We assume that the Hamiltonian $h$ satisfies
\begin{enumerate}[(a)]
\item $h_0 = 0$;
\item $h$ is $1$-Lipschitz, i.e., $|h(q) - h(q')| \le |q-q'|$ for $q,q' \le 2.1 \log n$. 
\end{enumerate}
\end{assumption}

\subsection{Metropolis Process and the Hitting Time Lower Bound}\label{sec:met_process}

In this work, we study the dynamics of the Metropolis process with the respect to the Gibbs measure defined in \eqref{eq:Gibbs-general}. The Metropolis process is a Markov chain on $\XX = \XX(G)$, the space of all cliques of $G$. 
The Metropolis process is described in Algorithm \ref{alg:metropolis}.



\begin{algorithm}
\caption{Metropolis Process}\label{alg:metropolis}
\KwIn{a graph $G$, a starting clique $X_0 \in \XX$, stopping time $T$}
\For{$t = 1,\dots,T$}{
  Pick $v\in V$ uniformly at random\;
  $C \gets X_{t-1} \oplus \{v\}$\;
  \eIf{$C \in \XX$}{
    $X_t \gets 
    \begin{cases}
    C, &\text{with probability~} \min\{1, \pi_\beta(C)/\pi_\beta(X_t)\};\\
    X_{t-1}, &\text{with remaining probability};
    \end{cases}$
  }{$X_t \gets X_{t-1}$\;
  }
}
\KwOut{$X_T$}
\end{algorithm}

The Metropolis process is an ergodic and reversible Markov chain, with the unique stationary distribution $\pi_{\beta}$ except in the degenerate case when $\beta = 0$ and $G$ is the complete graph; see \cite{Jer92}.
 

The following lemma is a well-known fact for lower bounding the hitting time of some target set of a Markov chain using conductance; see \cite[Claim 2.1]{MWW09}, \cite[Theorem 7.4]{LP-book17} and \cite[Proposition 2.2]{WellsCOLT20}. We rely crucially on the following lemma for our worst-case initialization results.

\begin{lemma}\label{lem:conductance}
Let $P$ be the transition matrix of an ergodic Markov chain on a finite state space $\XX$ with stationary distribution $\pi$.
Let $A \subseteq \XX$ be a set of states and let $B \subseteq \XX$ be a set of boundary states for $A$ such that $P(x,y) = 0$ for all $x \in A \setminus B$ and $y \in A^\ccomp \setminus B$. 
Then for any $t>0$ there exists an initial state $x \in A$ such that 
\[
\Pr\left( \exists i \le t \text{~such that~} X_i \in A^\ccomp \mid X_0 = x \right) \le \frac{\pi(B)}{\pi(A)}\, t.
\]
\end{lemma}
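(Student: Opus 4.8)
The plan is a standard conductance/bottleneck argument, combined with an averaging trick to extract a single bad initial state. First I would observe that if $\pi(A) = 0$ the claim is vacuous, so assume $\pi(A) > 0$ and work with the probability measure $\pi$ restricted (and renormalized) to $A$; call it $\mu$, i.e. $\mu(x) = \pi(x)/\pi(A)$ for $x \in A$. The key quantity to control is $\Pr_\mu\left(\exists i \le t: X_i \in A^\ccomp\right)$, the escape probability when the chain is started from $\mu$; once this is bounded by $\pi(B)t/\pi(A)$, an averaging argument shows there must exist at least one $x \in A$ with escape probability at most that bound, since the $\mu$-average of the per-state escape probabilities equals the $\mu$-started escape probability.

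Next, to bound the escape probability from $\mu$, I would use a union bound over the first time the chain leaves $A$. Writing $\tau = \min\{i \ge 1 : X_i \in A^\ccomp\}$, the event $\{\tau \le t\}$ decomposes as $\bigcup_{i=1}^t \{X_{i-1} \in A,\ X_i \in A^\ccomp\}$. For the single-step crossing event at step $i$, the hypothesis $P(x,y) = 0$ for $x \in A \setminus B$, $y \in A^\ccomp \setminus B$ forces the crossing to pass through the boundary: either $X_{i-1} \in B$ or $X_i \in B$ (or both). Hence $\Pr_\mu(\tau \le t) \le \sum_{i=1}^t \Pr_\mu(X_{i-1} \in B \text{ or } X_i \in B) \le \sum_{i=1}^t \bigl(\Pr_\mu(X_{i-1}\in B) + \Pr_\mu(X_i \in B)\bigr)$. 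The crucial point making this work is that $\mu$ is (a restriction of) the stationary distribution: for the unrestricted stationary chain started from $\pi$ we would have $\Pr_\pi(X_i \in B) = \pi(B)$ for all $i$, and since $\mu = \pi(\cdot)/\pi(A)$ on $A$ and the chain started from $\mu$ is stochastically dominated (in the relevant sense) by the chain started from $\pi$ scaled by $1/\pi(A)$, one gets $\Pr_\mu(X_i \in B) \le \pi(B)/\pi(A)$ for every $i \ge 0$. Summing over the $2t$ terms naively gives $2\pi(B)t/\pi(A)$; to get exactly $\pi(B)t/\pi(A)$ I would instead bound $\Pr_\mu(\tau \le t) \le \sum_{i=0}^{t-1}\Pr_\mu(X_i \in B)$ by noting that a first crossing at step $i$ with $X_{i-1} \in A\setminus B$ is impossible, so the crossing step always has $X_{i-1} \in B$, and $i-1$ ranges over $\{0,\dots,t-1\}$; each term is $\le \pi(B)/\pi(A)$, yielding the claimed bound.

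Finally, combining the two pieces: $\mathbb{E}_{x \sim \mu}\bigl[\Pr(\exists i \le t: X_i \in A^\ccomp \mid X_0 = x)\bigr] = \Pr_\mu(\exists i \le t: X_i \in A^\ccomp) \le \pi(B)t/\pi(A)$, so some $x \in A$ achieves at most the average, which is the statement. The main obstacle — really the only delicate point — is justifying $\Pr_\mu(X_i \in B) \le \pi(B)/\pi(A)$ cleanly for all $i$; the clean way is to couple the $\mu$-started chain with the $\pi$-started chain so that on the event $\{X_0 \in A\}$ (which has $\pi$-probability $\pi(A)$) they agree, giving $\pi(A)\Pr_\mu(X_i \in B) \le \Pr_\pi(X_i \in B) = \pi(B)$. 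Everything else is a routine union bound and averaging, so I would keep the write-up short and cite \cite[Claim 2.1]{MWW09} or \cite[Theorem 7.4]{LP-book17} for the standard version.
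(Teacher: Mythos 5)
Your overall strategy---decompose via the first escape time $\tau$, charge each crossing to a visit to $B$, control $\Pr_\mu(X_i \in B)$ by $\pi(B)/\pi(A)$ using that $\pi(A)\mu$ is dominated by $\pi$ and $\pi$ is stationary, and finish by averaging to extract a single good $x\in A$---is exactly the standard argument behind the sources the paper cites for this lemma (the paper does not reprove it). So there is no genuinely different route to compare.

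However, the tightening step you use to avoid the factor of $2$ contains a real error. You assert that a first crossing at step $i$ with $X_{i-1}\in A\setminus B$ is impossible, so that the crossing step always has $X_{i-1}\in B$. This is not what the hypothesis gives: it forbids the transition $X_{i-1}\in A\setminus B \to X_i\in A^\ccomp\setminus B$, but it does not forbid $X_{i-1}\in A\setminus B \to X_i\in A^\ccomp\cap B$, since the lemma as stated does not require $B\subseteq A$. The repair is short: split the disjoint events $\{\tau=i\}$ into the sub-event where $X_{i-1}\in B$ (which then forces $X_{i-1}\in A\cap B$, since $\tau>i-1$ means $X_{i-1}\in A$) and the complementary sub-event, where the hypothesis forces $X_i\in A^\ccomp\cap B$. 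Summing over $i=1,\dots,t$ and applying your bound $\Pr_\mu(X_j\in S)\le\pi(S)/\pi(A)$ with $S=A\cap B$ for the first family of terms and $S=A^\ccomp\cap B$ for the second gives
\[
\Pr_\mu(\tau\le t)\le t\,\frac{\pi(A\cap B)+\pi(A^\ccomp\cap B)}{\pi(A)}=t\,\frac{\pi(B)}{\pi(A)},
\]
recovering the claimed bound with no factor of $2$. (In every application in the paper one has $B\subseteq A$, so your shortcut happens to be valid in context, but not for the lemma in the generality stated.)
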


\section{Quasi-polynomial Hitting Time of Large intersection}\label{sec:large_ov}


\subsection{Existence of a Bad Initial Clique}

\begin{theorem}\label{main_result_mixing}
Let $\alpha \in (0,1)$ be any fixed constant.
For any constant $\gamma > 0$, the random graph $\GG(n,\frac{1}{2},k = \floor{n^\alpha})$ with a planted clique satisfies the following with probability $1-o(1)$ as $n\to \infty$. 

Consider the general Gibbs measure given by \cref{eq:Gibbs-general} for arbitrary $\ham$ and arbitrary inverse temperature $\beta$.
There exists a constant $c=c(\alpha,\gamma) > 0$ and an initialization state for the Metropolis process from which it requires at least $n^{c\log n}$ steps to reach a clique of intersection with the planted clique at least $\gamma\log n$, with probability at least $1-n^{-c\log n}$. In particular, under the worst-case initialization it fails to recover the planted clique in polynomial-time.
\end{theorem}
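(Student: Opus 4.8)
The plan is to apply the conductance-based hitting time bound of \cref{lem:conductance} with a carefully chosen bottleneck set defined purely in terms of intersection with the planted clique, exactly as sketched in the proof overview. Fix a constant $\gamma' > 0$ small enough that $\gamma' < \gamma$ and $\gamma' < 2(1-\alpha)$ — both requirements are harmless since we only need \emph{some} positive threshold below $\gamma \log n$, and crossing $\gamma'\log n$ is a prerequisite for crossing $\gamma\log n$. Let $r = \floor{\gamma'\log n}$ and take $\mathcal{A}_r = \XX_{*,\le r}$, the set of all cliques whose intersection with $\PC$ is at most $r$, and let the boundary set be $B = \XX_{*,r}$, the cliques of intersection exactly $r$. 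Since the Metropolis process changes a clique by one vertex per step, it cannot jump from intersection $<r$ to intersection $>r$ without passing through $B$, so the hypothesis $P(x,y)=0$ for $x \in \mathcal{A}_r\setminus B$, $y \in \mathcal{A}_r^{\ccomp}\setminus B$ of \cref{lem:conductance} holds. The conclusion then gives an initialization from which hitting intersection $>r$ (and a fortiori intersection $\ge \gamma\log n$) within $t$ steps has probability at most $\frac{\pi_\beta(B)}{\pi_\beta(\mathcal{A}_r)} t$, so it suffices to bound this ratio by $n^{-\Omega(\log n)}$.

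The heart of the argument is the estimate $\pi_\beta(B)/\pi_\beta(\mathcal{A}_r) \le n^{-\Omega(\log n)}$ w.h.p., uniformly in $\ham$ and $\beta$. Writing everything out,
\[
\frac{\pi_\beta(B)}{\pi_\beta(\mathcal{A}_r)} = \frac{\sum_{q} W_{q,r}\, e^{\beta \ham_q}}{\sum_{q}\sum_{s=0}^r W_{q,s}\, e^{\beta \ham_q}} \le \frac{\sum_q W_{q,r}\, e^{\beta \ham_q}}{\sum_q W_{q,0}\, e^{\beta \ham_q}}.
\]
The key point is that this bound is term-by-term in $q$: it suffices to show $W_{q,r} \le n^{-\Omega(\log n)} W_{q,0}$ for every $q$ simultaneously (w.h.p.), since then the same factor pulls out of every term of the numerator regardless of the weights $e^{\beta \ham_q}$. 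I would split on the size $q$. For $q \ge (1+\sqrt{(1-\gamma')^2 + 2\alpha\gamma'} + \eps)\log n$, property $\pupp(\eps)$ of \cref{lem:rg_property1} gives $W_{q,r} = 0$ outright (here one checks that $\gamma' \le 1-\alpha$, which follows from $\gamma' < 2(1-\alpha)$ only if $\alpha \ge 1/2$; to be safe one should take $\gamma' < 1-\alpha$, again harmless). For smaller $q$ — in particular $q \le (2-\eps)\log n$ for a suitable $\eps$ — use $\pupp(\eps)$ to get $W_{q,r} \le n^3\, \E[W_{q,r}]$ and $\plow(\eps)$ to get $W_{q,0} \ge \frac12 \E[W_{q,0}]$, and then invoke part (1) of \cref{basic_rg}: the ratio of expectations is $\exp[(\ln 2)(\log n)^2(-(1-\alpha)\gamma' + (\gamma')^2/2 + o(1))]$, and since $0 < \gamma' < 2(1-\alpha)$ the exponent $-(1-\alpha)\gamma' + (\gamma')^2/2 = \gamma'(\gamma'/2 - (1-\alpha))$ is a strictly negative constant times $(\log n)^2$. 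The $n^3$ and $\frac12$ factors are absorbed. Hence $W_{q,r}/W_{q,0} \le n^{-c''\log n}$ for an absolute constant $c'' = c''(\alpha,\gamma') > 0$, uniformly over all relevant $q$. Taking $c = c(\alpha,\gamma)$ a suitably small constant and $t = n^{c\log n}$ then makes $\frac{\pi_\beta(B)}{\pi_\beta(\mathcal{A}_r)}\, t \le n^{-c\log n}$, which is the claimed failure probability bound.

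A few bookkeeping points round out the proof. One must check that $\XX_{*,\le r}$ contains the desired initialization in a meaningful sense — the empty clique lies in it, so the "bad" starting state produced by \cref{lem:conductance} (which is just \emph{some} $x \in \mathcal{A}_r$) is consistent with the claim; and any state from which one cannot reach intersection $\gamma\log n$ in $n^{c\log n}$ steps certainly fails to recover $\PC$, giving the final "in particular" sentence. The degenerate case flagged after \cref{alg:metropolis} ($\beta = 0$ with $G$ complete) does not arise for $G \sim \GG(n,1/2,k)$ w.h.p. The main obstacle, such as it is, is purely organizational rather than conceptual: one needs to be careful that the choice of the constant $\eps$ in the invocations of $\pupp$/$\plow$ and the choice of $\gamma'$ are made in the right order (first fix $\gamma' < \min\{\gamma, 1-\alpha\}$, then choose $\eps$ small enough that $(1+\sqrt{(1-\gamma')^2+2\alpha\gamma'}+\eps)\log n$ and $(2-\eps)\log n$ together cover all $q$, and also that $\E[W_{q,r}]$ really decays for all $q$ below the $\pupp$ cutoff), so that the term-by-term bound is valid for \emph{every} $q \in [n]$ with no gap. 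No second-moment computation is needed here beyond what is already packaged into \cref{basic_rg} and \cref{lem:rg_property1}; the whole argument is a first-moment bottleneck, which is precisely why it is robust to arbitrary $\ham$ and arbitrary $\beta$.
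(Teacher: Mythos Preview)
Your proposal is correct and follows essentially the same approach as the paper: the identical bottleneck set $\mathcal{A}_r = \XX_{*,\le r}$ with boundary $\XX_{*,r}$, the same reduction to $\gamma' \le 1-\alpha$, the same choice of $\eps$ so that the $\pupp$ cutoff $1+\sqrt{(1-\gamma')^2+2\alpha\gamma'}+\eps$ meets the $\plow$ range $2-\eps$, and the same ratio-of-expectations computation from \cref{basic_rg}. The only cosmetic difference is that you bound $W_{q,r}/W_{q,0}$ term-by-term in $q$, whereas the paper extracts a single $q^* = \argmax_{q \le \bar{q}} \E[W_{q,r}]e^{\beta h_q}$ and compares numerator and denominator at that one index; both routes yield the same $\exp[(\ln 2)(\log n)^2(-(1-\alpha)\gamma' + (\gamma')^2/2 + o(1))]$ bound and are equivalent.
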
 


\begin{proof}
Notice that we can assume without loss of generality that the constant $\gamma$ satisfies $0<\gamma \le 1-\alpha$. 
We pick 
\begin{equation}\label{eq:def-eps-property}
\eps = \frac{1}{2} \left( 1 - \sqrt{(1-\gamma)^2+2\alpha\gamma} \right).
\end{equation}
Note that $\eps > 0$ since $0<\gamma \le 1-\alpha$.
Then, by \cref{lem:rg_property1} we know that the random graph $\GG(n,\frac{1}{2},\floor{n^\alpha})$ satisfies both properties $\pupp(\eps)$ and $\plow(\eps)$ simultaneously with probability $1-o(1)$ as $n\to \infty$. 
In the rest of the proof we assume that both $\pupp(\eps)$ and $\plow(\eps)$ hold.

For any $\gamma \in (0,1-\alpha]$, let $r=\floor{\gamma\log n}$. It suffices to show that there exists a constant $c=c(\alpha,\gamma)>0$ such that
\begin{align}\label{eq:goal_cond}
  \frac{ \pi_{\beta}(\XX_{*,r}) }{ \pi_{\beta}(\XX_{*,\le r}) } 
  = \frac{Z_{*,r}}{Z_{*,\le r}}
  \le \exp\left(-c \log^2 n \right).
\end{align}
Indeed, given \cref{eq:goal_cond}, \cref{main_result_mixing} is an immediate consequence of \cref{lem:conductance}.

By the property $\pupp(\eps)$ we have that $W_{q,r} = 0$ for all $q > \bar{q}$ where
\[
\bar{q} = \floor{1+\sqrt{(1-\gamma)^2+2\alpha\gamma}+\eps} \log n
= \floor{2-\eps} \log n.
\]
Hence, we get from the definition of the restricted partition function $Z_{*,r}$ that
\[
Z_{*,r}
= \sum_{q = r}^n Z_{q,r}
= \sum_{q = r}^n W_{q,r} \exp\left( \beta\ham_q \right)
= \sum_{q = r}^{\bar{q}} W_{q,r} \exp\left( \beta\ham_q \right)
\le n^3 \sum_{q = r}^{\bar{q}} \E\left[ W_{q,r} \right] \exp\left( \beta\ham_q \right),
\]
where the last inequality again follows from $\pupp(\eps)$.
We define
\[
q^* = \argmax_{q \in [\bar{q}]}\, \E\left[ W_{q,r} \right] \exp\left( \beta\ham_q \right).
\] 
Thus, we have that
\begin{equation}\label{eq:Z*r}
Z_{*, r} \le n^4 \,\E\left[ W_{q^*,r} \right] \exp\left( \beta\ham_{q^*} \right).
\end{equation}

Meanwhile, we have 
\[
Z_{*,\le r} \ge Z_{q^*,0} = W_{q^*,0} \exp\left( \beta\ham_{q^*} \right).
\]
We deduce from $q^* \le \bar{q} \le (2-\eps)\log n$ and the property $\plow(\eps)$ that 
\begin{equation}\label{eq:Z*ler}
Z_{*,\le r} \ge \frac{1}{2} \,\E\left[ W_{q^*,0} \right] \exp\left( \beta\ham_{q^*} \right).
\end{equation}

Combining \cref{eq:Z*r,eq:Z*ler}, we get that,
\[
\frac{Z_{*,r}}{Z_{*,\le r}} \le 2n^4 \,\frac{\E\left[ W_{q^*,r} \right]}{\E\left[ W_{q^*,0} \right]}.
\]
Suppose that $\rho = q^*/\log n$.
Then, an application of \cref{item:X_qr-expectation} of \cref{basic_rg} implies that,
\begin{align*}
\frac{Z_{*,r}}{Z_{*,\le r}} &\le 
2n^4 \,\frac{\exp\left[ (\ln2) (\log n)^2 \left( \rho - \frac{\rho^2}{2} - (1-\alpha)\gamma + \frac{\gamma^2}{2} + o(1) \right) \right]}{\exp\left[ (\ln2) (\log n)^2 \left( \rho - \frac{\rho^2}{2} + o(1) \right) \right]} \\
&\le \exp\left[ (\ln2) (\log n)^2 \left( - (1-\alpha)\gamma + \frac{\gamma^2}{2} + o(1) \right) \right].
\end{align*}This establishes \eqref{eq:goal_cond} for $c=c(\alpha,\gamma):= (1-\alpha)\gamma - \frac{\gamma^2}{2}>0$ since $0 < \gamma \le 1-\alpha$, as we wanted.
\end{proof}

\subsection{Starting from the Empty Clique}

In this section, we strengthen \cref{main_result_mixing} for a wide range of temperatures by showing that the Metropolis Dynamics still fails to obtain a significant intersection with the planted clique when starting from the empty clique (or any clique of sufficiently small size), a nature choice of initial configuration in practice.

\subsubsection{Our Result}

\begin{theorem}\label{thm:start-from-empty}
Let $\alpha \in (0,1)$ be any fixed constant.
For any constant $\gamma \in (0,1-\alpha)$, the random graph $\GG(n,\frac{1}{2},k = \floor{n^\alpha})$ with a planted clique satisfies the following with probability $1-o(1)$ as $n\to \infty$. 

Consider the general Gibbs measure given by \cref{eq:Gibbs-general} for arbitrary $1$-Lipschitz $\ham$ with $h_0 = 0$ and inverse temperature 
$\beta \le (\ln2) \gamma \log n$. 
Let $\{X_t\}$ denote the Metropolis process on $G$ with stationary distribution $\pi_\beta$. 
Then there exist constants $\rhozero = \rhozero(\alpha,\gamma)>0$ and $c = c(\alpha,\gamma)>0$ such that for any clique $C \in \XX$ of size at most $\rhozero\log n$, one has 
\[
\Pr\Big(\exists t \in \N \wedge t \le n^{c \log n} \text{~s.t.~} |X_t \cap \PC| \ge \gamma \log n \;\Big\vert\; X_0 = C \Big) \le  n^{-c \log n}.
\]
In particular, the Metropolis process starting from the empty clique requires $n^{\Omega(\log n)}$ steps to reach cliques of intersection with the planted clique at least $\gamma\log n$, with probability $1 - n^{-\Omega(\log n)}$. 
As a consequence, it fails to recover the planted clique in polynomial time.
\end{theorem}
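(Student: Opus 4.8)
The plan is to analyze the time-reversed Metropolis process and show that, starting from any clique $\sigma$ of size roughly $\gamma \log n$ that has large intersection with $\PC$, the chain is very unlikely to return to the empty clique (or any tiny clique) within $n^{c\log n}$ steps; reversibility then transfers this to the forward statement we want. First I would set up the reduction: since we only care about the first time the chain hits intersection $\ge \gamma \log n$, we may freeze the chain once it reaches such a clique, hence restrict the state space to cliques $C$ with $|C \cap \PC| \le \gamma \log n$ and (by a separate, easy first-moment argument as in \cref{item:X_qr=0} of \cref{basic_rg}) note that such cliques automatically have size $\le (2-\eps)\log n$ for some $\eps = \eps(\alpha,\gamma)>0$, and in fact the relevant ones have size $(1\pm o(1))\log n$. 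Using reversibility of the Metropolis chain with respect to $\pi_\beta$, for any target clique $\sigma$ of size $p$ we have $\Pr(X_t = \sigma \mid X_0 = C) = \frac{\pi_\beta(\sigma)}{\pi_\beta(C)}\Pr(X_t = C \mid X_0 = \sigma)$, and since $h$ is $1$-Lipschitz with $h_0=0$ and $\beta \le (\ln 2)\gamma \log n$, the prefactor $\pi_\beta(\sigma)/\pi_\beta(C) = \exp(\beta(h_p - h_{|C|}))$ is at most $\exp(\beta p) = n^{O(\gamma \log n)} = n^{o(\log n)}\cdot n^{O(\gamma\log n)}$; crucially this is $n^{O(\log n)}$ with a constant we control by shrinking $\gamma$.

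The core step is to prove the analogue of \cref{eq:ffni}: for every clique $\sigma$ of size $p \le (2-\eps)\log n$ and every $t \ge 1$,
\[
\Pr\left( X_t = C_0 \mid X_0 = \sigma \right) \le \frac{1}{\E[W_{p',*}]}, \qquad p' = \min\{p, (1-\eta)\log n\},
\]
for any fixed small $\eta>0$, where $C_0$ ranges over cliques of size $\le \rhozero\log n$. To get this I would invoke the expansion property of $\GG(n,1/2)$: with probability $1-o(1)$, every clique $C$ with $|C|\le (1-\eta)\log n$ has at least $c_0 \cdot n / 2^{|C|}$ common neighbors (this is the lower bound in \cref{def:expansion}/\cref{lem:satisfy-cond} alluded to in the excerpt), so that $\Pr(|X_t| = p+1 \mid |X_{t-1}| = p) \ge \frac{c_0}{2^{p}} \cdot \frac{n - O(\text{stuff})}{n} \gtrsim 2^{-p}$ while $\Pr(|X_t| = p-1 \mid |X_{t-1}| = p) = \frac{p}{n} e^{-\beta} \cdot (\text{indicator it's still a clique}) \le \frac{p}{n}$. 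I would then introduce the auxiliary birth–death chain $\{Y_t\}$ on $\{0,1,\dots,\lfloor(1-\eta)\log n\rfloor\}$ with up-probability $\frac{1}{20\cdot 2^p}$ and down-probability $\frac{p}{n}e^{-\beta}$ (reflecting at the top), couple it step-by-step so that $Y_t \le |X_t|$ as long as $|X_t| \le (1-\eta)\log n$ — here one must be slightly careful that once $|X_t|$ exceeds $(1-\eta)\log n$ the coupling is vacuous but that only helps, since we want an upper bound on the probability of going \emph{down} to $0$ — and conclude $\Pr(X_t = C_0 \mid X_0 = \sigma) \le \Pr(Y_t = 0 \mid Y_0 = p)$. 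The birth–death chain is explicitly computable: its stationary measure $\mu$ satisfies $\mu(j)/\mu(j+1) = \frac{(j+1)e^{-\beta}/n}{1/(20\cdot 2^j)} = \frac{20(j+1)e^{-\beta}2^j}{n}$, so $\mu(0)/\mu(\ell) = \prod_{j<\ell}(\cdots) = n^{-\ell}\cdot 2^{\binom{\ell}{2}}\cdot(\text{poly})\cdot e^{-\beta\ell}$, which for $\ell$ up to $(1-\eta)\log n$ is $\le 1/\E[W_{\ell,*}]$ up to the $e^{-\beta\ell} = n^{o(\log n)}$-type slack; a reversibility/return-probability bound for the one-dimensional chain (e.g. $\Pr(Y_t = 0 \mid Y_0 = p) \le \mu(0)/\mu(p)$ times a polynomial, using that $Y$ is reversible and $0$ is a single state) then yields the claimed inequality, absorbing polynomial factors into the $n^{-\Omega(\log n)}$ bound.

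Finally I would assemble the pieces: sum over all cliques $\sigma \in \XX_{p,s}$ with $s = \lfloor\gamma\log n\rfloor$ and $p$ ranging over $[s, (2-\eps)\log n]$, using $|\XX_{p,s}| = W_{p,s} \le n^3\E[W_{p,s}]$ from $\pupp$, to get
\[
\Pr\big(\exists t \le n^{c\log n}: |X_t \cap \PC| \ge \gamma\log n \mid X_0 = C\big) \le n^{c\log n}\sum_{p} e^{\beta p}\, \frac{n^3 \E[W_{p,s}]}{\E[W_{p',*}]},
\]
and plug in the first-moment estimate from \cref{item:X_qr-expectation} of \cref{basic_rg}, namely $\E[W_{p,s}] \approx \exp[(\ln 2)(\log n)^2(\rho - \rho^2/2 - (1-\alpha)\gamma + \gamma^2/2)]$ versus $\E[W_{p',*}] \approx \exp[(\ln 2)(\log n)^2(\rho' - (\rho')^2/2)]$; for $\rho \le 2-\eps$ and $\gamma$ small the intersection penalty $-(1-\alpha)\gamma + \gamma^2/2 < 0$ dominates all the $n^{O(\log n)}$ and $e^{\beta p} = n^{O(\gamma\log n)}$ slack once we take $\rhozero, c, \gamma$ small enough, giving the overall bound $n^{-\Omega(\log n)}$. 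I expect the main obstacle to be the coupling argument for the auxiliary chain: one has to handle carefully (i) the region near the reflecting barrier at $(1-\eta)\log n$ where $|X_t|$ may leave the expansion-good regime, and (ii) the fact that the Metropolis down-probability depends on whether $X_{t-1}\setminus\{v\}$ remains a clique and on the precise Hamiltonian increments $h_{p}-h_{p-1}$, which the $1$-Lipschitz assumption controls only up to a factor $e^{\pm\beta}$ — so the constants $20$ and $c_0$ have to be chosen to leave room for this, and the monotone coupling must be verified transition-by-transition.
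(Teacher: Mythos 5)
Your proposal follows essentially the same route as the paper: reduce via reversibility to the backward chain, use the expansion property to dominate $|X_t|$ from below by an explicit birth--death chain whose up-probability vanishes above $(1-\eta)\log n$, bound its return probability to $0$ by $\nu(0)/\nu(p)$ using reversibility of the auxiliary chain, and finish with a first-moment union bound over $t$, $p$ and the target cliques. Two small inaccuracies that do not affect the plan: removing a vertex from a clique always yields a clique (so no indicator is needed in the down-probability), and the paper establishes the result for every $\gamma \in (0,1-\alpha)$ rather than only small $\gamma$, via a careful choice of $\eta = \eta(\alpha,\gamma)$ and the starting-size cutoff together with a case split on whether $\rho \le 1-\eta$ or $\rho > 1-\eta$.
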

 We proceed with the proof of the theorem.

\subsubsection{Key Lemmas}
We start with tracking how the size of the clique changes during the process.
It is also helpful to consider the reversed process: show that it is unlikely to hit the empty clique $\emptyset$ when starting from some clique of size $\log n$.

\medskip


\begin{definition}
For a graph $G=(V,E)$ and a subset $U \subseteq V$ of vertices, we say a vertex $v \in V \setminus U$ is \emph{fully adjacent} to $U$ if $v$ is adjacent to all vertices in $U$; equivalently, $U \subseteq N(v)$ where $N(v)$ denotes the set of neighboring vertices of $v$ in the graph $G$. 
Let $A(U)$ denote the set of all vertices in $V \setminus U$ that are fully adjacent to $U$; equivalently, $A(U)$ is the set of all common neighbors in $V \setminus U$ of all vertices from $U$.
\end{definition}


\begin{definition}[Expansion Property $\pexp$]
\label{def:expansion}
Let $\eta \in (0,1)$ be an arbitrary constant.
We say the random graph $\GG(n,\frac{1}{2},k = \floor{n^\alpha})$ with a planted clique satisfies the expansion property $\pexp(\eta)$ if the following is true:
For every $U \subseteq V$ with $|U| \le (1-\eta) \log n$, it holds 
\[
|A(U)| \ge \frac{n}{20 \cdot 2^{|U|}}.
\] 
\end{definition}

The following lemma establishes the desired ``expansion lemma" on the cliques of size less than $\log n$ in $G(n,\frac{1}{2},k).$

\begin{lemma}[``Expansion Lemma"]
\label{lem:satisfy-cond}
For any constant $\alpha \in (0,1)$ and any constant $\eta \in (0,1)$, the random graph $\GG(n,\frac{1}{2},k = \floor{n^\alpha})$ with a planted clique satisfies the expansion property $\pexp(\eta)$ with probability $1-o(1)$ as $n\to \infty$. 
\end{lemma}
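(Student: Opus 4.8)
The plan is to prove \cref{lem:satisfy-cond} by a union bound over all vertex subsets $U$ of size at most $(1-\eta)\log n$, using a concentration estimate for $|A(U)|$. Fix such a $U$ with $|U| = u \le (1-\eta)\log n$. For a vertex $v \in V \setminus U$, the event that $v$ is fully adjacent to $U$ depends only on the $u$ edges between $v$ and $U$. In the pure Erd\H{o}s--R\'enyi graph $\GG(n,\frac12)$ each such event has probability $2^{-u}$ and these events are mutually independent over $v \in V\setminus U$; in the planted model $\GG(n,\frac12,k)$, conditioning on the location of $\PC$ only \emph{increases} each such probability (vertices in $\PC$ are automatically adjacent to any $U \subseteq \PC$, and otherwise the planting adds edges), so $|A(U)|$ stochastically dominates a $\mathrm{Binomial}(n-u, 2^{-u})$ random variable. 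Hence it suffices to lower-bound this binomial.

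The key quantitative step: let $N = n - u$ and $p = 2^{-u}$, so $\E[|A(U)|] \ge Np \ge (n - (1-\eta)\log n) 2^{-(1-\eta)\log n} = (1-o(1)) n^{\eta}$, which is polynomially large in $n$. By the multiplicative Chernoff bound, $\Pr\big(|A(U)| < \frac12 Np\big) \le \exp(-Np/8) \le \exp(-\Omega(n^{\eta}))$. Since $\frac12 Np \ge \frac{n}{20 \cdot 2^u}$ for $n$ large (as $\frac12(1 - \frac{u}{n}) \ge \frac{1}{20}$ comfortably), this failure probability is super-exponentially small. Now take the union bound over all $U$ with $|U|\le (1-\eta)\log n$: there are at most $\sum_{u \le (1-\eta)\log n}\binom{n}{u} \le (1-\eta)(\log n)\cdot n^{(1-\eta)\log n} = n^{O(\log n)} = \exp(O(\log^2 n))$ such sets. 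Multiplying, the total failure probability is at most $\exp(O(\log^2 n)) \cdot \exp(-\Omega(n^{\eta})) = o(1)$, which proves $\pexp(\eta)$ holds w.h.p.

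The only mild subtlety — and the step I would be most careful about — is handling the planted clique correctly in the stochastic domination claim, since $|A(U)|$ is \emph{not} simply binomial in the planted model: if $U$ intersects $\PC$ then every other vertex of $\PC$ lies in $A(U)$ deterministically, and for $v\notin\PC$ the relevant edges may be a mix of ``free'' $\GG(n,\frac12)$ edges and forced clique edges. The clean way to phrase it is: reveal the planted set $S$ first; then condition on $S$, for each $v\in V\setminus(U\cup S)$ the indicator $\one[U\subseteq N(v)]$ is a product of at most $u$ independent fair coins (those $\{v,w\}$ with $w\in U\setminus S$), hence is $1$ with probability $\ge 2^{-u}$ independently across such $v$; and the at least $n - u - k$ such vertices already give a binomial lower bound of the same order since $k = n^\alpha = o(n^{\eta})$ is not true in general — so one must instead note $n - u - k \ge n/2$ only when $\alpha < 1$, which holds, giving $\E \ge \frac{n}{2}\cdot 2^{-u} \ge \frac{n}{2 \cdot 2^u}$, still comfortably above $\frac{n}{20\cdot 2^u}$ with exponentially small deviation probability. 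With that accounting done, the union bound goes through verbatim. Everything else is routine: the Chernoff bound and the counting estimate $\binom{n}{\le (1-\eta)\log n} = n^{O(\log n)}$.
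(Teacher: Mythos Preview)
Your proof is correct and follows essentially the same approach as the paper: a Chernoff/Hoeffding concentration bound for $|A(U)|$ combined with a union bound over all $U$ of size at most $(1-\eta)\log n$. The paper handles the planted clique with a one-line monotonicity observation---since $\GG(n,\tfrac12,k)$ can be coupled with $\GG(n,\tfrac12)$ by only \emph{adding} edges, $|A(U)|$ can only increase, so it suffices to prove the result for $k=0$---whereas you unpack the stochastic domination explicitly by conditioning on the planted set; these are the same idea, and your more detailed accounting (restricting to $v\notin U\cup S$ and using $n-u-k\ge n/2$) is fine.
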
The proof of the lemma is deferred to the Appendix.


The \cref{lem:satisfy-cond} is quite useful for us, as it allows us to obtain the following bounds on the size transitions for the Metropolis process:
\begin{align}
\Pr\left(|X_t| = q-1 \mid |X_{t-1}| = q \right) &= \frac{q}{n} \min \left\{ 1, \exp\left[ \beta \left( \ham_{q-1} - \ham_q \right) \right] \right\} \label{eq:size-bound}\\
\Pr\left(|X_t| = q+1 \mid |X_{t-1}| = q \right) &\ge \frac{1}{20 \cdot 2^q} \min \left\{ 1, \exp\left[ \beta \left( \ham_{q+1} - \ham_q \right) \right] \right\}. \label{eq:size-bound2}
\end{align}


The proof is also making use of the following delicate lemma.

\begin{lemma}\label{lem:reaching-prob}
Consider the random graph $\GG(n,\frac{1}{2},k = \floor{n^\alpha})$ with a planted clique conditional on satisfying the property $\pupp(0.1)$ and the expansion property $\pexp(\eta)$ for some fixed constant $\eta \in (0,1)$. 
Let $t,p,q,r \in \N_+$ be integers with $p\le r \leq q$. Denote also $\rhozero= \qzero/\log n,$ $\rho=q/\log n$ and $\gamma=r/\log n.$
For any $C \in \Omega_{\qzero,*}$ we have
\begin{multline*}
 \Pr\left( X_t \in \XX_{q,r} \mid X_0 = C \right) 
\le \\
 t\exp\left[ (\ln 2) (\log n)^2 \left( \left( (1+\hat{\beta})\rho - \frac{\rho^2}{2} \right) - \left( (1+\hat{\beta})\rho' - \frac{(\rho')^2}{2} \right) - \left( (1-\alpha)\gamma - \frac{\gamma^2}{2} - \rhozero + \frac{\rhozero^2}{2} \right) + o(1) \right) \right], 
\end{multline*}
where $\rho' = \min\{\rho, 1-\eta\}$ and $\hat{\beta}=\beta/((\ln 2) (\log n)).$
\end{lemma}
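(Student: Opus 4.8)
The plan is to bound $\Pr(X_t \in \XX_{q,r} \mid X_0 = C)$ via a union bound over the target cliques combined with a time-reversal identity, reducing everything to a hitting-probability estimate for a simple one-dimensional birth-and-death process. First I would write, using reversibility of the Metropolis process with respect to $\pi_\beta$,
\[
\Pr(X_t \in \XX_{q,r}\mid X_0 = C)
= \sum_{\sigma \in \XX_{q,r}} \Pr(X_t = \sigma \mid X_0 = C)
= \sum_{\sigma \in \XX_{q,r}} \frac{\pi_\beta(\sigma)}{\pi_\beta(C)}\, \Pr(X_t = C \mid X_0 = \sigma).
\]
Since $h$ is $1$-Lipschitz with $h_0 = 0$, we have $\pi_\beta(\sigma)/\pi_\beta(C) = \exp(\beta(h_q - h_p)) \le \exp(\beta(q+p)) = n^{O(\hat\beta \log n)}$, which is the source of the $(1+\hat\beta)$ terms in the exponent; I would record this crudely as $\exp[\beta(q+p)]$ and fold it in at the end. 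So it remains to bound $\sum_{\sigma \in \XX_{q,r}} \Pr(X_t = C \mid X_0 = \sigma)$, and since each summand is at most $\Pr(X_t = C \mid X_0 = \sigma) \le \Pr(|X_{t}| = p \text{ for some step} \mid X_0 = \sigma)$ — actually more carefully $\Pr(X_t = C\mid X_0=\sigma)\le \Pr(X_s = \emptyset\text{ or reaches size }\le p\text{ at some }s\le t)$ — the key point is to bound the probability that a chain started at a size-$q$ clique ever drops down to size $p'$ (or $p$) within $t$ steps.

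The heart of the argument is the expansion property $\pexp(\eta)$ from \cref{lem:satisfy-cond}, which gives the transition bounds \eqref{eq:size-bound}–\eqref{eq:size-bound2}. I would introduce an auxiliary birth-and-death chain $\{Y_t\}$ on $\{0,1,\dots,\lfloor(1-\eta)\log n\rfloor\}$ with down-probability $\frac{p}{n}e^{-\beta}$ (at least, using $1$-Lipschitzness so $h_{p-1}-h_p\ge -1$ costs only a constant factor) and up-probability $\frac{1}{20\cdot 2^p}$ — a chain that is stochastically dominated by $\{|X_t|\}$ as long as the latter stays below $(1-\eta)\log n$ — and couple step-by-step so that $Y_t \le |X_t|$ always. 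Then the probability $\{|X_t|\}$ reaches a low level $\le p'$ is bounded by the corresponding hitting probability for $\{Y_t\}$. The latter I would estimate by a standard stationary-measure / reversibility computation: the birth-and-death chain $\{Y_t\}$ has a reversible measure $\mu$ with $\mu(j)/\mu(j-1) = \Theta(n/(j\cdot 2^j))$, so $\mu(j) \approx \binom{?}{}$-type expression; more directly, the probability of ever descending from level $q$ to level $p'$ started at $q$ is at most $\sum_{\text{paths}} \le $ (up to polynomial factors in $t$) the ratio $\mu(p')/\mu(q)$, which telescopes to $\prod_{j=p'+1}^{q} \Theta\!\big(\frac{j\cdot 2^j}{n}\big) = n^{-\Theta((\log n)^2)}$ with exponent $\big((1+\hat\beta)\rho - \rho^2/2\big) - \big((1+\hat\beta)\rho' - (\rho')^2/2\big)$ after collecting the $\sum_{j} j \approx (\log n)^2 \cdot(\rho^2-(\rho')^2)/2$ and $\sum_j \log n$ and the $e^{-\beta}$ contributions. (The $\rho' = \min\{\rho,1-\eta\}$ cutoff is exactly because expansion only holds below $(1-\eta)\log n$, so we only get descent suppression down to that level, and if $\rho\le 1-\eta$ already then $\rho'=\rho$ and this factor is trivial.)

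Finally I would assemble the pieces: $|\XX_{q,r}| = W_{q,r} \le n^3\,\E[W_{q,r}]$ by $\pupp(0.1)$, and $\E[W_{q,r}] = \exp[(\ln 2)(\log n)^2(\rho - \rho^2/2 - (1-\alpha)\gamma + \gamma^2/2 + o(1))]$ by \cref{basic_rg}\eqref{item:X_qr-expectation}; multiply by the $\exp[\beta(q+p)] = \exp[(\ln 2)(\log n)^2(\hat\beta\rho + \hat\beta\rhozero)]$ factor from the measure ratio and by the $n^{-\Theta((\log n)^2)}$ descent-probability bound and the factor $t$ from the union bound over time steps, and verify that the exponents combine precisely into the claimed expression, where the $+\rhozero - \rhozero^2/2$ appears from the $\E[W_{p,*}]$-type normalization implicit in the descent bound starting at size $p = \rhozero\log n$ (i.e. the contribution of paths near the bottom). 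I expect the main obstacle — the part requiring genuine care rather than bookkeeping — to be the coupling argument showing $Y_t \le |X_t|$ remains valid: one has to handle the fact that expansion only controls $|X_t|$ while it is below $(1-\eta)\log n$, so the domination must be run on a stopped/truncated version of the chains, and one must argue that excursions of $|X_t|$ above $(1-\eta)\log n$ before returning to level $p'$ do not break the bound (they only make it harder to come down, so this should be fine, but it needs to be said carefully). The second most delicate point is making sure the $o(1)$ error terms from $\E[W_{q,r}]$ and from the telescoping product of transition ratios are genuinely lower-order, i.e. $o((\log n)^2)$ in the exponent, which follows since $q,p,r$ are all $\Theta(\log n)$ and there are only $O(\log n)$ factors in each product.
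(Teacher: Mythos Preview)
Your plan follows essentially the same route as the paper: reversibility to exchange start and target, a birth--and--death chain $\{Y_t\}$ stochastically dominated by $\{|X_t|\}$ via the expansion property $\pexp(\eta)$, truncation of $Y$ at level $\lfloor(1-\eta)\log n\rfloor$, and then the stationary-ratio bound $\Pr(Y_{t'}=p\mid Y_0=q)\le \nu(p)/\nu(q')$ (with $q'=\min\{q,\lfloor(1-\eta)\log n\rfloor\}$), summed over $t'\le t$ to produce the factor $t$. The use of $W_{q,r}\le n^3\,\E[W_{q,r}]$ and the identification of the $\rhozero-\rhozero^2/2$ term as coming from the $\nu(p)$ side of the ratio are both exactly as in the paper.

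There is, however, one technical point where your simplification breaks down for general $1$-Lipschitz $h$, and it matters for getting the exact exponent. You define $Y$ with down-probability $\tfrac{s}{n}e^{-\beta}$ and up-probability $\tfrac{1}{20\cdot 2^s}$. For the coupling $Y_t\le |X_t|$ at a common level $s$ you need $P_Y(s,s-1)\ge P_{|X|}(s,s-1)$ and $P_Y(s,s+1)\le P_{|X|}(s,s+1)$. But $P_{|X|}(s,s-1)=\tfrac{s}{n}\min\{e^{\beta(h_{s-1}-h_s)},1\}$, and $1$-Lipschitzness only gives $h_{s-1}-h_s\ge -1$, hence $e^{\beta(h_{s-1}-h_s)}\ge e^{-\beta}$; so in fact $P_{|X|}(s,s-1)\ge \tfrac{s}{n}e^{-\beta}=P_Y(s,s-1)$, the wrong direction (with equality only for the identity Hamiltonian $h_q=q$). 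The up-probability comparison can likewise fail if $h$ is locally decreasing. The paper fixes this by \emph{keeping the $h$-dependence in $Y$}: it sets $P_Y(s,s-1)=\tfrac{s}{n}\min\{e^{\beta(h_{s-1}-h_s)},1\}$ exactly and $P_Y(s,s+1)=\tfrac{1}{20\cdot 2^s}\min\{e^{\beta(h_{s+1}-h_s)},1\}$. Then the ratio $P_Y(s,s-1)/P_Y(s-1,s)$ equals $\tfrac{20\,s\,2^{s-1}}{n}\,e^{\beta(h_{s-1}-h_s)}$ regardless of the sign of $h_{s-1}-h_s$, and the telescoping product for $\nu(p)/\nu(q')$ carries an exact factor $e^{\beta(h_p-h_{q'})}$. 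Combined with the reversibility factor $e^{\beta(h_q-h_p)}$ this collapses to $e^{\beta(h_q-h_{q'})}$, and only at this final step is $1$-Lipschitzness invoked to bound it by $e^{\beta(q-q')}$, yielding precisely the $\hat\beta(\rho-\rho')$ contribution. Your crude bound $e^{\beta(q+p)}$ is both looser than necessary (Lipschitz gives $e^{\beta(q-p)}$ directly) and, combined with any simplified $Y$, would leave a spurious $\hat\beta\rhozero$-type term rather than the clean $\hat\beta(\rho-\rho')$. So your diagnosis that the coupling is the delicate step is right, but the resolution is to carry $h$ through the auxiliary chain rather than to simplify it away early.
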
 
We postpone the proof of \cref{lem:reaching-prob} to \cref{subsec:reach-prob-proof} and first show how it can be used to prove \cref{thm:start-from-empty}.
On a high level, we bound the probability of hitting $\XX_{q,r}$ by studying how the size of the clique evolves during the process up to size $(1-\eta)\log n$. 
The term $( (1+\hat{\beta})\rho - \rho^2/2 ) - ( (1+\hat{\beta})\rho' - (\rho')^2/2 )$ represents the approximation error when the clique goes beyond size $(1-\eta)\log n$; in particular, when the destination clique size $q = \rho \log n$ is at most $(1-\eta) \log n$, we have $\rho' = \rho$ and this error is zero. 
Meanwhile, the term $(1-\alpha)\gamma - \gamma^2/2 - \rhozero + \rhozero^2/2$ corresponds to the fact that the number of cliques of size $q$ and intersection $r = \gamma \log n$ with the planted clique is much smaller than the total number of cliques of size $q$, and hence reaching intersection $r$ is very unlikely.

\subsubsection{Proof of \cref{thm:start-from-empty}, given \cref{lem:reaching-prob}}
\begin{proof}[Proof of \cref{thm:start-from-empty}]

Let $\hat{\beta}=\beta/ ((\ln2) (\log n))$ and recall $\hat{\beta} \le \gamma < 1-\alpha.$
As will be clear later, we shall choose
\[
\xi = \frac{1}{4}(1-\alpha-\gamma)\gamma
\eqand
\eta = \min\left\{ \frac{1}{8}(1-\alpha-\gamma), \gamma \right\}.
\]

By \cref{lem:rg_property1,lem:satisfy-cond}, the random graph $\GG(n,\frac{1}{2},\floor{n^\alpha})$ satisfies both $\pupp(0.1)$ and $\pexp(\eta)$ with probability $1-o(1)$ as $n\to \infty$, for the choice of $\eta$ given above. 
In the rest of the proof we assume that both $\pupp(0.1)$ and $\pexp(\eta)$ are satisfied.

Suppose that $|C| = \qzero =\rhozero' \log n \leq \rhozero \log n$. 
By the union bound we have
\begin{align}
\Pr\Big( \exists t \in \N \wedge t \le T:\, X_t \in \XX_{*,r} \;\Big\vert\; X_0 = C \Big) \nonumber
&\le 
\sum_{t = 1}^T \sum_{q = r}^n \Pr\left( X_t \in \XX_{q,r} \mid X_0 = C \right) \nonumber\\
&\le
Tn \,\max_{t \in [T]} \,\max_{q \in [n]} \,\Pr\left( X_t \in \XX_{q,r} \mid X_0 = C \right) \label{eq:final_ub}.
\end{align}

Now let us fix some $t \in [T], q \in [n].$ By \cref{lem:reaching-prob},
\begin{align*}
& \Pr\left( X_t \in \XX_{q,r} \mid X_0 = C \right) \\
\le{}& t \exp\left[ (\ln2) (\log n)^2 \left( \left( (1+\hat{\beta})\rho - \frac{\rho^2}{2} \right) - \left( (1+\hat{\beta})\rho' - \frac{(\rho')^2}{2} \right) - \left( (1-\alpha)\gamma - \frac{\gamma^2}{2} - \rhozero' + \frac{(\rhozero')^2}{2} \right) + o(1) \right) \right]\\
\le{}& t \exp\left[ (\ln2) (\log n)^2 \left( \left( (1+\hat{\beta})\rho - \frac{\rho^2}{2} \right) - \left( (1+\hat{\beta})\rho' - \frac{(\rho')^2}{2} \right) - \left( (1-\alpha)\gamma - \frac{\gamma^2}{2} - \rhozero + \frac{\rhozero^2}{2} \right) + o(1) \right) \right],
\end{align*}where we have used that $\rhozero'\leq \rhozero \leq 1.$
Write for shorthand
\[
A = \left( (1+\hat{\beta})\rho - \frac{\rho^2}{2} \right) - \left( (1+\hat{\beta})\rho' - \frac{(\rho')^2}{2} \right)
\quad\text{and}\quad
B = (1-\alpha)\gamma - \frac{\gamma^2}{2} - \rhozero + \frac{\rhozero^2}{2}.
\]
So we have
\begin{align}\label{eq:almost_final}
\Pr\left( X_t \in \XX_{q,r} \mid X_0 = C \right) \le t \exp\left[ (\ln2) (\log n)^2 \left( A - B + o(1) \right) \right].
\end{align}
Given \eqref{eq:almost_final} it suffices to show that for all $\alpha \in (0,1), \gamma \in (0,1-\alpha)$ there exists $c_0(\alpha,\gamma)>0$ such that uniformly for all values of interest of the parameters $\rho,\hat{\beta}$ we have   $A - B \leq -c_0(\alpha,\gamma)$. Indeed, then by \eqref{eq:final_ub} we have
\begin{align*}
\Pr\Big( \exists t \in \N \wedge t \le T:\, X_t \in \XX_{*,r} \;\Big\vert\; X_0 = C \Big) \leq T^2 n \exp\left[ -c_0(\alpha,\gamma)(\ln2) (\log n)^2  \right]
\end{align*} and \cref{thm:start-from-empty} follows e.g. for $c(\alpha,\gamma)=\frac{\ln 2 }{20}c_0(\alpha,\gamma).$

We now construct the desired function $c_0(\alpha,\gamma)>0.$
If $\rho \le 1-\eta$, then $\rho' = \rho$ and $A = 0$. 
Meanwhile, we have
\[
B = (1-\alpha)\gamma - \frac{\gamma^2}{2} - \rhozero + \frac{\rhozero^2}{2}
\ge (1-\alpha)\gamma - \frac{\gamma^2}{2} - \frac{1}{4}(1-\alpha-\gamma)\gamma 
\ge \frac{3}{4}(1-\alpha-\gamma)\gamma.
\]
So $A-B \le -\frac{3}{4}(1-\alpha-\gamma)\gamma$. 

If $\rho > 1-\eta$, then $\rho' = 1-\eta$ and we have
\begin{align*}
A &= \left( (1+\hat{\beta})\rho - \frac{\rho^2}{2} \right) - \left( (1+\hat{\beta})\rho' - \frac{(\rho')^2}{2} \right) \\
&= \hat{\beta}(\rho-1) - \frac{1}{2}(\rho-1)^2 + \hat{\beta}\eta + \frac{\eta^2}{2} \\
&\le \frac{\gamma^2}{2} + \frac{1}{4}(1-\alpha-\gamma)\gamma,
\end{align*}
where the last inequality follows from 
\[
\hat{\beta}(\rho-1) - \frac{1}{2}(\rho-1)^2
\le \frac{\hat{\beta}^2}{2} \le \frac{\gamma^2}{2}
\]
and also
\[
\hat{\beta}\eta + \frac{\eta^2}{2}
\le \gamma \cdot \frac{1}{8}(1-\alpha-\gamma) + \frac{\gamma}{2} \cdot \frac{1}{8}(1-\alpha-\gamma)
\le \frac{1}{4}(1-\alpha-\gamma)\gamma
\]
since $\eta \le (1-\alpha-\gamma)/8$ and $\eta \le \gamma$.
Meanwhile, 
we have
\[
B = (1-\alpha)\gamma - \frac{\gamma^2}{2} - \rhozero + \frac{\rhozero^2}{2}
\ge (1-\alpha)\gamma - \frac{\gamma^2}{2} - \frac{1}{4}(1-\alpha-\gamma)\gamma. 
\]
So, we deduce that
\[
A - B
\le \frac{\gamma^2}{2} + \frac{1}{4}(1-\alpha-\gamma)\gamma
- (1-\alpha)\gamma + \frac{\gamma^2}{2} + \frac{1}{4}(1-\alpha-\gamma)\gamma
= - \frac{1}{2}(1-\alpha-\gamma)\gamma.
\]


Hence, in all cases $A-B \leq -c_0(\alpha, \gamma)$ for 
$$
c_0(\alpha,\gamma) = \frac{1}{2}(1-\alpha-\gamma)\gamma.
$$ 
This completes the proof of the theorem.
\end{proof}

\subsubsection{Proof of \cref{lem:reaching-prob}} 
\label{subsec:reach-prob-proof}

In this subsection we prove the crucial \cref{lem:reaching-prob}.  
Throughout this subsection we assume that the property $\pupp(0.1)$ and the expansion property $\pexp(\eta)$ hold for some fixed constant $\eta \in (0,1)$. 
First, recall that $W_{q,r} = |\XX_{q,r}|$. 
We have from $\pupp(0.1)$ that
\begin{align}
\Pr\left( X_t \in \XX_{q,r} \mid X_0 = C \right)
&= \sum_{\sigma \in \XX_{q,r}} \Pr\left( X_t = \sigma \mid X_0 = C \right) \nonumber\\
&\le W_{q,r} \max_{\sigma \in \XX_{q,r}} \Pr\left( X_t = \sigma \mid X_0 = C \right) \nonumber\\
&\le n^3 \,\E[W_{q,r}] \max_{\sigma \in \XX_{q,r}} \Pr\left( X_t = \sigma \mid X_0 = C \right) \label{first_ub}.
\end{align} For now, we fix a $\sigma \in \XX_{q,r}$ and focus on bounding $\Pr\left( X_t = \sigma \mid X_0 = C \right).$ The key idea to bound this probability is to exploit the reversibility of the Metropolis process. The following two standard facts are going to be useful.

\begin{fact}[\cite{LP-book17}]
\label{fact:reversibility} 
If $P$ is the transition matrix of a \emph{reversible} Markov chain over a finite state space $\Gamma$ with stationary distribution $\mu$, then for all $x,y \in \Gamma$ and all integer $t \ge 1$ it holds
\[
\mu(x) P^t(x,y) = \mu(y) P^t(y,x).
\]
\end{fact}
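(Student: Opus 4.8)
\textbf{Proof proposal for Fact \ref{fact:reversibility}.}
The plan is to prove the identity $\mu(x)P^t(x,y) = \mu(y)P^t(y,x)$ by induction on $t \ge 1$, using only the defining reversibility (detailed balance) relation $\mu(x)P(x,y) = \mu(y)P(y,x)$, which is what ``reversible with stationary distribution $\mu$'' means, together with the Chapman--Kolmogorov identity $P^{t+1}(x,y) = \sum_{z \in \Gamma} P(x,z)\,P^t(z,y) = \sum_{z \in \Gamma} P^t(x,z)\,P(z,y)$.

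For the base case $t = 1$ the claim is literally the detailed balance equation, so there is nothing to do. For the inductive step, I would assume $\mu(u)P^t(u,v) = \mu(v)P^t(v,u)$ holds for all $u,v \in \Gamma$ and then compute, for fixed $x,y$,
\[
\mu(x)\,P^{t+1}(x,y) = \sum_{z \in \Gamma} \mu(x)\,P^t(x,z)\,P(z,y) = \sum_{z \in \Gamma} \mu(z)\,P^t(z,x)\,P(z,y),
\]
where the last step applies the inductive hypothesis to the pair $(x,z)$. Now apply detailed balance to the factor $\mu(z)P(z,y)$, rewriting it as $\mu(y)P(y,z)$, to get
\[
\mu(x)\,P^{t+1}(x,y) = \sum_{z \in \Gamma} \mu(y)\,P(y,z)\,P^t(z,x) = \mu(y) \sum_{z \in \Gamma} P(y,z)\,P^t(z,x) = \mu(y)\,P^{t+1}(y,x),
\]
using Chapman--Kolmogorov once more in the last equality. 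This closes the induction. (An alternative, non-inductive route I could take instead: expand $P^t(x,y)$ as a sum over length-$t$ paths $x = x_0, x_1, \dots, x_t = y$, multiply the weight of each path by $\mu(x_0)$, and telescope detailed balance along the path to see that $\mu(x_0)\prod_{i=1}^t P(x_{i-1},x_i) = \mu(x_t)\prod_{i=1}^t P(x_i,x_{i-1})$, i.e.\ the $\mu$-weighted forward weight of a path equals the $\mu$-weighted forward weight of its reversal; summing over the path-to-reversal bijection gives the claim.)

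Since both arguments are completely routine, there is no real obstacle; the only point requiring the slightest care is bookkeeping — making sure that in the inductive step one peels off the single transition step on the side adjacent to $y$ (so that detailed balance can be applied to that step) and applies the inductive hypothesis to the remaining $t$-step block on the side adjacent to $x$. Finiteness of $\Gamma$ and ergodicity of $P$ play no role here; only reversibility with respect to $\mu$ is used.
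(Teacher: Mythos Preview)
Your proof is correct and entirely standard. The paper does not give its own proof of this fact; it simply cites it from \cite{LP-book17} as a known result, so there is nothing to compare against.
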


\begin{fact}[\cite{LP-book17}]\label{fact:bd-stationary}
For a birth-death process on $[n]$ with transition probabilities
\[
P(i, i+1) = p_i,
\quad
P(i, i-1) = q_i,
\quad\text{and}\quad
P(i,i) = 1 - p_i - q_i,
\]
the stationary distribution is given by
\[
\mu(i) \propto \prod_{s = 1}^{i} \frac{p_{s-1}}{q_s}, \quad \forall i \in [n].
\]
\end{fact}

Now, notice that using the time-reversed dynamics it suffices try to bound the probability of reaching a small clique $C$ when starting from a large clique $\sigma$. Indeed, by reversibility we have
\begin{equation}\label{eq:reversible-eq}
\Pr\left( X_t = \sigma \mid X_0 = C \right)
= \exp\left( \beta \left( \ham_q - \ham_{\qzero} \right) \right) \Pr\left( X_t = C \mid X_0 = \sigma \right).
\end{equation}
which is an application of \cref{fact:reversibility}. 

We introduce a birth-death process on $[n]$ denoted by $\{Y_t\}$ with transition matrix $P$ given by the following transition probabilities:
\begin{align*}
P(\size,\size-1) &= \frac{\size}{n} \min \left\{\exp\left( \beta \left( \ham_{\size-1} - \ham_{\size} \right) \right), 1 \right\}, \quad 1 \le \size \le n; \\
P(\size,\size+1) &=
\begin{cases}
\frac{1}{20 \cdot 2^\size} \min \left\{\exp\left( \beta \left( \ham_{\size+1} - \ham_{\size} \right) \right), 1 \right\}, &\quad 0\le \size < \floor{(1-\eta) \log n}; \\
0, &\quad \floor{(1-\eta) \log n} \le \size \le n-1;
\end{cases}\\
P(\size,\size) &= 1 - P(\size,\size-1) - P(\size,\size+1), \quad 0 \le \size \le n.
\end{align*}
Denote the stationary distribution of $\{Y_t\}$ by $\nu$, which is supported on $\{0,1,\dots, \floor{(1-\eta) \log n}\}$. 
The process $\{Y_t\}$ serves as an approximation of $\{|X_t|\}$; note that $\{|X_t|\}$ itself is not a Markov process. 

The following lemma shows that $Y_t$ is stochastically dominated by $|X_t|$. The proof of this fact is essentially based on the expansion property $\pexp(\eta)$, and the derived in the proof below bounds on the size transition of $|X_t|$, described in \cref{eq:size-bound,eq:size-bound2}.

\begin{lemma}\label{lem:stochastic-dominance}
Let $\{X_t\}$ denote the Metropolis process starting from some $X_0 = \sigma \in \XX_{q,*}$.
Let $\{Y_t\}$ denote the birth-death process described above with parameter $\eta \in (0,1)$ starting from $Y_0 = q$.
Then there exists a coupling $\{(X_t,Y_t)\}$ of the two processes such that for all integer $t \ge 1$ it holds
\[
Y_t \le |X_t|.
\]
In particular, for all integer $t \ge 1$ it holds
\[
\Pr\left( X_t = C \mid X_0 = \sigma \right) \le \Pr\left( \exists t' \in \N \wedge t' \le t:\, Y_{t'} = \qzero \mid Y_0 = q \right)
\le \sum_{t'=1}^t \Pr\left( Y_{t'} = \qzero \mid Y_0 = q \right).  
\]


\end{lemma}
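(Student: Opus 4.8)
\textbf{Proof plan for \cref{lem:stochastic-dominance}.}

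The plan is to build the coupling step-by-step, maintaining the invariant $Y_t \le |X_t|$ at every time $t$. The natural approach is to couple using the same uniformly random vertex $v \in V$ at each step (for the Metropolis chain) and a synchronized source of randomness for the birth-death chain. First I would verify the two transition-probability bounds \eqref{eq:size-bound} and \eqref{eq:size-bound2}: the ``downward'' identity is exact, since from a clique $X_{t-1}$ of size $q$ the process attempts a removal precisely when $v \in X_{t-1}$ (probability $q/n$) and then accepts with probability $\min\{1, \exp(\beta(\ham_{q-1}-\ham_q))\}$; the ``upward'' bound uses the expansion property $\pexp(\eta)$, which guarantees that whenever $|X_{t-1}| = q \le (1-\eta)\log n$ there are at least $n/(20\cdot 2^q)$ vertices $v \notin X_{t-1}$ whose addition keeps it a clique, each added with acceptance probability $\min\{1, \exp(\beta(\ham_{q+1}-\ham_q))\}$.

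Next I would set up the coupling inductively. Assume $Y_{t-1} \le |X_{t-1}|$. The key case analysis is: (i) when $Y_{t-1} < |X_{t-1}|$, any single-step move changes each coordinate by at most $1$, so even in the worst case ($Y$ moves up, $X$ moves down) the invariant is preserved, and one can couple the two chains arbitrarily (e.g. independently); (ii) when $Y_{t-1} = |X_{t-1}| = q$, one couples the moves so that whenever $Y$ takes a downward step so does $X$, and whenever $X$ fails to take an upward step so does $Y$. Concretely, reserve a $\frac{q}{n}\min\{\exp(\beta(\ham_{q-1}-\ham_q)),1\}$-probability event on which both chains step down together (possible since $X$'s down-probability is exactly this and $Y$'s is at most this — in fact equal); reserve a $P(q,q+1)$-probability event (using $Y$'s smaller up-probability $\tfrac{1}{20\cdot 2^q}\min\{\dots,1\}$, which by \eqref{eq:size-bound2} is dominated by $X$'s up-probability) on which both chains step up together; and on the remaining probability let $Y$ stay put (possibly letting $X$ move up — which only helps the invariant). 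If $q \ge (1-\eta)\log n$ then $Y$ has zero up-probability and we simply need $Y$ to move down whenever $X$ does, which is immediate from the exact down-transition. The final ``in particular'' statement then follows: $Y_t \le |X_t|$ for all $t$ forces that if $X_{t} = C$ with $|C| = \qzero \le q$, then since $|X|$ changed by steps of size one and started at $q > \qzero$ (or $=\qzero$), $Y$ must have visited value $\qzero$ at some time $t' \le t$ — more carefully, $Y_{t} \le |X_t| = \qzero$, and since $Y_0 = q \ge \qzero$ and $Y$ moves in unit steps, $Y$ hits $\qzero$ by time $t$; union-bounding over $t'$ gives the stated sum.

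The main obstacle I anticipate is the careful bookkeeping in case (ii): one must check that the down-move and up-move ``reserved'' events for the two chains can be placed on disjoint portions of the probability space in a way consistent with each chain's own marginal transition law, which works precisely because $Y$'s up-probability is uniformly smaller than $X$'s (the factor $1/20$ versus the expansion lower bound) while their down-probabilities coincide exactly. A minor subtlety is the boundary behavior at size $\floor{(1-\eta)\log n}$ where $Y$'s up-transition shuts off; there the invariant can only be threatened by $X$ moving down while $Y$ stays, which is fine, or by $Y$ needing to move down — but $Y$'s down-transition is active for all $\size \ge 1$, so whenever $|X_t|$ drops we can force $Y_t$ to drop too when they are equal. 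One should also note that $X$ may leave $\XX_{q,*}$ and wander, but the coupling never uses the intersection coordinate, only $|X_t|$, so this causes no issue.
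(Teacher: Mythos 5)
Your overall strategy (step-by-step coupling maintaining $Y_t \le |X_t|$, with the extremal-case analysis when the two chains are equal) is the same as the paper's, and your treatment of the equal case and of the boundary at $\floor{(1-\eta)\log n}$ is correct. However, your case (i) contains a genuine error. You write that when $Y_{t-1} < |X_{t-1}|$, ``any single-step move changes each coordinate by at most $1$, so even in the worst case ($Y$ moves up, $X$ moves down) the invariant is preserved, and one can couple the two chains arbitrarily (e.g.\ independently).'' This is false when $Y_{t-1} = |X_{t-1}| - 1$: if $Y$ moves up and $X$ moves down simultaneously, then $Y_t = Y_{t-1} + 1 = |X_{t-1}|$ while $|X_t| = |X_{t-1}| - 1$, so $Y_t = |X_t| + 1 > |X_t|$ and the invariant is broken. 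Coupling independently in this regime does not work.

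The paper avoids this by applying the ``never let $Y$ go up while $|X|$ goes down'' coupling \emph{unconditionally}, not only when $Y_{t-1} = |X_{t-1}|$. Since $\Pr(Y_t = Y_{t-1}+1) \le 1/20 < 1/2$ and $\Pr(|X_t| = |X_{t-1}|-1) \le |X_{t-1}|/n < 1/2$, the two events can be supported on disjoint parts of the sample space, which ensures that $|X_t| - Y_t$ decreases by at most one in each step. With that fix, the only problematic case that remains is indeed the extremal one $Y_{t-1} = |X_{t-1}|$, which you handle correctly. So your proposal requires promoting the disjoint-event coupling from case (ii) to a global invariant of the construction; everything else, including the derivation of the ``in particular'' chain of inequalities via unit steps and a union bound, is sound.
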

\begin{proof}
We couple $\{|X_t|\}$ and $\{Y_t\}$ as follows. 
Suppose that $Y_{t-1}\le |X_{t-1}|$ for some integer $t \ge 1$.
We will construct a coupling of $X_t$ and $Y_t$ such that $Y_t \le |X_t|$. Notice that the following probability inequality is a straightforward corollary of that.
 
 Since the probability that $Y_t = Y_{t-1} + 1$ is less than $1/2$ and so does the probability of $|X_t| = |X_{t-1}| - 1$, we may couple $X_t$ and $Y_t$ such that $|X_t| - Y_t$ decreases at most one; namely, it never happens that $Y_t$ increases by $1$ while $X_t$ decreases in size. 
Thus, it suffices to consider the extremal case when $|X_{t-1}| = Y_{t-1} = \size$. We have
\begin{align}
\Pr\left( |X_t| = \size - 1 \mid |X_{t-1}| = \size \right)
&= \frac{\size}{n} \min\left\{ 1, \exp\left[ \beta \left( \ham_{\size-1} - \ham_{\size} \right) \right] \right\} = P\left( \size, \size-1 \right) \label{couple_down}
\end{align}

Meanwhile, recall that $A(X_{t-1})$ is the set of vertices $v$ such that $X_{t-1} \cup \{v\} \in \XX$. Then we have 
\[
|A(X_{t-1})| \ge \frac{n}{20\cdot 2^\size}
\]
whenever $\size \le n_\eta := \floor{(1-\eta)\log n}$ by the expansion property $\pexp(\eta)$. Hence, we deduce that
\begin{align}
\Pr\left( |X_t| = \size + 1 \mid |X_{t-1}| = \size \right)
&= \dfrac{|A(X_{t-1})|}{n} \min \left\{ 1, \exp\left[ \beta \left( \ham_{\size+1} - \ham_\size \right) \right] \right\} \nonumber \\
&\ge \dfrac{\one\{\size < n_\eta\}}{20 \cdot 2^\size} \min \left\{ 1, \exp\left[ \beta \left( \ham_{\size+1} - \ham_\size \right) \right] \right\}
= P\left( \size, \size+1 \right) \label{couple_up}
\end{align}
Using \eqref{couple_down} and \eqref{couple_up} we can couple $|X_t|$ and $Y_t$ such that either $|X_t| = Y_t$ or $|X_t| = \size+1$ and $Y_t = \size$, as desired.
\end{proof}

The next lemma upper bounds the $t$-step transition probability $\Pr\left(Y_t = \qzero \mid Y_0 = q\right)$.

\begin{lemma}\label{lem:Y-ub-prob}
Let $\{Y_t\}$ denote the birth-death process described above with parameter $\eta \in (0,1)$ starting from $Y_0 = q = \rho \log n$ and let $p=\rhozero \log n$ with $\rhozero \leq 1-\eta$.
Then for all integer $t \ge 1$ we have
\begin{align}\label{eq:up_prob_Y}
\Pr\left(Y_t = \qzero \mid Y_0 = q \right) 
\le \exp\left[ (\ln2) (\log n)^2 \left( - \rho' + \frac{(\rho')^2}{2} + \rhozero - \frac{\rhozero^2}{2} + o(1) \right) \right] \exp\left[ \beta \left( \ham_{\qzero} - \ham_{q'} \right) \right]
\end{align}
where $\rho' = \min\{ \rho, 1-\eta \}$ and $q' = \rho' \log n$. 
\end{lemma}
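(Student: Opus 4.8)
The plan is to bound the $t$-step return probability $\Pr(Y_t = \qzero \mid Y_0 = q)$ by reversibility of the birth-death chain $\{Y_t\}$, trading the transition $q \to \qzero$ for the transition $\qzero \to q$ and picking up a factor that is exactly the ratio of stationary masses. Concretely, since $\{Y_t\}$ is a reversible Markov chain with stationary distribution $\nu$, \cref{fact:reversibility} gives
\[
\Pr(Y_t = \qzero \mid Y_0 = q) = \frac{\nu(q')}{\nu(\qzero)} \,\Pr(Y_t = q' \mid Y_0 = \qzero) \le \frac{\nu(q')}{\nu(\qzero)},
\]
where $q' = \rho' \log n$ is the truncated target; here I use that $\nu$ is supported on $\{0,\dots,\floor{(1-\eta)\log n}\}$, so if $q > (1-\eta)\log n$ the only relevant mass is at the cutoff level $q'$, and that if $Y_0 = q$ lies above the support we should really track the first entry into the support — but since $\{Y_t\}$ only decreases above level $\floor{(1-\eta)\log n}$ (the up-rate is $0$ there), it reaches the cutoff deterministically and monotonically, so the bound in terms of $\nu(q')/\nu(\qzero)$ is clean. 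The probability on the far right is at most $1$, which is all we need.

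The second step is to compute $\nu(q')/\nu(\qzero)$ from the explicit product formula in \cref{fact:bd-stationary}: $\nu(i) \propto \prod_{s=1}^{i} \frac{P(s-1,s)}{P(s,s-1)}$. Plugging in the transition rates of $\{Y_t\}$ — namely $P(s-1,s) = \frac{1}{20\cdot 2^{s-1}}\min\{1, e^{\beta(\ham_s - \ham_{s-1})}\}$ for $s-1 < \floor{(1-\eta)\log n}$ and $P(s,s-1) = \frac{s}{n}\min\{1, e^{\beta(\ham_{s-1} - \ham_s)}\}$ — the $\min\{\cdot,1\}$ factors telescope: the product of $\frac{\min\{1, e^{\beta(\ham_s-\ham_{s-1})}\}}{\min\{1,e^{\beta(\ham_{s-1}-\ham_s)}\}}$ over $s$ from $\qzero+1$ to $q'$ equals $e^{\beta(\ham_{q'} - \ham_{\qzero})}$ exactly (each factor is $e^{\beta(\ham_s - \ham_{s-1})}$ regardless of the sign, since $\min\{1,x\}/\min\{1,1/x\} = x$). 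The remaining ``combinatorial'' part is $\prod_{s=\qzero+1}^{q'} \frac{n}{20 \cdot 2^{s-1} \cdot s}$, whose logarithm (base $2$) is $\sum_{s=\qzero+1}^{q'}\big( \log n - (s-1) - \log(20 s)\big)$. Approximating the sum by an integral, $\sum_{s=\qzero+1}^{q'}(\log n - s) = (q'-\qzero)\log n - \frac{(q')^2 - \qzero^2}{2} + O(\log n)$, and with $q' = \rho'\log n$, $\qzero = \rhozero\log n$ this is $(\ln 2)(\log n)^2\big( \rho' - \rhozero - \frac{(\rho')^2}{2} + \frac{\rhozero^2}{2} + o(1)\big)$ after converting from $\log_2$ to $\ln$; the $\log(20 s)$ terms and the endpoint corrections are all $O((\log n)\log\log n) = o((\log n)^2)$ and get absorbed into the $o(1)$. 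Taking the reciprocal (since we want $\nu(q')/\nu(\qzero)$, i.e.\ the product runs the ``wrong'' way and flips sign) gives precisely the claimed bound: $\exp[(\ln 2)(\log n)^2(-\rho' + \frac{(\rho')^2}{2} + \rhozero - \frac{\rhozero^2}{2} + o(1))] \cdot e^{\beta(\ham_{\qzero} - \ham_{q'})}$.

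The main obstacle I anticipate is bookkeeping around the truncation level $\floor{(1-\eta)\log n}$ when $\rho > 1-\eta$: one must argue carefully that the excursion of $\{Y_t\}$ above the support contributes nothing beyond reaching the cutoff, and that the stationary-ratio bound is still the right quantity. I would handle this by noting that above level $\floor{(1-\eta)\log n}$ the chain is pure death (up-probability zero), so $\Pr(Y_t = \qzero \mid Y_0 = q) \le \Pr(Y_{t'} = \qzero \text{ for some } t' \mid Y_0 = q') \le \sup_{t'} \Pr(Y_{t'} = \qzero \mid Y_0 = q')$ (a deterministic descent to $q'$ followed by the in-support analysis), and then apply reversibility within the support. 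A secondary, purely technical point is making the integral approximation of $\sum (\log n - s - \log(20 s))$ rigorous enough to land the $o(1)$ in the exponent, but this is routine: the summand is monotone in $s$ over the relevant range, so the sum and integral differ by $O(\log n)$, and $\log(20s) \le \log(20 \cdot 2\log n) = O(\log\log n)$ over all $O(\log n)$ terms contributes $o((\log n)^2)$. Once the product formula is in hand the rest is arithmetic matching against \cref{item:X_qr-expectation} of \cref{basic_rg}, which is why this exact form of the exponent was chosen.
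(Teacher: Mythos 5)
Your overall strategy is exactly the paper's: exploit reversibility of $\{Y_t\}$ to trade $\Pr(Y_t = \qzero \mid Y_0 = q)$ for a stationary-measure ratio, compute that ratio from the birth–death product formula, and handle the case $\rho > 1-\eta$ by first passing through the truncation level $q'$. The computation of the combinatorial product and the telescoping of the $\min\{1,\cdot\}$ factors is right, and the $o(1)$ accounting is fine. Two places would not survive refereeing as written.

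First, the reversibility identity is stated in the wrong direction. Reversibility gives $\nu(q')P^t(q',\qzero) = \nu(\qzero)P^t(\qzero,q')$, so the correct bound is $P^t(q',\qzero) = \frac{\nu(\qzero)}{\nu(q')}P^t(\qzero,q') \le \frac{\nu(\qzero)}{\nu(q')}$, not $\frac{\nu(q')}{\nu(\qzero)}$ as you wrote. (Note $\nu(\qzero) < \nu(q')$ in the relevant range, so your version of the ratio exceeds $1$ and is useless as a bound.) You then compute the product $\prod_{s=\qzero+1}^{q'} P(s-1,s)/P(s,s-1) = \nu(q')/\nu(\qzero)$ and ``take the reciprocal'' to get the claimed exponent, which lands on the right answer but only because the two sign errors cancel; as written the derivation does not parse.

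Second, your handling of $\rho > 1-\eta$ contains a genuinely false inequality: $\Pr(\exists t'\colon Y_{t'} = \qzero \mid Y_0 = q') \le \sup_{t'} \Pr(Y_{t'} = \qzero \mid Y_0 = q')$. The left side is a union-of-times probability and can equal $1$ (the chain is recurrent on its support), while the right side is what you want to be quasipolynomially small. The correct step, which is what the paper does, is to condition on the first-passage time $\tau$ to $q'$: since above $\floor{(1-\eta)\log n}$ the chain has no upward transitions, any sample path from $q$ to $\qzero$ must pass through $q'$, so
\[
\Pr(Y_t = \qzero \mid Y_0 = q) = \sum_{t'=0}^{t}\Pr(\tau = t'\mid Y_0 = q)\,\Pr(Y_{t-t'} = \qzero \mid Y_0 = q') \le \max_{u \le t}\Pr(Y_u = \qzero \mid Y_0 = q') \le \frac{\nu(\qzero)}{\nu(q')}.
\]
With those two fixes your proposal coincides with the paper's proof.
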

\begin{proof}
We consider first the case where $\rho \leq 1-\eta$.  By \cref{fact:reversibility}, we have
\begin{align}
\Pr\left(Y_t = \qzero \mid Y_0 = q \right) 
= P^t(q, \qzero) 
= \frac{\nu(\qzero)}{\nu(q)} P^t(\qzero,q) 
\le \frac{\nu(\qzero)}{\nu(q)}. \label{eq:station_bound}
\end{align}
By \cref{fact:bd-stationary}, we have
\begin{align*}
\frac{\nu(\qzero)}{\nu(q)} 
&= \prod_{\size = \qzero+1}^q \frac{ \frac{\size}{n} \min \left\{ \exp\left[ \beta \left( \ham_{\size-1} - \ham_\size \right) \right], 1 \right\} }{ \frac{1}{20 \cdot 2^{\size-1}} \min \left\{ \exp\left[ \beta \left( \ham_\size - \ham_{\size-1} \right) \right], 1 \right\} } \\
&= \prod_{\size = \qzero+1}^q \frac{20\size \cdot 2^{\size-1}}{n} \exp\left[ \beta \left( \ham_{\size-1} - \ham_\size \right) \right] \\
&= \frac{20^{q-\qzero} \cdot (q! / \qzero !) \cdot 2^{\binom{q}{2} - \binom{\qzero}{2}}}{n^{q-\qzero}} \exp\left[ \beta \left( \ham_{\qzero} - \ham_q \right) \right] \\
&= \exp\left[ (\ln2) (\log n)^2 \left( - \rho + \frac{\rho^2}{2} + \rhozero - \frac{\rhozero^2}{2} + o(1) \right) \right] \exp\left[ \beta \left( \ham_{\qzero} - \ham_q \right) \right].
\end{align*}

Next, consider the case where $\rho > 1-\eta$, or equivalently $q>q'$. 
Let $\tau$ be the first time that $Y_{t'} = q'$ and we obtain from \eqref{eq:station_bound} that
\begin{align*}
    \Pr\left(Y_t = \qzero \mid Y_0 = q \right)  &=\sum_{t'=0}^t  \Pr\left(\tau = t' \mid Y_0 = q \right) \Pr\left(Y_{t-t'} = \qzero \mid Y_0 = q' \right) \\
    &\le \frac{\nu(\qzero)}{\nu(q')} \sum_{t'=0}^t  \Pr\left(\tau = t' \mid Y_0 = q \right) 
    = \frac{\nu(\qzero)}{\nu(q')} \Pr\left(\tau \le t \mid Y_0 = q \right)
    \le \frac{\nu(\qzero)}{\nu(q')}.
\end{align*}This completes the proof of the lemma.
\end{proof}

We now proceed with the proof of \cref{lem:reaching-prob}.
\begin{proof}[Proof of \cref{lem:reaching-prob}]

From \cref{lem:stochastic-dominance,lem:Y-ub-prob}, we have for each $\sigma \in \XX_{q,r}$ \begin{align*}
\Pr\left( X_t = \sigma \mid X_0 = C \right) 
&= \exp\left[ \beta \left( \ham_q - \ham_{\qzero} \right) \right] \Pr\left( X_t = C \mid X_0 = \sigma \right) \\
&\le \exp\left[ \beta \left( \ham_q - \ham_{\qzero} \right) \right] \,\max_{t' \in [t]}\, t\Pr\left(Y_{t'} = \qzero \mid Y_0 = q\right) \\
&\le t  \exp\left[ (\ln2) (\log n)^2 \left( - \rho' + \frac{(\rho')^2}{2} + \rhozero - \frac{\rhozero^2}{2} + o(1) \right) \right] \exp\left[ \beta \left( \ham_q - \ham_{q'} \right) \right],
\end{align*}
where $\rho' = \min\{ \rho, 1-\eta \}$ and $q' = \rho' \log n$.

Combining now with \cref{first_ub} and \cref{basic_rg} we have that $\Pr\left( X_t \in \XX_{q,r} \mid X_0 = C \right)$ is at most
\begin{align*}
& n^3\, \E[W_{q,r}] \max_{\sigma \in \XX_{q,r}} \Pr\left( X_t = \sigma \mid X_0 = C \right) \\
\le{}& n^3 t \exp\left[ (\ln2) (\log n)^2 \left( \rho - \frac{\rho^2}{2} - (1-\alpha)\gamma + \frac{\gamma^2}{2} +o(1) \right) \right] \\
&\cdot \exp\left[ (\ln2) (\log n)^2 \left( - \rho' + \frac{(\rho')^2}{2} + \rhozero - \frac{\rhozero^2}{2} + o(1) \right) \right] \exp\left[ \beta \left( \ham_q - \ham_{q'} \right) \right]  \\
={}& t \exp\left[ (\ln2) (\log n)^2 \left( \left( \rho - \frac{\rho^2}{2} \right) - \left( \rho' - \frac{(\rho')^2}{2} \right) - \left( (1-\alpha)\gamma - \frac{\gamma^2}{2} - \rhozero + \frac{\rhozero^2}{2} \right) +o(1) \right) \right] \exp\left[ \beta \left( \ham_q - \ham_{q'} \right) \right].
\end{align*}
Since $h$ is $1$-Lipschitz and $q' \le q$, we have
\[
\exp\left[ \beta \left( \ham_q - \ham_{q'} \right) \right] \le \exp\left[ \beta \left( q - q' \right) \right]
\le \exp\left[ (\ln2) (\log n)^2 \left( \hat{\beta} \left( \rho - \rho' \right) \right) \right]
\]
where we recall that $ \beta = (\ln 2) \hat{\beta} \log n$.
Therefore,
\begin{align*}
& \Pr\left( X_t \in \XX_{q,r} \mid X_0 = C \right) \\
\le{}& t \exp\left[ (\ln2) (\log n)^2 \left( \left( (1+\hat{\beta})\rho - \frac{\rho^2}{2} \right) - \left( (1+\hat{\beta})\rho' - \frac{(\rho')^2}{2} \right) - \left( (1-\alpha)\gamma - \frac{\gamma^2}{2} - \rhozero + \frac{\rhozero^2}{2} \right) + o(1) \right) \right].
\end{align*} 
The proof of the lemma is complete.
\end{proof}

\section{Quasi-polynomial Hitting Time of Large Cliques}\label{sec:bad_large}

In this section, we present our results about the failure of the Metropolis process to even find cliques of size at least $(1+\epsilon)\log n$, for any planted clique size $k=\floor{n^{\alpha}}, \alpha \in (0,1).$ 

\subsection{Existence of a Bad Initial Clique}

We start with the ``worst-case" initialization result which now works for all inverse temperatures $\beta=O(\log n)$, but establishes that from this initialization the Metropolis process fails to find either a clique of size at least $(1+\epsilon)\log n$ or to find a clique with intersection at least $\gamma \log n$ with the planted clique.

\begin{theorem}\label{thm:large-clique}
Let $\alpha \in [0,1)$ be any fixed constant.
Then the random graph $\GG(n,\frac{1}{2},k = \floor{n^\alpha})$ with a planted clique satisfies the following with probability $1-o(1)$ as $n\to \infty$. 

Consider the general Gibbs measure given by \cref{eq:Gibbs-general} for arbitrary $\ham$ satisfying \cref{ass:ham} and arbitrary inverse temperature $\beta = O(\log n)$. 
For any constants $\eps \in (0,1-\alpha)$ and $\gamma \in (0,1-\alpha]$, there exists a constant $c > 0$ and an initialization state for the Metropolis process from which it requires at least $n^{c\log n}$ steps to reach 
\begin{itemize}
    \item either cliques of size at least $(1+\eps)\log n$, 
    \item or cliques of intersection with the planted clique at least $\gamma \log n$,
\end{itemize}
with probability at least $1-n^{-c\log n}$.
\end{theorem}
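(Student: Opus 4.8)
The plan is to prove \cref{thm:large-clique} by a conductance bottleneck argument via \cref{lem:conductance}, using a single bottleneck set that simultaneously blocks both targets (large size and large planted‑intersection). The bottleneck fuses Jerrum's ``size‑axis'' argument \cite{Jer92}, which controls paths that grow a clique past size $(1+\eps)\log n$, with the ``intersection‑axis'' argument behind \cref{main_result_mixing}, which controls paths that accumulate $\gamma\log n$ vertices of $\PC$. Two harmless reductions: we may assume $\gamma\le 1-\alpha$ (given), and we may assume $\eps$ is an arbitrarily small constant, since a hitting‑time lower bound for the smaller threshold $(1+\eps_0)\log n$ for any $\eps_0\in(0,\eps]$ immediately implies one for $(1+\eps)\log n$; we will take $\eps$ small relative to $\alpha$, $\gamma$, and the implied constant in $\beta=O(\log n)$. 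Fix $r=\floor{\gamma\log n}$, $p=\floor{(1+\tfrac{2\eps}{3})\log n}$, $q=\floor{(1+\eps)\log n}$, and condition (using \cref{lem:rg_property1}) on $G$ satisfying $\pupp(\eps')$ and $\plow(\eps')$ for a suitably small constant $\eps'$; this holds with probability $1-o(1)$.

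First I would build the bottleneck set. Recall from \cite{Jer92} the notion of a $q$-gateway clique; let $\Psi_q$ be the set of all of them, so that any path of the Metropolis chain from $\emptyset$ to a clique of size $\ge q$ meets $\Psi_q\cap\XX_{p,*}$. Set
\[
\mathcal{B} \;=\; \bigl(\Psi_q\cap\XX_{p,<r}\bigr)\ \cup\ \XX_{<q,r},
\]
and let $\mathcal{A}$ be the set of cliques reachable from $\emptyset$ by a path that avoids $\mathcal{B}$ except possibly at its last vertex. By construction $\partial\mathcal{A}\subseteq\mathcal{B}$, i.e.\ $P(x,y)=0$ for $x\in\mathcal{A}\setminus\mathcal{B}$, $y\in\mathcal{A}^\ccomp\setminus\mathcal{B}$. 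Using that $|X_t|$ and $|X_t\cap\PC|$ each change by at most one per step, together with the gateway property, one checks that $\mathcal{A}$ contains no clique of size $\ge(1+\eps)\log n$ and no clique of intersection $\ge\gamma\log n$: to exceed size $q$ the chain must first hit $\Psi_q\cap\XX_{p,*}$, which is either in $\XX_{p,<r}\subseteq\mathcal{B}$ or else the path has already attained intersection exactly $r$ at some size $<q$, landing in $\XX_{<q,r}\subseteq\mathcal{B}$; and to attain intersection $\ge\gamma\log n$ the chain must hit intersection exactly $r$, at a size that is either $<q$ (hence in $\mathcal{B}$) or $\ge q$ (already excluded). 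Thus by \cref{lem:conductance} it suffices to show $\pi_\beta(\mathcal{B})/\pi_\beta(\mathcal{A})\le n^{-c'\log n}$ for a constant $c'>0$.

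For the lower bound on $\pi_\beta(\mathcal{A})$ I would note that $\XX_{p,<r}\cup\XX_{\le p,0}\subseteq\mathcal{A}$, since every such clique is reachable from $\emptyset$ by adding vertices one at a time without entering $\mathcal{B}$ before the final step. Hence
\[
\frac{\pi_\beta(\mathcal{B})}{\pi_\beta(\mathcal{A})}\ \le\ \frac{|\Psi_q\cap\XX_{p,<r}|}{|\XX_{p,<r}|}\ +\ \frac{\pi_\beta(\XX_{<q,r})}{\pi_\beta(\XX_{\le p,0})}.
\]
The first term is $n^{-\Omega(\log n)}$: this is the low‑intersection analogue of Jerrum's gateway estimate, proved by a first‑moment count of the number of ways to extend a $p$-clique of intersection $<r$ up to a $q$-clique while the intersection stays $<r$, using the expectation formula of \cref{basic_rg}; the restriction to intersection $<r\le\gamma\log n$ with $\gamma\le 1-\alpha$ small is exactly what makes this hold for every $\alpha\in(0,1)$, whereas the unrestricted Jerrum ratio is trivialized once $\alpha>1/2$. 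For the second term I would combine (i) $W_{x,r}\le n^{-\Omega(\log n)}W_{x,0}$ for all $x\le q$, which follows from $\pupp$ and \cref{basic_rg} exactly as in the proof of \cref{main_result_mixing}, with (ii) a size‑smoothness estimate $\pi_\beta(\XX_{x,0})\le\exp\!\bigl(O((\beta+\log n)\,|x-x'|)\bigr)\pi_\beta(\XX_{x',0})$ for $x\le x'\le q$, obtained from the first/second‑moment bounds of \cref{basic_rg} and $\plow$, where the $1$-Lipschitz assumption (\cref{ass:ham}) bounds the weight ratio $\exp(\beta(\ham_x-\ham_{x'}))\le\exp(\beta|x-x'|)$. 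Together these give $\pi_\beta(\XX_{<q,r})/\pi_\beta(\XX_{\le p,0})\le\exp\!\bigl(O((\beta+\log n)(q-p))\bigr)\,n^{-\Omega(\log n)}$, and since $q-p=O(\eps\log n)$ and $\beta=O(\log n)$, choosing $\eps$ small enough makes the positive exponent $O(\eps\log^2 n)$ negligible against the $\Omega(\log^2 n)$ gain, so this term is also $n^{-\Omega(\log n)}$. Feeding $\pi_\beta(\mathcal{B})/\pi_\beta(\mathcal{A})\le n^{-c'\log n}$ into \cref{lem:conductance} with $t=n^{(c'/2)\log n}$ yields the theorem with $c=c'/2$.

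\smallskip\noindent\textbf{Main obstacle.} The crux is the low‑intersection gateway estimate $|\Psi_q\cap\XX_{p,<r}|/|\XX_{p,<r}|\le n^{-\Omega(\log n)}$: one must re‑run Jerrum's gateway counting, now constrained to the slab of cliques with intersection $<r$, and verify the first‑moment gap survives uniformly over $\alpha\in(0,1)$ under $\gamma\le 1-\alpha$ and $\eps$ small. Two secondary pitfalls: defining $\mathcal{A}$ so that $\partial\mathcal{A}\subseteq\mathcal{B}$ holds while $\mathcal{A}$ excludes \emph{both} target families (the size target needs the gateway clauses, the intersection target needs $\XX_{<q,r}$); and bookkeeping the $e^{\beta\ham}$ weights when $\beta$ is as large as $\Theta(\log n)$ — here the $1$-Lipschitz hypothesis is precisely what keeps weight ratios of order $n^{O(\eps\log n)}$, so that a sufficiently small constant $\eps$ still leaves a net quasi‑polynomial bottleneck.
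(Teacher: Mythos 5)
Your proposal follows the paper's proof almost line by line: the same bottleneck $\mathcal{B}=(\Psi_q\cap\XX_{p,<r})\cup\XX_{<q,r}$, the same reachable‑set $\mathcal{A}$ with $\XX_{p,<r}\cup\XX_{\le p,0}\subseteq\mathcal{A}$, the same two‑term splitting of $\pi_\beta(\mathcal{B})/\pi_\beta(\mathcal{A})$, and the same twin estimates (a low‑intersection gateway count for the first ratio; intersection suppression together with a size‑smoothness bound governed by $\pupp$, $\plow$, and the Lipschitz hypothesis on $\ham$ for the second). The only cosmetic differences are that you shrink $\eps$ itself (via monotonicity of the size threshold) to make $q-p$ small relative to $\beta=O(\log n)$, whereas the paper keeps $\eps$ fixed and introduces a free parameter $\theta\in(0,\eps/3)$ with $p=\floor{(1+\eps-\theta)\log n}$ chosen so that $\hat{\beta}\theta\le\tfrac12\bigl((1-\alpha)\gamma-\gamma^2/2\bigr)$, and the paper's gateway count $\pgw$ is stated for intersection $\le u=\floor{(\eps/6)\log n}$ and handles $W_{p,u<\cdot<r}/W_{p,0}$ separately, which is tidier than running a single first‑moment count over the full range of intersections $<r$ (where the gateway blocking weakens as the intersection grows and must be traded against the stronger intersection suppression).
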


We now present the proof of \cref{thm:large-clique}. We first need the notion of \emph{gateways} as introduced by \cite{Jer92} in his original argument for the failure of the Metropolis process.

\begin{definition}[Gateways]\label{def:gateway}
For $q\in [n]$, we say a clique $C \in \XX$ is a $q$-gateway if there exists $\ell \in \N$ and a sequence of cliques $C_0 = C, C_1, \dots, C_\ell \in \XX$ such that
\begin{enumerate}[(1)]
\item For every $1 \le i \le \ell$, $C_{i-1}$ and $C_i$ differ by exactly one vertex;
\item For every $0 \le i \le \ell$, $|C_{i}| \ge |C|$;
\item $|C_\ell| = q$.
\end{enumerate}
Let $\Psi_q$ denote the collection of all cliques that are $q$-gateways.
\end{definition}
Notice that by definition if a clique $\sigma$ is a $q$-gateway then $|\sigma| \le q$.

\begin{definition}[Gateway-Counts Property $\pgw$]
We say the random graph $\GG(n,\frac{1}{2},k = \floor{n^\alpha})$ with a planted clique satisfies the gateway-counts property $\pgw$ if the following is true:
For all integers $q = \floor{(1+\eps) \log n}$, $p = \floor{(1+\eps-\theta) \log n}$, and $u = \floor{(\eps/6) \log n}$ with parameters $\eps \in (0,1-\alpha)$ and $\theta \in (0,\eps)$,
it holds 
\[
\left| \Psi_q \cap \Omega_{p, \le u} \right|
\le \exp\left[ (\ln2) (\log n)^2 \left( (1+\eps-\theta) - \frac{1}{2}(1+\eps-\theta)^2 -\theta \left( \frac{5}{6}\eps-2\theta \right) + o(1) \right) \right].
\]
\end{definition}

The following lemma follows immediately from arguments in \cite{Jer92}.

\begin{lemma}\label{lem:gateway}
For any constant $\alpha \in [0,1)$, the random graph $\GG(n,\frac{1}{2},k = \floor{n^\alpha})$ with a planted clique satisfies the gateway-counts property $\pgw$ with probability $1-o(1)$ as $n\to \infty$. 
\end{lemma}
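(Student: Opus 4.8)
The plan is to reduce the estimate on $|\Psi_q \cap \Omega_{p,\le u}|$ to a first-moment computation of the kind already packaged in \cref{basic_rg}, exactly as in Jerrum's original bottleneck argument. First I would recall the combinatorial structure of gateway cliques: if $C$ is a $q$-gateway of size $p$, then by definition $C$ lies on a monotone path $C_0=C, C_1,\dots,C_\ell$ of cliques all of size $\ge p$ reaching a clique of size $q$. Following \cite{Jer92}, the key structural fact is that a $p$-clique $C$ being a $q$-gateway forces the graph to be ``locally rich'' around $C$: one can extract from the path a clique $D \supseteq C'$ for some $C' \subseteq C$ with $|D| = q$ and, crucially, $|D \setminus C|$ vertices all fully adjacent to a large portion of $C$. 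More precisely, Jerrum's counting shows that membership in $\Psi_q$ at size $p$ requires the existence of a $q$-clique extending a $(p-\theta\log n)$-ish subset along the path; this is the event whose probability one bounds by a union bound over choices of the relevant vertex sets.

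Next I would set up the first-moment bound. Writing $q = \floor{(1+\eps)\log n}$, $p = \floor{(1+\eps-\theta)\log n}$, the number of candidate structures is at most $\binom{n}{p}$ (choices for $C$) times the number of ways to complete the gateway certificate, and each such structure survives with probability $2^{-\binom{p}{2}}$ times a factor accounting for the extra edges forced by the path. The exponent that comes out, after taking logs base $2$ and using $\binom{m}{2} = m^2/2 - m/2$, is
\[
(1+\eps-\theta) - \tfrac12(1+\eps-\theta)^2 - \theta\!\left(\tfrac56\eps - 2\theta\right) + o(1),
\]
multiplied by $(\ln 2)(\log n)^2$ — which is exactly the bound claimed in property $\pgw$. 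The term $-\theta(\frac56\eps - 2\theta)$ is the ``gain'' from the gateway constraint: the $u = \floor{(\eps/6)\log n}$ bound on the intersection with $\PC$ enters here to ensure that the planted clique cannot be exploited to cheaply satisfy the gateway certificate (this is the only place $\alpha$ or the planted clique is used, and it is why we need $u$ small relative to $\eps$). Then Markov's inequality plus a union bound over the (polynomially many, i.e. $O(\log^3 n)$) relevant triples $(q,p,u)$ upgrades the expectation bound to a w.h.p.\ statement, giving property $\pgw$ with probability $1-o(1)$.

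The main obstacle — and the step I would be most careful with — is making precise the claim that ``$C \in \Psi_q \cap \Omega_{p,\le u}$ implies the existence of a cheaply-describable certificate whose first moment is small.'' This is precisely the content of Jerrum's lemma in \cite{Jer92}, and the honest way to write this proof is to quote that argument, indicating how the extra bookkeeping of the intersection parameter $u$ (absent in Jerrum's planted-clique-free setting, or rather handled differently there) slots in without changing the exponent beyond the stated $-\theta(\frac56\eps-2\theta)$ term. In particular one must verify: (i) the path can be ``surgically'' truncated so that the gateway witness only involves $O(\theta \log n)$ additional vertices beyond $C$; (ii) restricting to $|C\cap\PC|\le u$ with $u = (\eps/6)\log n$ indeed kills any savings the planted clique could provide, so that the bound is the same as in $G(n,1/2)$ up to $o(\log^2 n)$ in the exponent. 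Given these, the computation itself is the routine $\E[W_{\cdot,\cdot}]$-style estimate from \cref{basic_rg}, and I would simply cite \cite{Jer92} for (i)–(ii) rather than reproduce the surgery, stating that the lemma ``follows immediately from arguments in \cite{Jer92}'' as the paper already announces.
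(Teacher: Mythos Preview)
Your plan is correct and matches the paper's proof: the paper too extracts from the gateway path a certificate tuple $(C,U,W)$ with $U=C'\setminus C$ of size $q-p$ and $W\subseteq C\cap C'\setminus\PC$ of size $2p-q-u$, bounds the first moment of the number of such tuples, and finishes with Markov's inequality plus a union bound over the polynomially many triples $(q,p,u)$.

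One correction to your point (ii): the dependence on $u$ is not $o(\log^2 n)$ in the exponent but $\Theta(\log^2 n)$. The coefficient $\tfrac56\eps$ in the target bound is precisely $\eps - u/\log n$, so the restriction $|C\cap\PC|\le u$ shows up at leading order. The actual role of $u$ in the argument is more concrete than ``killing savings'': since $|C\cap C'|\ge 2p-q$ and at most $u$ of these lie in $\PC$, one can choose $W\subseteq C\cap C'\setminus\PC$ of size $2p-q-u$; because $W$ is disjoint from $\PC$, every edge between $U$ and $W$ is genuinely random with probability $1/2$, which is what makes the factor $2^{-(q-p)(2p-q-u)}$ legitimate in the first-moment bound.
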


\begin{proof}
We follow the same approach as in \cite{Jer92} with the slight modification that $\theta$ is not equal to $\eps/3$ but arbitrary. 
For any $C \in \Psi_q \cap \Omega_{p, \le u}$, there exists a set $U \in V \setminus C$ of size $|U| = q-p$ and a subset $W \subseteq C \setminus \PC$ of size $2p-q-u$ such that every vertex from $U$ is adjacent to every vertex from $W$. To see this, consider a path $C_0 = C, C_1, \dots, C_\ell$ as in \cref{def:gateway}, and consider the first clique $C'$ in the path such that $|C' \setminus C| = q-p$. Such $C'$ must exist since the destination clique $C_\ell$ has size $q$ while $|C| = p < q$. Note that $C'$ corresponds to the first time when $q-p$ new vertices are added. Meanwhile, since $|C'| \ge p$, we have $|C \cap C'| \ge p-(q-p) = 2p-q$ and $|C \cap C' \setminus \PC| \ge 2p-q-u$. We can thus take $U = C' \setminus C$ and any $W \subseteq C \cap C' \setminus \PC$ of size $2p-q-u$.

Hence, we can associate every $q$-gateway $C$ in $\Omega_{p, \le u}$ with a tuple $(C,U,W)$ satisfying all the conditions mentioned above: $C$ is a clique of size $p$ and intersection at most $u$ with $\PC$, $U \subseteq V \setminus C$ has size $q-p$, $W\subseteq C \setminus \PC$ has size $2p-q-u$, and $U,W$ are fully connected.
Let $X$ denote the number of such tuples. Then, $|\Psi_q \cap \Omega_{p, \le u}| \le X$.
The first moment of $X$ is given by
\begin{align*}
\E[X] &= \sum_{r = 0}^u \binom{k}{r}\binom{n-k}{p-r} \binom{n-p}{q-p} \binom{p-r}{2p-q-u} \left( \frac{1}{2} \right)^{\binom{p}{2} - \binom{r}{2} + (q-p)(2p-q-u)} \\
&\le \exp\left[ (\ln2) (\log n)^2 \left( (1+\eps-\theta) - \frac{1}{2}(1+\eps-\theta)^2 -\theta \left( \frac{5}{6}\eps-2\theta \right) + o(1) \right) \right],
\end{align*}
where in the first equality, $\binom{k}{r}\binom{n-k}{p-r}$ counts the number of choices of $C$ for $r$ ranging from $0$ to $u$, $\left( 1/2 \right)^{\binom{p}{2} - \binom{r}{2}}$ is the probability $C$ being a clique, $\binom{n-p}{q-p}$ is the number of choices of $U$, $\binom{p-r}{2p-q-u}$ is for $W$, and finally $\left( 1/2 \right)^{(q-p)(2p-q-u)}$ is the probability of $U,W$ being fully connected.
The lemma then follows from the Markov's inequality
\[
|\Psi_q \cap \Omega_{p, \le u}| \le X \le n \,\E[X],
\]
and a union bound over the choices of $q$, $p$, and $u$. 
\end{proof}


We now present the proof of \cref{thm:large-clique}.


\begin{proof}[Proof of \cref{thm:large-clique}]
By \cref{lem:rg_property1,lem:gateway}, the random graph $\GG(n,\frac{1}{2},\floor{n^\alpha})$ satisfies both the clique-counts properties $\pupp(\eps_0)$ and $\plow(\eps_0)$ for $\eps_0 = \alpha \le 1-\eps$ and the gateway-counts properties $\pgw$ simultaneously with probability $1-o(1)$ as $n \to \infty$. 
Throughout the proof we assume that $\pupp(\eps_0)$, $\plow(\eps_0)$, and $\pgw$ are all satisfied. 

Suppose $\hat{\beta}$ is such that $\hat{\beta}=\beta/ ((\ln 2) ( \log n))$ so that $\hat{\beta} = O(1)$. 
Pick a constant $\theta \in (0,\eps/3)$ such that
\[
\hat{\beta} \theta \le \frac{1}{2} \left( (1-\alpha)\gamma - \frac{\gamma^2}{2} \right). 
\]
Let $q = (1+\eps) \log n$, $p = (1+\eps-\theta) \log n$, and $r = \gamma \log n$. 
We define
\[
\mathcal{B} = \left( \Psi_{q} \cap \Omega_{p,<r} \right) \cup \Omega_{<q,r}
\]
to be the ``bottleneck'' set to which we will apply \cref{lem:conductance}.
Let $\mathcal{A} \subseteq \XX$ denote the collection of cliques that are reachable from the empty clique through a path (i.e.\ a sequence of cliques where each adjacent pair differs by exactly one vertex) not including any clique from $\mathcal{B}$ except possibly for the destination.
The following claim, whose proof is postponed to the end of this subsection, follows easily from the definitions of $\mathcal{A}$ and $\mathcal{B}$.  
\begin{claim}\label{claim:AB}
\begin{enumerate}
    \item Cliques from $\mathcal{A} \setminus \mathcal{B}$ are not adjacent to cliques from $\mathcal{A}^\ccomp$ (i.e., they differ by at least two vertices);
    \item $\Omega_{q,*} \subseteq \mathcal{A}^\ccomp \setminus \mathcal{B}$;
    \item $\Omega_{*,r} \subseteq \mathcal{A}^\ccomp \cup \mathcal{B}$.
\end{enumerate}
\end{claim}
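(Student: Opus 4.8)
The plan is to prove all three parts purely combinatorially from the definitions of $\mathcal{A}$ and $\mathcal{B}$, using nothing about the random graph but only the fact that along a path consecutive cliques differ in exactly one vertex, so that the size changes by exactly $\pm1$ and the intersection with $\PC$ changes by at most $1$ at each step. Part~(1) I would dispatch first, and it is immediate: given $C\in\mathcal{A}\setminus\mathcal{B}$, take a path from $\emptyset$ to $C$ that meets $\mathcal{B}$ only possibly at its endpoint; since $C\notin\mathcal{B}$ this path avoids $\mathcal{B}$ entirely, so appending any clique $C'$ adjacent to $C$ yields a path from $\emptyset$ to $C'$ meeting $\mathcal{B}$ only possibly at the new endpoint $C'$, whence $C'\in\mathcal{A}$. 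Thus every clique adjacent to a clique of $\mathcal{A}\setminus\mathcal{B}$ lies in $\mathcal{A}$.

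For Part~(2), the disjointness $\Omega_{q,*}\cap\mathcal{B}=\emptyset$ is trivial, since a $q$-clique has size $q\neq p$ and is not of size $<q$, so it lies in neither $\Psi_q\cap\Omega_{p,<r}$ nor $\Omega_{<q,r}$. The substance is $\Omega_{q,*}\cap\mathcal{A}=\emptyset$, which I would prove by contradiction. Suppose $C\in\Omega_{q,*}\cap\mathcal{A}$; since $C\notin\mathcal{B}$ there is a path $\emptyset=D_0,D_1,\dots,D_\ell=C$ avoiding $\mathcal{B}$ altogether. Let $m$ be the first index with $|D_m|=q$, so that $|D_i|\le q-1$ for every $i<m$, and let $j\le m$ be the \emph{last} index with $|D_j|=p$ (it exists because $0<p<q$, and one checks $j<m$). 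An intermediate-value argument shows $|D_i|\ge p$ for all $j\le i\le m$ (otherwise a later visit to level $p$ occurs between some such $i$ and $m$, contradicting maximality of $j$), so the suffix $D_j,\dots,D_m$ exhibits $D_j$ as a $q$-gateway of size $p$. Now, the prefix $D_0,\dots,D_j$ consists of cliques of size $\le q-1<q$ whose intersections with $\PC$ climb from $0$ to $|D_j\cap\PC|$ in unit steps, so either some $D_{i^*}$ with $i^*\le j$ has $|D_{i^*}\cap\PC|=r$ and $|D_{i^*}|<q$, i.e.\ $D_{i^*}\in\Omega_{<q,r}\subseteq\mathcal{B}$, or $|D_j\cap\PC|<r$, in which case $D_j\in\Psi_q\cap\Omega_{p,<r}\subseteq\mathcal{B}$. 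Either way the path meets $\mathcal{B}$, a contradiction, so $\Omega_{q,*}\subseteq\mathcal{A}^\ccomp\setminus\mathcal{B}$.

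Part~(3) I would then deduce from Part~(2). Let $C\in\Omega_{*,r}$. If $|C|<q$ then $C\in\Omega_{<q,r}\subseteq\mathcal{B}$ and we are done. If $|C|\ge q$, note that $C\notin\mathcal{B}$ since $C$ has intersection exactly $r$ with $\PC$ and hence lies in neither $\Omega_{<q,r}$ nor $\Psi_q\cap\Omega_{p,<r}$; so were $C\in\mathcal{A}$ there would be a path from $\emptyset$ to $C$ avoiding $\mathcal{B}$ entirely, and truncating this path at the first clique of size $q$ gives a path from $\emptyset$ to a $q$-clique avoiding $\mathcal{B}$, which is exactly what the argument of Part~(2) forbids. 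Hence $C\in\mathcal{A}^\ccomp$, and in all cases $C\in\mathcal{A}^\ccomp\cup\mathcal{B}$.

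The only genuinely delicate point I anticipate is the joint choice of indices in Part~(2): one must take the last visit to size $p$ that occurs \emph{before} the first visit to size $q$, so that simultaneously the suffix stays at size $\ge p$ (giving the gateway property) and the prefix stays at size $<q$ (so that a premature visit to intersection $r$ lands in $\Omega_{<q,r}\subseteq\mathcal{B}$ rather than in some set $\Omega_{\ge q,r}$ that is \emph{not} contained in $\mathcal{B}$). Everything else is routine bookkeeping with the unit-step property of paths; in particular no probabilistic input and none of the properties $\pupp$, $\plow$, $\pgw$ enter into this claim.
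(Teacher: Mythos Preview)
Your proposal is correct and follows essentially the same approach as the paper's proof: Part~(1) by path extension, Part~(2) by locating the last size-$p$ clique before the first size-$q$ clique along a witnessing path and arguing it is a $q$-gateway, then splitting on whether its intersection with $\PC$ is $<r$ or $\ge r$, and Part~(3) by truncating at the first size-$q$ clique and invoking the argument of Part~(2). Your treatment of the index choices in Part~(2) is in fact slightly more explicit than the paper's; the one minor wording slip is in Part~(3), where the reason $C\notin\Omega_{<q,r}$ when $|C|\ge q$ is the size constraint rather than the intersection being exactly $r$, but the conclusion is of course correct.
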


Now observe from \cref{claim:AB} that to prove what we want, it suffices to show that starting from an appropriate state the Metropolis process does not hit any clique from $\mathcal{B}$ in $\exp(c \log^2 n)$-time with probability $1 - \exp(-c\log^2 n)$.
For a collection $\mathcal{U} \subseteq \XX$ of cliques we write $Z(\mathcal{U}) = Z(\beta;\mathcal{U}) = \sum_{\sigma \in \mathcal{U}} e^{\beta h_{|\sigma|}}$ to represent the partition function restricted to the set $\mathcal{U}$.
Since the boundary of $\mathcal{A}$ is included in $\mathcal{B}$ by \cref{claim:AB} it suffices to show that there exists a constant $c>0$ such that
\begin{align}\label{eq:large-goal_cond}
  \frac{ Z(\mathcal{B}) }{ Z(\mathcal{A}) } 
  \le \exp\left(-c \log^2 n \right).
\end{align}
Given \eqref{eq:large-goal_cond}, \cref{thm:large-clique} is an immediate consequence of \cref{lem:conductance}.

Observe that we have the following inclusion,
\[
\mathcal{A} \supseteq \Omega_{p,<r} \cup \Omega_{\le p,0} .
\]
To see this, for every clique in $\Omega_{p,<r}$, it can be reached from the empty clique by adding vertices one by one in any order, so that none of the intermediate cliques are from $\Omega_{p,<r} \supseteq \left( \Psi_{q} \cap \Omega_{p,<r} \right)$ or from $\Omega_{<q, r}$, except for possibly the last one. Similarly, every clique in $\Omega_{\le p,0}$ can be reached from the empty clique in the same way. (Note that cliques in $\Omega_{\le q,0}$, however, may not be reachable from $\emptyset$ without crossing $\mathcal{B}$; for example, by adding vertices one by one to reach a clique of size $q$, it may first reach a clique of size $p<q$ which is a $q$-gateway with intersection $<r$.)
Thus, we have
\begin{equation}\label{eq:B/A}
\frac{ Z(\mathcal{B}) }{ Z(\mathcal{A}) } 
\le \frac{ Z\left( \Psi_{q} \cap \Omega_{p,<r} \right) }{ Z(\mathcal{A}) } + \frac{ Z(\Omega_{<q, r}) }{ Z(\mathcal{A}) } 
\le \frac{ Z\left( \Psi_{q} \cap \Omega_{p,<r} \right) }{ Z(\Omega_{p,<r}) } + \frac{ Z(\Omega_{<q, r}) }{ Z(\Omega_{\le p,0}) } 
= \frac{ \left| \Psi_{q} \cap \Omega_{p,<r} \right| }{ W_{p,<r} } + \frac{ Z_{<q, r} }{ Z_{\le p,0} }. 
\end{equation}
The rest of the proof aims to upper bound the two ratios in \cref{eq:B/A} respectively.

For the first ratio, the key observation is that since $\gamma \le 1-\alpha$, $\XX_{p,<r}$ is dominated by cliques of intersection $o(\log n)$ (almost completely outside the planted clique) or more rigorously speaking those with sufficiently small intersection, say, in $\XX_{p,\le u}$ where $u = (\eps/6) \log n$. Hence, we write
\[
\left| \Psi_{q} \cap \Omega_{p,<r} \right|
=
\left| \Psi_{q} \cap \Omega_{p,\le u} \right|
+
\left| \Psi_{q} \cap \Omega_{p,u < \cdot < r} \right|
\le 
\left| \Psi_{q} \cap \Omega_{p,\le u} \right|
+
W_{p,u < \cdot < r},
\]
and combining $W_{p,<r} \ge W_{p,0}$ we get 
\begin{equation}\label{eq:bottle-split}
\frac{ \left| \Psi_{q} \cap \Omega_{p,<r} \right| }{ W_{p,<r} } 
\le \frac{ \left| \Psi_{q} \cap \Omega_{p,\le u} \right| }{ W_{p,0} }
+ \frac{ W_{p,u < \cdot < r} }{ W_{p,0} }. 
\end{equation}
By \cref{basic_rg}, the clique-counts property $\plow(\eps_0)$, and the gateway-counts property $\pgw$, we upper bound the first term in \cref{eq:bottle-split} by
\begin{align}\label{eq:bound-gateway}
\frac{ \left| \Psi_{q} \cap \Omega_{p,\le u} \right| }{ W_{p,0} }
&\le \frac{ 2 \left| \Psi_{q} \cap \Omega_{p,\le u} \right| }{ \E[W_{p,0}] } \nonumber\\
&\le \exp\left[ (\ln2) (\log n)^2 \left( (1+\eps-\theta) - \frac{1}{2}(1+\eps-\theta)^2 -\theta \left( \frac{5}{6}\eps-2\theta \right) + o(1) \right) \right] \nonumber\\
&~~\cdot \exp\left[ (\ln2) (\log n)^2 \left( - (1+\eps-\theta) + \frac{1}{2}(1+\eps-\theta)^2 + o(1) \right) \right] 
\nonumber\\
&= \exp\left[ (\ln2) (\log n)^2 \left( -\theta \left( \frac{5}{6}\eps-2\theta \right) + o(1) \right) \right].
\end{align}
For the second term in \cref{eq:bottle-split}, $\pupp(\eps_0)$ and $\plow(\eps_0)$ imply that 
\begin{equation}\label{eq:bound-ratio-W}
\frac{ W_{p,u < \cdot < r} }{ W_{p,0} }
\le 2n^3 \,\frac{\E \left[W_{p,u < \cdot < r} \right]}{ \E \left[W_{p,0}\right] }
\le 2n^4 \,\frac{\E \left[W_{p,u} \right]}{ \E \left[W_{p,0}\right] }
\le \exp\left[ (\ln2) (\log n)^2 \left( -\frac{1}{12}(1-\alpha)\eps + o(1) \right) \right],
\end{equation}
where the last inequality uses $\eps \le 1-\alpha$.
Combining \cref{eq:bottle-split,eq:bound-gateway,eq:bound-ratio-W}, we obtain 
\[
\frac{ \left| \Psi_{q} \cap \Omega_{p,<r} \right| }{ W_{p,<r} }
\le 
\exp\left( -c_1 \log^2 n + o(\log^2 n) \right)
\]
for some constant $c_1 = c_1(\alpha,\eps,\theta)$. This bounds the first ratio in \cref{eq:B/A}. 

For the second ratio in \cref{eq:B/A}, we have from $\pupp(\eps_0)$ and $\plow(\eps_0)$ that 
\begin{equation}\label{eq:second-ratio}
\frac{ Z_{<q, r} }{ Z_{\le p,0} }
\le 2n^3\, \frac{\E\left[ Z_{<q, r} ) \right]}{ \E\left[ Z_{\le p,0} \right] }.
\end{equation}
Using linearity of expectation we have that
\[
\E\left[ Z_{<q, r} \right]
= \sum_{q' = r}^{q-1} \E\left[ Z_{q',r} \right]
= \sum_{q' = r}^{q-1} \E\left[ W_{q',r} \right] \exp\left( \beta\ham_{q'} \right).
\]
Let 
\[
q^* = \argmax_{q' \in [n]: \; r \le q' < q}\, \E\left[ W_{q',r} \right] \exp\left( \beta\ham_{q'} \right).
\]
It follows that 
\begin{equation}\label{eq:222Z*r}
\E\left[ Z_{<q, r} \right] \le n \,\E\left[ W_{q^*,r} \right] \exp\left( \beta\ham_{q^*} \right).
\end{equation}
Meanwhile, let $q^{*\prime} = \min\{q^*, p\}$, and we have
\begin{equation}\label{eq:222Z*ler}
\E\left[ Z_{\le p,0} \right] \ge \E\left[ W_{q^{*\prime},0} \right] \exp\left( \beta\ham_{q^{*\prime}} \right).
\end{equation}
Combining \cref{eq:second-ratio,eq:222Z*r,eq:222Z*ler}, we get that
\[
\frac{ Z_{<q, r} }{ Z_{\le p,0} } \le 2n^4 \,\frac{\E\left[ W_{q^*,r} \right]}{\E\left[ W_{q^{*\prime},0} \right]} \exp\left[ \beta(h_{q^*} - h_{q^{*\prime}}) \right].
\]
Let $\rho=q^*/ \log n$ and $\rho'=q^{*\prime}/ \log n$.
Then by definition we have that $\rho' \le \rho \le 1+\eps$ and $\rho' = \min\{\rho, 1+\eps-\theta\}$.
Furthermore by \cref{ass:ham} we have
\[
\beta(h_{q^*} - h_{q^{*\prime}}) \le (\ln2) (\log n)^2 \hat{\beta} (\rho - \rho').
\]
Then, an application of \cref{item:X_qr-expectation} of \cref{basic_rg} implies that 
\begin{align*}
\frac{ Z_{<q, r} }{ Z_{\le p,0} } 
&\le \exp\left[ (\ln2) (\log n)^2 \left( \rho - \frac{\rho^2}{2} - (1-\alpha)\gamma + \frac{\gamma^2}{2} - \rho' + \frac{(\rho')^2}{2} + \hat{\beta} (\rho - \rho') + o(1) \right) \right]\\
&= \exp\left[ (\ln2) (\log n)^2 \left( (\rho-\rho') \left( \hat{\beta} + 1 - \frac{1}{2}(\rho+\rho') \right) - \left( (1-\alpha)\gamma - \frac{\gamma^2}{2} \right) + o(1) \right) \right].
\end{align*}
If $\rho = \rho'$ then $Z_{<q, r} / Z_{\le p,0} \le \exp\left( -c_2 \log^2 n + o(\log^2 n) \right)$ for $c_2 = (\ln 2) \left( (1-\alpha)\gamma - \gamma^2/2 \right)$.
If $\rho > \rho'$ then 
\[
(\rho-\rho') \left( \hat{\beta} + 1 - \frac{1}{2}(\rho+\rho') \right)
\le \theta \hat{\beta} \le 
\frac{1}{2} \left( (1-\alpha)\gamma - \frac{\gamma^2}{2} \right)
\]
since $1 \le 1+\eps-\theta =\rho' < \rho \le 1+\eps$.
Therefore, we have $Z_{<q, r} / Z_{\le p,0}  \le \exp\left( -c_2 \log^2 n + o(\log^2 n) \right)$ for $c_2 = \frac{\ln 2}{2}\left( (1-\alpha)\gamma - \gamma^2/2 \right)$.
This bounds the second ratio in \cref{eq:B/A}. 
Hence, we establish \cref{eq:large-goal_cond} and the theorem then follows.
\end{proof}

\begin{proof}[Proof of \cref{claim:AB}]
The first item is obvious, since if a clique $\sigma \in \mathcal{A} \setminus \mathcal{B}$ is adjacent to another clique $\sigma'$, then by appending $(\sigma,\sigma')$ to the path from $\emptyset$ to $\sigma$ we get a path from $\emptyset$ to $\sigma'$ without passing through cliques from $\mathcal{B}$ except possibly at $\sigma''$, since $\sigma \notin \mathcal{B}$. This implies that $\sigma'' \in \mathcal{A}$ which proves the first item. 

For the second item, suppose for contradiction that there exists a clique $C$ of size $q$ that is in $\mathcal{A} \cup \mathcal{B}$. 
Clearly $C \notin \mathcal{B}$ and so $C \in \mathcal{A}$.
Then there exists a path of cliques $\emptyset = C_0, C_1, \dots, C_\ell = C$ which contain no cliques from $\mathcal{B}$. 
Let $C_j$ for $j \in [\ell]$ be the first clique of size $q$ in this path; that is, $|C_i| < q$ for $0 \le i < j$. 
Then, the (sub)path $\emptyset = C_0, C_1, \dots, C_j$ must contain a $q$-gateway of size $p$, call it $C'$, as one can choose the largest $i<j$ for which $|C_i|=p$ and set $C'=C_i$. 
If $C'$ has intersection with the planted clique less than $r$, then $C' \in \Psi_{q} \cap \Omega_{p,<r} \subseteq \mathcal{B}$, contradiction. 
Otherwise, $C'$ has intersection at least $r$ which means at some earlier time, it will pass a clique of intersection exactly $r$ whose size is less than $q$, which is again in $\Omega_{<q,r} \subseteq \mathcal{B}$ leading to a contradiction. This establishes the second item.

For the third one, if $\sigma$ is a clique of intersection $r$, then either $|\sigma| < q$ meaning $\sigma \in \Omega_{<q,r} \subseteq \mathcal{B}$ or $|\sigma| \ge q$ meaning
any path from $\emptyset$ to $\sigma$ must pass through a clique of size exactly $q$ and by the second item must pass through cliques from $\mathcal{B}$, implying $\sigma \in \mathcal{A}^\ccomp$.
\end{proof}

\subsection{Starting from the Empty Clique}

\begin{theorem}\label{thm:start-from-empty-hit-large}
Let $\alpha \in [0,1)$ be any fixed constant.
For any constant $\eps \in (0,1)$, the random graph $\GG(n,\frac{1}{2},k = \floor{n^\alpha})$ with a planted clique satisfies the following with probability $1-o(1)$ as $n\to \infty$. 

Consider the general Gibbs measure given by \cref{eq:Gibbs-general} for arbitrary $1$-Lipschitz $\ham$ with $h_0 = 0$ and inverse temperature $\beta = o(\log n)$.
Let $\{X_t\}$ denote the Metropolis process on $G$ with stationary distribution $\pi_\beta$. 
Then there exist constants $\rhozero = \rhozero(\alpha,\eps)>0$ and $c = c(\alpha,\eps)>0$ such that for any clique $C \in \XX$ of size at most $\rhozero\log n$, one has 
\[
\Pr\Big(\exists t \in \N \wedge t \le n^{c \log n} \text{~s.t.~} |X_t| \ge (1+\eps) \log n \;\Big\vert\; X_0 = C \Big) \le  n^{-c \log n}.
\]
In particular, the Metropolis process starting from the empty clique requires $n^{\Omega(\log n)}$ steps to reach cliques of size at least $(1+\eps)\log n$, with probability $1 - n^{-\Omega(\log n)}$. 
\end{theorem}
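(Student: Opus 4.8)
Here is the plan. The argument mirrors the proof of \cref{thm:start-from-empty} and combines its reversibility/birth–death machinery with \cref{thm:start-from-empty} itself as a black box. Fix an auxiliary constant $\gamma \in (0,1-\alpha)$ (say $\gamma = (1-\alpha)/2$; when $\alpha=0$ the planted clique is a single vertex and the ``large intersection'' part below is vacuous), set $q_0 = \ceil{(1+\eps)\log n}$ and $r = \floor{\gamma\log n}$. Since the clique size changes by at most one per step and $|X_0| = |C| \le \rhozero\log n < q_0$, the process reaches size $\ge (1+\eps)\log n$ only after visiting a clique of size exactly $q_0$, at which moment its intersection with $\PC$ is either $\ge \gamma\log n$ or $<r$. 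Hence
\[
\Big\{\exists t \le T:\, |X_t| \ge (1+\eps)\log n\Big\}
\subseteq
\Big\{\exists t \le T:\, |X_t \cap \PC| \ge \gamma\log n\Big\}
\cup
\Big\{\exists t \le T:\, X_t \in \Omega_{q_0,<r}\Big\}.
\]
The probability of the first event is $n^{-\Omega(\log n)}$ directly by \cref{thm:start-from-empty} (applicable since $\gamma < 1-\alpha$, since $\beta = o(\log n) \le (\ln2)\gamma\log n$ for large $n$, and since $|C|$ is below that theorem's size threshold once $\rhozero$ is chosen small enough). So the whole task reduces to bounding the probability of ever hitting $\Omega_{q_0,<r}$.

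For that I would rerun the proof of \cref{lem:reaching-prob} with the target set $\Omega_{q_0,<r}$ in place of $\XX_{q,r}$. A union bound over $t \le T$ reduces this to bounding $\Pr(X_t \in \Omega_{q_0,<r}\mid X_0 = C) \le W_{q_0,<r}\cdot \max_{\sigma \in \XX_{q_0,*}}\Pr(X_t = \sigma \mid X_0 = C)$. The first factor is handled by $\pupp$ together with the monotonicity of $\E[W_{q_0,s}]$ in $s$ on $[0,r)$ (which is exactly where $\gamma < 1-\alpha$ is used): $W_{q_0,<r} \le n^{O(1)}\E[W_{q_0,0}] = \exp\big[(\ln2)(\log n)^2\big((1+\eps)-\tfrac{(1+\eps)^2}{2}+o(1)\big)\big]$ by \cref{basic_rg}. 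For the second factor one invokes, verbatim, reversibility (\cref{fact:reversibility}), stochastic domination of $\{|X_t|\}$ by the capped birth–death chain $\{Y_t\}$ (\cref{lem:stochastic-dominance}), and \cref{lem:Y-ub-prob} — these steps use no relation between $|C|$ and the target intersection, so the ``$p \le r$'' hypothesis of \cref{lem:reaching-prob} plays no role here. With $\rho' := \min\{1+\eps,1-\eta\} = 1-\eta$ this gives $\max_\sigma \Pr(X_t = \sigma \mid X_0 = C) \le t\exp\big[(\ln2)(\log n)^2\big(-(1-\eta)+\tfrac{(1-\eta)^2}{2}+\rhozero-\tfrac{\rhozero^2}{2}+o(1)\big)\big]$, the $1$-Lipschitz bound on $h$ absorbing the temperature factor $e^{\beta(h_{q_0}-h_{(1-\eta)\log n})}$ into $e^{o(\log^2 n)}$ because $\beta = o(\log n)$.

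Multiplying the two factors and summing over $t \le T = n^{c\log n}$, a short calculation with $g(x):=x-\tfrac{x^2}{2}$ (using $g(1+\eps)-g(1-\eta) = \tfrac{\eta^2-\eps^2}{2}$) shows that the exponent of $\Pr(\exists t\le T:\, X_t \in \Omega_{q_0,<r})$, in units of $(\ln2)(\log n)^2$, is at most $2c + \tfrac{\eta^2-\eps^2}{2} + g(\rhozero) + o(1) \le 2c + \tfrac{\eta^2-\eps^2}{2} + \rhozero + o(1)$. Choosing $\eta = \eps/2$, then $\rhozero = \rhozero(\alpha,\eps)>0$ small (and below the threshold \cref{thm:start-from-empty} demands for our $\gamma$), and finally $c = c(\alpha,\eps)>0$ small, this exponent is negative and bounded away from $0$, so the probability is $n^{-\Omega(\log n)}$, which combined with the first event completes the proof.

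The main obstacle — and the reason one cannot simply bound the probability of hitting $\Omega_{q_0,*}$ in one shot — is that when $\alpha$ is close to $1$ the cliques of size $(1+\eps)\log n$ are overwhelmingly (near-)subsets of the planted clique, so $\E[W_{q_0,*}]$ is dominated by $\binom{n^\alpha}{q_0}$ and the reversibility/birth–death estimate alone becomes far too weak. Peeling off the ``large intersection'' portion of the destination and disposing of it through \cref{thm:start-from-empty} is precisely what repairs this; the remaining bookkeeping (that the event decomposition above is valid because size and intersection each move by at most one per step) is routine.
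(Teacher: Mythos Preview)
Your proposal is correct and follows essentially the same approach as the paper: the same two-way split by intersection with the planted clique (the paper uses threshold $s=(1-\alpha)\log n$ rather than your $\gamma=(1-\alpha)/2$), the same choice $\eta=\eps/2$, and the same birth--death/reversibility computation for the small-intersection branch. The only cosmetic difference is that for the large-intersection branch you invoke \cref{thm:start-from-empty} as a black box, whereas the paper applies \cref{lem:reaching-prob} directly to $\Omega_{<q,s}$.
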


\begin{proof}
By \cref{lem:rg_property1,lem:satisfy-cond}, the random graph $\GG(n,\frac{1}{2},\floor{n^\alpha})$ satisfies both $\pupp(0.1)$ and $\pexp(\eta)$ for $\eta = \eps/2$ with probability $1-o(1)$ as $n\to \infty$.
In the rest of the proof we assume that both $\pupp(0.1)$ and $\pexp(\eta)$ are satisfied.

Suppose that $|C| = \qzero = \rhozero' \log n$, for $0 \leq \rhozero' \leq \rhozero$ and $\rhozero>0$ is a sufficiently small constant to be determined. 
Let $q = (1+\eps) \log n$ and $s = (1-\alpha)\log n$.
First observe that if the chain arrives at a clique in $\Omega_{q,*}$, it either hits a clique from $\Omega_{q,<s}$ or previously reached a clique in $\Omega_{<q,s}$.
This implies that
\begin{multline*}
\Pr\Big( \exists t \in \N \wedge t \le T:\, X_t \in \XX_{q,*} \;\Big\vert\; X_0 = C \Big) \\
\le
\Pr\Big( \exists t \in \N \wedge t \le T:\, X_t \in \XX_{q,<s} \;\Big\vert\; X_0 = C \Big) 
+ 
\Pr\Big( \exists t \in \N \wedge t \le T:\, X_t \in \XX_{<q,s} \;\Big\vert\; X_0 = C \Big).  
\end{multline*}
It suffices to upper bound each of the two terms respectively.

Similar as in \cref{eq:final_ub}, we deduce from the union bound and $\pupp(0.1)$ that
\begin{align*}
\Pr\Big( \exists t \in \N \wedge t \le T:\, X_t \in \XX_{q,<s} \;\Big\vert\; X_0 = C \Big) 
&\le Tn \,\max_{t \in [T]} \,\max_{r \in [s]} \Pr\left( X_t \in \XX_{q,r} \mid X_0 = C \right); \\
\Pr\Big( \exists t \in \N \wedge t \le T:\, X_t \in \XX_{<q,s} \;\Big\vert\; X_0 = C \Big) 
&\le Tn \,\max_{t \in [T]} \,\max_{p \in [q]} \Pr\left( X_t \in \XX_{p,s} \mid X_0 = C \right).
\end{align*} 
It suffices to show that for all integer $t \ge 1$, all pairs 
\[
(p,r) \in \left\{(q,r): r \in [s]\right\} \cup \left\{(p,s): p \in [q]\right\},
\]
and all clique $\sigma \in \XX_{p,r}$, it holds
\begin{equation}\label{eq:goal-1+eps}
\Pr\left( X_t \in \XX_{p,r} \mid X_0 = C \right) \le t n^{-c\log n} 
\end{equation}
for some constant $c>0$.

Without loss of generality we may assume that $\eps \le 1-\alpha$.
Let $\eta = \eps/2$ and $\rhozero = \eps^2/8$.
Write $\beta = (\ln 2) \hat{\beta} \log n$ with $\hat{\beta} = o(1)$.
Consider first the pair $(q,r)$ where $r\in[s]$. Suppose that $r = \gamma \log n$ with $0\le \gamma \le 1-\alpha$.
We deduce from \cref{lem:reaching-prob}, with $\hat{\beta} = o(1)$, $\rho = 1+\eps$, $\gamma \in [0,1-\alpha]$, and $\rho' = \min\{\rho, 1-\eta\} = 1-\eta$, that
\begin{align*}
& \Pr\left( X_t \in \XX_{q,r} \mid X_0 = C \right) \\
\le{}&
\exp\left[ (\ln 2)(\log n)^2 \left( \left( (1+\eps) - \frac{1}{2}(1+\eps)^2 \right) - \left( (1-\eta) - \frac{1}{2}(1-\eta)^2 \right) - \left( (1-\alpha)\gamma - \frac{\gamma^2}{2} - \rhozero + \frac{\rhozero^2}{2} \right) + o(1) \right) \right] \\
\le{}& 
\exp\left[ (\ln 2)(\log n)^2 \left( - \frac{\eps^2}{2} + \frac{\eta^2}{2} + \rhozero + o(1) \right) \right]
= \exp\left[ (\ln 2)(\log n)^2 \left( - \frac{\eps^2}{4} + o(1) \right) \right].
\end{align*}
This shows \cref{eq:goal-1+eps} for the first case. 

For the second case, we have the pair $(p,s)$ where $p\in [q]$. Suppose $p = \rho \log n$ with $0\le \rho \le 1+\eps$. Also recall that $s=(1-\alpha)\log n$. Again, we deduce from \cref{lem:reaching-prob}, with $\hat{\beta} = o(1)$, $\rho \in [0,1+\eps]$, $\gamma = 1-\alpha$, and $\rho' = \min\{\rho, 1-\eta\}$, that
\begin{align*}
& \Pr\left( X_t \in \XX_{p,s} \mid X_0 = C \right) \\
\le{}& 
\exp\left[ (\ln 2)(\log n)^2 \left( \left( \rho - \frac{\rho^2}{2} \right) - \left( \rho' - \frac{(\rho')^2}{2} \right) - \left( \frac{1}{2}(1-\alpha)^2 - \rhozero + \frac{\rhozero^2}{2} \right) + o(1) \right) \right] \\
\le{}& \exp\left[ (\ln 2)(\log n)^2 \left( - \frac{1}{2}(\rho-1)^2 + \frac{1}{2}(\rho'-1)^2 - \frac{3}{8}(1-\alpha)^2 + o(1) \right) \right], 
\end{align*}
where the last inequality follows from
\[
\frac{1}{2}(1-\alpha)^2 - \rhozero + \frac{\rhozero^2}{2}
\ge \frac{1}{2}(1-\alpha)^2 - \frac{\eps^2}{8}
\ge \frac{3}{8}(1-\alpha)^2
\]
since $\eps \le 1-\alpha$.
If $\rho' = \rho \le 1-\eta$, then $- \frac{1}{2}(\rho-1)^2 + \frac{1}{2}(\rho'-1)^2 = 0$. 
If $\rho' = 1-\eta < \rho \le 1+\eps$, then
\[
- \frac{1}{2}(\rho-1)^2 + \frac{1}{2}(\rho'-1)^2
\le \frac{\eta^2}{2} \le \frac{\eps^2}{8} 
\le \frac{1}{8}(1-\alpha)^2
\]
where the last inequality is because $\eps \le 1-\alpha$.
Hence,
\[
\Pr\left( X_t \in \XX_{p,s} \mid X_0 = C \right)
\le \exp\left[ (\ln 2)(\log n)^2 \left( - \frac{1}{4}(1-\alpha)^2 + o(1) \right) \right].
\]
This shows \cref{eq:goal-1+eps} for the second case. 
The theorem then follows from \cref{eq:goal-1+eps}.
\end{proof}

\subsection{Low Temperature Regime and Greedy Algorithm}

\begin{theorem}
\label{thm:greedy}
Let $\alpha \in [0,1)$ be any fixed constant.
For any constant $\eps \in (0,1)$, the random graph $\GG(n,\frac{1}{2},k = \floor{n^\alpha})$ with a planted clique satisfies the following with probability $1-o(1)$ as $n\to \infty$. 

Consider the general Gibbs measure given by \cref{eq:Gibbs-general} for the identity function $\ham_q = q$ and inverse temperature $\beta = \omega(\log n)$.
For any constant $\gamma \in (0,1-\alpha]$, there exists constant $\rhozero = \rhozero(\alpha,\eps)>0$ such that for any clique $C \in \XX$ of size at most $\rhozero\log n$ and any constant $c \in \N^+$, 
with probability at least $1-n^{-\omega(1)}$ the Metropolis process starting from $C$ will not reach  
\begin{itemize}
    \item either cliques of size at least $(1+\eps)\log n$, 
    \item or cliques of intersection at least $\gamma \log n$ with the planted clique,
\end{itemize}
within $n^c$ steps.
\end{theorem}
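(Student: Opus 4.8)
Here is the proof plan I would follow for \cref{thm:greedy}.

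The plan is to reduce the low-temperature Metropolis process to the greedy algorithm and then control greedy using its monotonicity together with the expansion property $\pexp$ and the clique counts of \cref{basic_rg}. The first step is to show that the chain makes no downward move before it is stopped. As long as the current clique has size $q\le(1+\eps)\log n$, a downward move has probability $\tfrac{q}{n}e^{-\beta}\le\tfrac{(1+\eps)\log n}{n}e^{-\beta}$ in a single step, and $e^{-\beta}=n^{-\omega(1)}$ since $\beta=\omega(\log n)$; a union bound over $n^c$ steps shows that with probability $1-n^{-\omega(1)}$ no downward move occurs before the chain first reaches a clique of size $(1+\eps)\log n$ or of intersection $\gamma\log n$ with $\PC$. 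On that event, coupling via the common vertex choices, $\{X_t\}$ coincides with the greedy process $\{G_t\}$ (pick a uniform $v$, and add it precisely when $v\notin G_{t-1}$ and $G_{t-1}\cup\{v\}$ is still a clique). So it suffices to show that greedy started from any clique $C_0$ with $|C_0|\le\rhozero\log n$ \emph{ever} reaches a clique of size $\ge(1+\eps)\log n$ or of intersection $\ge\gamma\log n$ only with probability $n^{-\omega(1)}$.

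For greedy, both $|G_t|$ and $|G_t\cap\PC|$ are nondecreasing and $|G_t|$ increases by exactly one at each effective step. Writing $q=\lceil(1+\eps)\log n\rceil$ and $r_0=\lceil\gamma\log n\rceil$, it follows (by inspecting the first effective step on which greedy leaves $\{C:|C|<q,\ |C\cap\PC|<r_0\}$) that greedy must then visit some clique in the bottleneck set $\mathcal{T}:=\Omega_{q,\le r_0}\cup\Omega_{\le q,r_0}$. To bound $\Pr(\exists t:\,G_t=\sigma\mid G_0=C_0)$ for a fixed clique $\sigma$, note this is $0$ unless $C_0\subseteq\sigma$, and otherwise greedy reaches $\sigma$ only by adding the vertices of $\sigma\setminus C_0$ one at a time in some order, so the probability equals $\sum_{\text{orderings}}\prod_i|A(C_{i-1})|^{-1}$ over the corresponding chains $C_0\subsetneq C_1\subsetneq\cdots\subsetneq\sigma$. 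Using $\pexp(\eta)$ to bound $|A(C_{i-1})|^{-1}\le 20\cdot 2^{|C_{i-1}|}/n$ while $|C_{i-1}|\le(1-\eta)\log n$, together with $\E[W_{p_0,0}]=\binom{n-k}{p_0}2^{-\binom{p_0}{2}}\le n^{p_0}2^{-\binom{p_0}{2}}/p_0!$, one obtains for every clique $\tau$ of size $p_0:=\lfloor(1-\eta)\log n\rfloor$ that $\Pr(\exists t:\,G_t=\tau\mid G_0=C_0)\le n^{O(1)}\,n^{|C_0|}/\E[W_{p_0,0}]$ (the number of orderings and the factor $20^{p_0}$ being absorbed). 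For $\sigma\in\mathcal{T}$ of size exceeding $p_0$ — the window where $\pexp$ says nothing about $|A(\cdot)|$ — monotonicity forces greedy to pass through some size-$p_0$ subclique of $\sigma$ first, which costs only the extra factor $\binom{q}{p_0}=n^{O(1)}$.

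Finally I would union-bound over $\mathcal{T}$. By $\pupp$ and \cref{item:X_qr-expectation} of \cref{basic_rg}, using $r_0\le(1-\alpha)\log n$ (so the count is maximized at intersection $0$) and $\gamma\le 1-\alpha$ (maximum at size $\log n$), one gets $|\Omega_{q,\le r_0}|\le\exp[(\ln2)(\log n)^2(\tfrac12-\tfrac{\eps^2}{2}+o(1))]$ and $|\Omega_{\le q,r_0}|\le\exp[(\ln2)(\log n)^2(\tfrac12-\gamma((1-\alpha)-\tfrac{\gamma}{2})+o(1))]$, while $\E[W_{p_0,0}]=\exp[(\ln2)(\log n)^2(\tfrac12-\tfrac{\eta^2}{2}+o(1))]$. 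Choosing $\eta$ a small enough constant (say $\eta<\min\{\eps/2,\ \tfrac12\sqrt{\gamma(1-\alpha)},\ 1-\gamma\}$) and then $\rhozero$ smaller still (all in terms of $\alpha,\gamma,\eps$) makes $n^{O(1)}n^{|C_0|}\,|\Omega_{q,\le r_0}|/\E[W_{p_0,0}]$ and the analogous quantity for $\Omega_{\le q,r_0}$ both $\le\exp(-\Omega(\log^2 n))=n^{-\omega(1)}$; combined with the reduction step this proves the theorem. The main obstacle is exactly the size window $((1-\eta)\log n,(1+\eps)\log n)$, where $\pexp$ gives no control on the number of common neighbours and where, when $\alpha$ is close to $1$, cliques with large intersection are abundant; both difficulties are circumvented by letting $\mathcal{T}$ depend on size \emph{and} intersection and by routing every greedy path through its size-$p_0$ sub-clique.
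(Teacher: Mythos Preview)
Your approach is correct and in one respect more direct than the paper's. Both arguments begin identically: since $e^{-\beta}=n^{-\omega(1)}$, no vertex is ever removed in $n^c$ steps, so the process coincides with greedy; and both use the expansion property $\pexp(\eta)$ to control the dynamics below size $(1-\eta)\log n$, together with routing larger targets through their size-$p_0$ subcliques. The genuine difference is in how the hitting probability of a fixed clique is bounded. The paper reuses the reversibility machinery already built for \cref{thm:start-from-empty,thm:start-from-empty-hit-large}: it invokes \cref{lem:reaching-prob} (reversibility plus the auxiliary birth--death chain) together with \cref{lem:extend}, and accordingly splits the bottleneck into three pieces $\Omega^{[q']}_{q,<s}$, $\Omega^{[q']}_{q'<\cdot<q,s}$, $\Omega_{\le q',s}$. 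You instead compute the greedy hitting probability from first principles as $\sum_{\text{orderings}}\prod_i|A(C_{i-1})|^{-1}$ and bound it via $\pexp(\eta)$. Your route is more elementary and self-contained (no reversibility or birth--death comparison needed); the paper's route buys uniformity with the rest of the paper and avoids re-deriving estimates.

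One presentational gap worth patching: your per-clique bound $\Pr(\exists t:G_t=\sigma)\le n^{O(1)}n^{|C_0|}/\E[W_{p_0,0}]$ is only established for $|\sigma|\ge p_0$, yet $\mathcal{T}$ contains cliques in $\Omega_{<p_0,r_0}$ (since you take $\eta<1-\gamma$, i.e.\ $r_0<p_0$). For such small $\sigma$ the direct bound is $n^{O(1)}n^{|C_0|}/\E[W_{|\sigma|,0}]$, which is \emph{weaker} term-by-term. The paper handles this by its third set $\Omega_{\le q',s}$ separately. In your argument the fix is immediate: sum over sizes $p<p_0$ using $W_{p,r_0}\le n^3\E[W_{p,r_0}]$ and note that the exponent of $\E[W_{p,r_0}]/\E[W_{p,0}]$ is size-independent, so this contribution is already dominated by your stated bound $|\Omega_{\le q,r_0}|\cdot n^{O(1)}n^{|C_0|}/\E[W_{p_0,0}]$. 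With that one line added, the proof is complete. (A cosmetic point: the theorem asks for $\rhozero=\rhozero(\alpha,\eps)$; since one may assume $\gamma$ small without loss of generality, your $\gamma$-dependent choice is easily made to conform.)
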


For a subset $\mathcal{S} \subseteq \XX$ of cliques and an integer $p \in \N^+$, let $\mathcal{S}^{[p]}$ denote the collection of all cliques of size $p$ that are subsets of cliques from $\mathcal{S}$, i.e.,
\[
\mathcal{S}^{[p]} = \left\{ C \in \XX_{p,*}: \exists \sigma \in \mathcal{S} \mathrm{~s.t.~} C \subseteq \sigma \right\}.
\]
For $0\le r \le q$ we define $W^{[p]}_{q,r} = \left| \Omega^{[p]}_{q,r} \right|$ and similarly for $W^{[p]}_{q,<r}$, $W^{[p]}_{<q,r}$, etc.
\begin{lemma}\label{lem:extend}
Consider the random graph $\GG(n,\frac{1}{2},k = \floor{n^\alpha})$ with a planted clique conditional on satisfying the property $\pupp(0.1)$. 
For any $0\le p \le q = \rho \log n$ with $\rho \ge 0$ and any $r = \gamma \log n$ with $0\le \gamma \le \rho$ we have
\[
W^{[p]}_{q,r} \le \exp\left[ (\ln2) (\log n)^2 \left( \rho - \frac{\rho^2}{2} - (1-\alpha)\gamma + \frac{\gamma^2}{2} + o(1) \right) \right]
\]
with high probability as $n \to \infty$.
\end{lemma}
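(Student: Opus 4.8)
The plan is to reduce the quantity $W^{[p]}_{q,r}$ to the already-controlled count $W_{q,r}$ by a trivial subset estimate, and then apply the first-moment computation of \cref{basic_rg} together with the conditioning on $\pupp(0.1)$.

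First I would observe that every clique counted by $W^{[p]}_{q,r}$ is a $p$-element subset of some clique of $\XX_{q,r}$, and each clique of size $q$ has exactly $\binom{q}{p}$ subsets of size $p$; a union bound over $\XX_{q,r}$ therefore gives the deterministic inequality $W^{[p]}_{q,r} \le \binom{q}{p}\, W_{q,r} \le 2^{q}\, W_{q,r}$. Since $q = \rho\log n$ with $\rho = O(1)$ in the regime where the lemma is invoked, the factor $2^q = n^{\rho} = \exp(O(\log n)) = \exp(o((\log n)^2))$ is harmless and will be absorbed into the $o(1)$ term of the target bound.

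Next, conditioning on $\pupp(0.1)$ gives $W_{q,r} \le n^{3}\,\E[W_{q,r}]$, and \cref{item:X_qr-expectation} of \cref{basic_rg} evaluates $\E[W_{q,r}] = \exp\!\big[(\ln 2)(\log n)^2\big(\rho - \tfrac{\rho^2}{2} - (1-\alpha)\gamma + \tfrac{\gamma^2}{2} + o(1)\big)\big]$. Chaining $W^{[p]}_{q,r} \le 2^{q}n^{3}\,\E[W_{q,r}]$ and folding the polynomial prefactor $2^q n^3$ into the error term yields precisely the claimed estimate; when $W_{q,r}=0$, as forced by $\pupp(0.1)$ for $q$ large relative to $r$, the inequality is trivial. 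I do not expect a genuine obstacle here, since the lemma is essentially a one-line corollary of the first-moment estimate: $W^{[p]}_{q,r}$ exceeds $W_{q,r}$ by at most the sub-quasipolynomial factor $\binom{q}{p}$, and $\pupp(0.1)$ already supplies the concentration of $W_{q,r}$ about its mean. The only point meriting a word of care is to confirm that in the intended application $\rho$ (equivalently $q/\log n$) is bounded, so that the $2^q$ factor is indeed negligible against the $\exp(\Theta((\log n)^2))$ scale.
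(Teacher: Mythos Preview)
Your proposal is correct and matches the paper's proof essentially line for line: the paper notes that each clique of size $q$ has $\binom{q}{p} \le 2^q \le n^2$ subsets of size $p$ and then invokes $\pupp$, which is exactly your argument. Your added remark that the absorption of the $2^q$ prefactor relies on $\rho$ being bounded is a valid point the paper leaves implicit (it writes $2^q \le n^2$, tacitly using $q \le 2\log n$ as in the application).
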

\begin{proof}
Notice that every clique of size $q$ has $\binom{q}{p} \le 2^q \le n^2$ subsets of size $p$. The lemma then follows immediately from $\pupp(\eps)$.
\end{proof}

We now give the proof of \cref{thm:greedy}.

\begin{proof}[Proof of \cref{thm:greedy}]
By \cref{lem:rg_property1,lem:satisfy-cond}, the random graph $\GG(n,\frac{1}{2},\floor{n^\alpha})$ satisfies both $\pupp(0.1)$ and $\pexp(\eta)$ for $\eta = \eps/2$ with probability $1-o(1)$ as $n\to \infty$.
In the rest of the proof we assume that both $\pupp(0.1)$ and $\pexp(\eta)$ are satisfied.

First, a simple observation is that the process actually never remove vertices. Indeed, the probability of removing a vertex from the current clique in one step is at most $e^{-\beta} = n^{-\omega(1)}$. Since we run the Metropolis process for polynomially many steps, the probability that the chain ever remove a vertex is upper bounded by $\poly(n) \cdot n^{-\omega(1)} =  n^{-\omega(1)}$. Hence, in this low temperature regime the Metropolis process is equivalent to the greedy algorithm.

Without loss of generality, we may assume that
\[
\eps \le \sqrt{ 2\left((1-\alpha)\gamma - \frac{\gamma^2}{2} \right)}.
\]
Let $q = (1+\eps) \log n$ and $s = \gamma\log n$.
Suppose the initial state is a clique $C$ of size $\rhozero \log n$ for $\rhozero \le \eps^2/8$.
Let $\eta = \eps/2$
and $q' = (1-\eta) \log n$.
If the chain arrives at a clique in $\Omega_{q,*}$, it either hits a clique from $\Omega_{q,<s}$ or previously reached a clique in $\Omega_{<q,s}$. 
Similarly, if the chain hits $\Omega_{*,s}$, then it must also reach either $\Omega_{<q,s}$ or $\Omega_{q,<s}$.
Hence, to bound the probability of reaching either $\Omega_{q,*}$ or $\Omega_{*,s}$, we only need to bound the probability of reaching $\Omega_{q,<s}$ or $\Omega_{<q,s}$.
Furthermore, since the process never removes a vertex with high probability in polynomially many steps, in the case this happens it must first reach a clique from $\Omega^{[q']}_{q,<s}$, or $\Omega^{[q']}_{q' < \cdot <q,s}$, or $\Omega_{\le q',s}$.
To summarize, we can deduce from the union bound that
\begin{align*}
&\Pr\Big( \exists t \in \N \wedge t \le T:\, X_t \in \XX_{q,*} \cup \XX_{*,s} \;\Big\vert\; X_0 = C \Big) \\
\le{}&
\Pr\Big( \exists t \in \N \wedge t \le T:\, X_t \in \XX^{[q']}_{q,<s} \;\Big\vert\; X_0 = C \Big) 
+ 
\Pr\Big( \exists t \in \N \wedge t \le T:\, X_t \in \XX^{[q']}_{q' < \cdot <q,s} \;\Big\vert\; X_0 = C \Big) \\
& +
\Pr\Big( \exists t \in \N \wedge t \le T:\, X_t \in \XX_{\le q',s} \;\Big\vert\; X_0 = C \Big) + \frac{1}{n^{\omega(1)}}. \\
\end{align*}
We bound each of the three probabilities respectively.

For the first case, we have from the union bound and $\pupp(0.1)$ that,
\[
\Pr\Big( \exists t \in \N \wedge t \le T:\, X_t \in \XX^{[q']}_{q,<s} \;\Big\vert\; X_0 = C \Big) 
\le
Tn^4 \,\max_{t \in [T]} \,\max_{r \in [s]} \max_{\sigma \in \XX^{[q']}_{q,r}} \E\left[ W^{[q']}_{q,r} \right] \Pr\left( X_t = \sigma \mid X_0 = C \right).
\]
Suppose $r = \gamma' \log n \in [s]$. 
We deduce from \cref{basic_rg,lem:extend} and \cref{lem:reaching-prob} (with $\rho = \rho' = 1-\eta$ and $\gamma = 0$ for the notations of \cref{lem:reaching-prob}) that for every integer $t\ge 1$ and every clique $\sigma \in \XX^{[q']}_{q,r}$ (note that $|\sigma| = q'$ since $q' < q$),
\begin{align*}
& \E\left[ W^{[q']}_{q,r} \right] \Pr\left( X_t = \sigma \mid X_0 = C \right) \\
\le{}& \frac{\E\left[ W^{[q']}_{q,r} \right]}{\E\left[ W_{q',0} \right]} \cdot \E\left[ W_{q',0} \right]  \Pr\left( X_t = \sigma \mid X_0 = C \right) \\
\le{}&
t \exp\left[ (\ln 2)(\log n)^2 \left( (1+\eps) - \frac{1}{2}(1+\eps)^2 - (1-\alpha)\gamma' + \frac{(\gamma')^2}{2} - (1-\eta) + \frac{1}{2}(1-\eta)^2 + \rhozero - \frac{\rhozero^2}{2} + o(1) \right) \right] \\
\le{}& 
t \exp\left[ (\ln 2)(\log n)^2 \left( - \frac{\eps^2}{2} + \frac{\eta^2}{2} + \rhozero + o(1) \right) \right]
\le 
t \exp\left[ (\ln 2)(\log n)^2 \left( - \frac{\eps^2}{4} + o(1) \right) \right],
\end{align*}
where the last inequality follows from
\[
\frac{\eta^2}{2} + \rhozero \le \frac{\eps^2}{8} + \frac{\eps^2}{8} = \frac{\eps^2}{4}.
\]

Next consider the second case, where we have 
\[
\Pr\Big( \exists t \in \N \wedge t \le T:\, X_t \in \XX^{[q']}_{q' < \cdot <q,s} \;\Big\vert\; X_0 = C \Big)
\le
Tn^4 \,\max_{t \in [T]} \,\max_{p \in [q]\setminus [q']} \max_{\sigma \in \XX^{[q']}_{p,s}} \E\left[ W^{[q']}_{p,s} \right] \Pr\left( X_t = \sigma \mid X_0 = C \right).
\]
Suppose $p = \rho \log n \in [q] \setminus [q']$ and so $1-\eta \le \rho \le 1+\eps$. Also recall that $s=\gamma\log n$.
We deduce from \cref{basic_rg,lem:extend} and \cref{lem:reaching-prob} (with $\rho = \rho' = 1-\eta$ and $\gamma = 0$ for the notations of \cref{lem:reaching-prob}) that for every integer $t\ge 1$ and every clique $\sigma \in \XX^{[q']}_{p,s}$ (note that $|\sigma| = q'$ since $q' < q$),
\begin{align*}
& \E\left[ W^{[q']}_{p,s} \right] \Pr\left( X_t = \sigma \mid X_0 = C \right) \\
\le{}& 
\frac{\E\left[ W^{[q']}_{p,s} \right]}{\E[W_{q',0}]} \cdot \E[W_{q',0}] \Pr\left( X_t = \sigma \mid X_0 = C \right) \\
\le{}& 
t \exp\left[ (\ln 2)(\log n)^2 \left( \rho - \frac{\rho^2}{2} - (1-\alpha)\gamma + \frac{\gamma^2}{2} - (1-\eta) + \frac{1}{2}(1-\eta)^2 + \rhozero - \frac{\rhozero^2}{2} + o(1) \right) \right] \\
\le{}& 
t \exp\left[ (\ln 2)(\log n)^2 \left( - \left( (1-\alpha)\gamma - \frac{\gamma^2}{2} \right) + \frac{\eta^2}{2} + \rhozero + o(1) \right) \right] \\
\le{}& 
t \exp\left[ (\ln 2)(\log n)^2 \left( - \frac{1}{2} \left( (1-\alpha)\gamma - \frac{\gamma^2}{2} \right) + o(1) \right) \right],
\end{align*}
where the last inequality follows from
\[
\frac{\eta^2}{2} + \rhozero 
\le \frac{\eps^2}{4}
\le \frac{1}{2} \left( (1-\alpha)\gamma - \frac{\gamma^2}{2} \right).
\]

Finally consider the third case.
Again we have
\[
\Pr\Big( \exists t \in \N \wedge t \le T:\, X_t \in \XX_{\le q',s} \;\Big\vert\; X_0 = C \Big)
\le
Tn^4 \,\max_{t \in [T]} \,\max_{p \in [q']} \max_{\sigma \in \XX_{p,s}} \E[W_{p,s}] \Pr\left( X_t = \sigma \mid X_0 = C \right).    
\]
Suppose $p = \rho \log n \in [q']$ and so $\rho \le 1-\eta$. Also recall that $s=\gamma\log n$.
We deduce from \cref{lem:reaching-prob} that for every integer $t\ge 1$ and every clique $\sigma \in \XX_{p,s}$,

\begin{align*}
\E[W_{p,s}] \Pr\left( X_t = \sigma \mid X_0 = C \right) 
&\le 
t \exp\left[ (\ln 2)(\log n)^2 \left( - \left( (1-\alpha)\gamma - \frac{\gamma^2}{2} \right) + \rhozero - \frac{\rhozero^2}{2} + o(1) \right) \right] \\
&\le 
t \exp\left[ (\ln 2)(\log n)^2 \left( - \frac{3}{4} \left( (1-\alpha)\gamma - \frac{\gamma^2}{2} \right) + o(1) \right) \right],
\end{align*}
where the last inequality follows from
\[
\rhozero \le \frac{\eps^2}{8} \le \frac{1}{4}\left( (1-\alpha)\gamma - \frac{\gamma^2}{2} \right).
\]

Therefore, we conclude that
\[
\Pr\Big( \exists t \in \N \wedge t \le T:\, X_t \in \XX_{q,*} \cup \XX_{*,s} \;\Big\vert\; X_0 = C \Big)
\le \frac{Tn^2}{n^{\Omega(\log n)}} + \frac{1}{n^{\omega(1)}}
\le \frac{1}{n^{\omega(1)}},
\]
as we wanted.
\end{proof}

\section{Simulated Tempering}\label{sec:st}

In this section, we discuss our lower bounds against the simulated tempering versions of the Metropolis process.

\subsection{Definition of the dynamics}\label{sec:st_dfn}

We start with the formal definition.
Suppose for some $m \in \mathbb{N}$ we have a collection of inverse temperatures $\beta_0 < \beta_1 < \dots < \beta_m$. 
For each $i \in [m]$, let $\widehat{Z}(\beta_i)$ denote an estimate of the partition function $Z(\beta_i)$.
The simulated tempering (ST) dynamics is a Markov chain on the state space $\XX \times [m]$, which seeks to optimize a Hamiltonian defined on $\XX \times [m]$, say given according to $H(C)=\ham_{|C|}$ for an arbitrary vector $\{\ham_q, q \in [n]\}$. 
The transition matrix is given as follows.

\begin{itemize}
\item A level move: For $C,C' \in \XX$ and $i \in [m]$ such that $C$ and $C'$ differ by exactly one vertex,
\[
P_{\simt}((C,i), (C',i)) 
= \frac{\pa}{n} \min\left\{ 1, \exp\left[ \beta_i \left( \ham_{|C'|} - \ham_{|C|} \right) \right] \right\}
\]

\item A temperature move: For $i,i' \in [m]$ such that $|i-i'| = 1$,
\[
P_{\simt}((C,i), (C, i')) 
= \frac{1-\pa}{2} \min\left\{ 1, \frac{\widehat{Z}(\beta_i)}{\widehat{Z}(\beta_{i'})} \exp\left[ \left( \beta_{i'} - \beta_i \right) \ham_{|C|} \right] \right\}.
\]
\end{itemize}

Some remarks are in order.

\begin{remark}
The stationary distribution of the ST dynamics can be straightforwardly checked to be given by $\mu(C,i) \propto \frac{Z(\beta_i)}{\widehat{Z}(\beta_i)} \pi_{\beta_i}(C)$, for $\pi_{\beta_i}(C)$ the generic Gibbs measure defined in \cref{eq:Gibbs-general}. 
Notice that if $\widehat{Z}(\beta_i) = Z(\beta_i)$ for all $i \in [m]$ then we have
\[
\mu(C,i) = \frac{1}{m+1} \pi_{\beta_i}(C).
\]and along a single temperature the ST dynamics is identical to the Metropolis process introduced in \cref{sec:met_process}.
\end{remark}

\begin{remark}
The use of estimates $\hat{Z}(\beta_i)$ of the partition function $Z(\beta_i)$ in the definition of the ST dynamics, as opposed to the original values is naturally motivated from applications where one cannot efficiently compute the value of $Z(\beta_i)$ to decide the temperature move step. 
\end{remark}

\subsection{Existence of a Bad Initial Clique}
We now present our lower bound results which are for the ST dynamics under a worst-case initialization.

Our first result is about the ST dynamics failing to reach $\gamma \log n$ intersection with the planted clique, similar to the Metropolis process according to \cref{main_result_mixing}. Interestingly, the lower bound holds \emph{for any choice} of \emph{arbitrarily many} temperatures and \emph{for any choice} of estimators of the partition function.

\begin{theorem}\label{main_result_mixing_st}
Let $\alpha \in (0,1)$ be any fixed constant.
For any constant $\gamma > 0$, the random graph $\GG(n,\frac{1}{2},k = \floor{n^\alpha})$ with a planted clique satisfies the following with probability $1-o(1)$ as $n\to \infty$. 

Consider the general ST dynamics given in \cref{sec:st_dfn} for arbitrary $\ham,$ arbitrary $m \in \mathbb{N}$, arbitrary inverse temperatures $\beta_1<\beta_2<\ldots<\beta_m $, and arbitrary estimates $\hat{Z}(\beta_i), i=1,\ldots,m$. 
Then there is an initialization pair of temperature and clique for the ST dynamics from which it requires $\exp(\Omega(\log^2n))$-time to reach a pair of temperature and clique where the clique is of intersection with the planted clique at least $\gamma\log_2n,$ with probability at least $1-\exp(-\Omega(\log^2n))$. In particular, under worst-case initialization it fails to recover the planted clique in polynomial-time.
\end{theorem}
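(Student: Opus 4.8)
The plan is to reuse verbatim the ``universal'' bottleneck from the proof of \cref{main_result_mixing} and lift it to the enlarged state space $\XX \times [m]$ of the ST dynamics. As in that proof, assume WLOG $0<\gamma\le 1-\alpha$, set $r=\floor{\gamma\log n}$, pick $\eps$ as in \eqref{eq:def-eps-property}, and condition on the event that $\pupp(\eps)$ and $\plow(\eps)$ both hold (w.h.p.\ by \cref{lem:rg_property1}). The key point already observed in the proof of \cref{main_result_mixing} is that the resulting bound
\[
\frac{\pi_\beta(\XX_{*,r})}{\pi_\beta(\XX_{*,\le r})} \;\le\; \exp\!\left(-c\log^2 n\right), \qquad c=c(\alpha,\gamma)=(1-\alpha)\gamma-\frac{\gamma^2}{2}>0,
\]
holds \emph{simultaneously for every inverse temperature $\beta$ and every Hamiltonian vector $\ham$}: the argument only compares $\E[W_{q^*,r}]$ with $\E[W_{q^*,0}]$ after the common factor $\exp(\beta\ham_{q^*})$ cancels. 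I would therefore take, on $\XX\times[m]$,
\[
A=[m]\times\XX_{*,\le r}, \qquad B=[m]\times\XX_{*,r}
\]
as the bottleneck set and its boundary, and apply \cref{lem:conductance} to the ST dynamics, whose stationary distribution is $\mu(C,i)\propto \big(Z(\beta_i)/\widehat Z(\beta_i)\big)\,\pi_{\beta_i}(C)$ (see the remark in \cref{sec:st_dfn}).

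Next I would verify the hypothesis of \cref{lem:conductance}, i.e.\ that $P_{\simt}((C,i),(C',i'))=0$ whenever $(C,i)\in A\setminus B$ and $(C',i')\in A^\ccomp\setminus B$. A temperature move leaves the clique coordinate unchanged, so from a state with clique in $\XX_{*,\le r}$ it cannot exit $A$. A level move changes the clique by exactly one vertex (keeping $i$ fixed), so starting from $(C,i)$ with $|C\cap\PC|\le r-1$ the new clique satisfies $|C'\cap\PC|\le |C\cap\PC|+1\le r$, hence $(C',i')\in A$. Thus one cannot leave $A$ without first landing in $B$, exactly as in the single-temperature case — this is the precise sense in which the bottleneck is ``universal'' across temperatures.

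It then remains to bound $\mu(B)/\mu(A)$. Writing $\mu(C,i)=v_i\,\pi_{\beta_i}(C)$ with $v_i\propto Z(\beta_i)/\widehat Z(\beta_i)\ge 0$ and $\sum_i v_i=1$, and using $\partial\big([m]\times\XX_{*,\le r}\big)=[m]\times\XX_{*,r}$, we get
\[
\frac{\mu(B)}{\mu(A)}=\frac{\sum_{i} v_i\,\pi_{\beta_i}(\XX_{*,r})}{\sum_{i} v_i\,\pi_{\beta_i}(\XX_{*,\le r})}\;\le\;\exp\!\left(-c\log^2 n\right),
\]
since each numerator term is at most $\exp(-c\log^2 n)$ times the matching denominator term (the single-temperature bound applied at $\beta_i$). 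By \cref{lem:conductance} there is a starting pair $(C_0,i_0)\in A$ from which the ST dynamics stays inside $A$ for $t$ steps with probability at least $1-t\exp(-c\log^2 n)$; choosing $t=\exp(\tfrac{c}{2}\log^2 n)$ yields the claimed $\exp(\Omega(\log^2 n))$ hitting time with failure probability $\exp(-\Omega(\log^2 n))$. Finally, to reach a clique coordinate of intersection at least $\gamma\log n\ (\ge r)$ with the planted clique the chain must first cross $B$, because the intersection coordinate moves by at most one per step and is $\le r$ on all of $A$; hence failing to exit $A$ implies failing to recover even a $\gamma\log n$ fraction of $\PC$, and in particular polynomial-time recovery is impossible.

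The argument is essentially mechanical once \cref{main_result_mixing} is available; the only step requiring genuine care is the boundary verification on $\XX\times[m]$, namely checking that temperature moves — the one new ingredient of the ST dynamics relative to the Metropolis process — cannot be exploited to tunnel out of $A$. Since they never touch the clique coordinate, they are harmless, and no large-deviation input beyond \cref{basic_rg}/\cref{lem:rg_property1} is needed.
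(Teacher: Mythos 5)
Your proposal is correct and matches the paper's proof essentially step for step: the same universal bottleneck $A=[m]\times\XX_{*,\le r}$ with boundary $B=[m]\times\XX_{*,r}$, the same per-temperature bound inherited from \cref{main_result_mixing}, and the same weighted-average argument for $\mu(B)/\mu(A)$ followed by \cref{lem:conductance}. The only difference is that you explicitly verify the boundary hypothesis of \cref{lem:conductance} (that temperature moves and one-vertex level moves cannot skip over $B$), which the paper leaves implicit when asserting $\partial\mathcal{A}=\bigcup_i\XX_{*,r}\times\{\beta_i\}$ — a reasonable addition but not a different route.
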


\begin{proof}
Throughout the proof we assume that both $\pupp(\eps)$ and $\plow(\eps)$ are satisfied for $\eps>0$ given by \cref{eq:def-eps-property}, which happens with probability $1-o(1)$ as $n\to \infty$ by \cref{lem:rg_property1}.

Notice that we can assume without loss of generality that $\gamma$ satisfies $0<\gamma < 2(1-\alpha)$. We let $r=\floor{\gamma\log n}$. Now from the proof of \cref{main_result_mixing} we have that for any such $\gamma,$ 
there exists a constant $c=c(\alpha,\gamma)>0$ such that for all $\beta_i, i=1,\ldots,m$ and the corresponding $\pi_{\beta_i}, i=1,\ldots,m$ Gibbs measure per \cref{eq:Gibbs-general},
\begin{align}\label{eq:goal_cond_st}
  \frac{ \pi_{\beta_i}(\XX_{*,r}) }{ \pi_{\beta_i}(\XX_{*,\le r}) } 
  = \frac{Z_{*,r}}{Z_{*,\le r}}
  \le \exp\left(-c \log^2 n \right),
\end{align}
w.h.p. as $n \rightarrow +\infty.$
We now consider the set $\mathcal{A}=\bigcup_{i \in [m]} \XX_{*, \leq r} \times \{\beta_i\} $ the subset of the state space of the ST dynamics, and notice $\partial \mathcal{A} = \bigcup_{i \in [m]} \XX_{*, r} \times \{\beta_i\},$ where $\partial A$ is the boundary of $A$.
In particular using \eqref{eq:goal_cond_st} we conclude that w.h.p. as $n \rightarrow +\infty$ 
\begin{align}\label{eq:goal_cond_st_st_2}
  \frac{ \mu(\partial \mathcal{A}) }{ \mu(\mathcal{A}) } & = \frac{ \sum_{i=1}^m \frac{Z(\beta_i)}{\widehat{Z}(\beta_i)}\pi_{\beta_i}(\XX_{*,r}) }{ \sum_{i=1}^m\frac{Z(\beta_i)}{\widehat{Z}(\beta_i)}\pi_{\beta_i}(\XX_{*,\le r}) }\le \exp\left(-c \log^2 n \right),
\end{align} Given \cref{eq:goal_cond_st_st_2}, \cref{main_result_mixing_st} is an immediate consequence of \cref{lem:conductance}.
\end{proof}

Our second result is about the ST dynamics under the additional restriction that $\max_{i \in [m]} |\beta_i|=O(\log n).$ In this case, similar to the Metropolis process per \cref{thm:large-clique}, we show that the ST dynamics fail to reach either $(1+\epsilon) \log n$-cliques or cliques with intersection at least $\gamma \log n$ with the planted clique. Interestingly, again, the lower bound holds \emph{for any choice} of \emph{arbitrarily many} temperatures of magnitude $O(\log n)$ and \emph{for any choice} of estimators of the partition function.

\begin{theorem}\label{thm:large-clique_st}
Let $\alpha \in [0,1)$ be any fixed constant.
Then the random graph $\GG(n,\frac{1}{2},k = \floor{n^\alpha})$ with a planted clique satisfies the following with probability $1-o(1)$ as $n\to \infty$. 

Consider the general ST dynamics given in \cref{sec:st_dfn} for arbitrary $\ham$ satisfying \cref{ass:ham}, arbitrary $m \in \mathbb{N}$, arbitrary inverse temperatures $\beta_1<\beta_2<\ldots<\beta_m $ with $\max_{i \in [m]}|\beta_i|=O(\log n)$, and arbitrary estimates $\hat{Z}(\beta_i), i=1,\ldots,m$. 
For any constants $\eps \in (0,1-\alpha)$ and $\gamma \in (0,1-\alpha]$, there is an initialization pair of temperature and clique for the ST dynamics from which it requires $\exp(\Omega(\log^2n))$-time to reach a pair of temperature and clique where
\begin{itemize}
    \item either the clique is of size at least $(1+\eps)\log n$, 
    \item or the clique is of intersection at least $\gamma \log n$ with the planted clique,
\end{itemize}
with probability at least $1-\exp(-\Omega(\log^2n))$.
\end{theorem}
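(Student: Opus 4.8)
The plan is to reduce everything to the proof of \cref{thm:large-clique}, exploiting the fact that the bottleneck constructed there does not depend on the inverse temperature, and then to lift that bottleneck to the product state space $\Omega \times [m]$ on which the ST dynamics lives. Concretely, by \cref{lem:rg_property1,lem:gateway} we may assume $\pupp(\eps_0)$, $\plow(\eps_0)$ (with $\eps_0=\alpha\le 1-\eps$) and $\pgw$ all hold. Set $\hat\beta_{\max} = \max_{i\in[m]}|\beta_i|/((\ln 2)\log n)$, which is $O(1)$ by hypothesis, and fix a constant $\theta\in(0,\eps/3)$ with $\hat\beta_{\max}\,\theta \le \frac12\big((1-\alpha)\gamma-\frac{\gamma^2}{2}\big)$. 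With $q=(1+\eps)\log n$, $p=(1+\eps-\theta)\log n$, $r=\gamma\log n$, put $\mathcal B = (\Psi_q\cap\Omega_{p,<r})\cup\Omega_{<q,r}$ and let $\mathcal A$ be the cliques reachable from $\emptyset$ along a path avoiding $\mathcal B$ except possibly at its endpoint, exactly as in the proof of \cref{thm:large-clique}. The point I would stress is that $\mathcal A$ and $\mathcal B$ depend only on $\alpha,\eps,\gamma,\theta$, not on any particular $\beta_i$, and that the estimate proved there, $Z(\beta;\mathcal B)/Z(\beta;\mathcal A)\le\exp(-c\log^2 n)$ for a constant $c=c(\alpha,\eps,\gamma)>0$, was derived using only $\hat\beta\le\hat\beta_{\max}$; hence $\pi_{\beta_i}(\mathcal B)/\pi_{\beta_i}(\mathcal A)\le\exp(-c\log^2 n)$ holds for \emph{all} $i\in[m]$ with the same $\mathcal A,\mathcal B,c$.

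Next I would lift this to the ST chain. Set $\widetilde{\mathcal A}=\mathcal A\times[m]$ and $\widetilde{\mathcal B}=\mathcal B\times[m]$ inside $\Omega\times[m]$ and check the hypothesis of \cref{lem:conductance}: if $(C,i)\in\widetilde{\mathcal A}\setminus\widetilde{\mathcal B}$ then $C\in\mathcal A\setminus\mathcal B$, a temperature move leaves $C$ unchanged (so stays in $\widetilde{\mathcal A}$), and a level move replaces $C$ by an adjacent clique $C'$, which by \cref{claim:AB}(1) still lies in $\mathcal A$; thus $P_{\simt}((C,i),(C',i'))=0$ whenever $(C,i)\in\widetilde{\mathcal A}\setminus\widetilde{\mathcal B}$ and $(C',i')\notin\widetilde{\mathcal A}$. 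For the stationary measure, recall $\mu(C,i)\propto \tfrac{Z(\beta_i)}{\widehat Z(\beta_i)}\pi_{\beta_i}(C)$, so writing $w_i=Z(\beta_i)/\widehat Z(\beta_i)\ge 0$ and using $\sum_i w_i a_i/\sum_i w_i b_i\le\max_i a_i/b_i$ for $w_i,b_i\ge 0$,
\[
\frac{\mu(\widetilde{\mathcal B})}{\mu(\widetilde{\mathcal A})}
= \frac{\sum_{i} w_i\,\pi_{\beta_i}(\mathcal B)}{\sum_{i} w_i\,\pi_{\beta_i}(\mathcal A)}
\le \max_{i\in[m]}\frac{\pi_{\beta_i}(\mathcal B)}{\pi_{\beta_i}(\mathcal A)}
\le \exp\!\big(-c\log^2 n\big).
\]
Applying \cref{lem:conductance} with $t=\exp(\tfrac c2\log^2 n)$ produces an initialization pair $(C_0,i_0)\in\widetilde{\mathcal A}$ from which the ST dynamics stays in $\widetilde{\mathcal A}$ for $\exp(\Omega(\log^2 n))$ steps, except with probability $\exp(-\Omega(\log^2 n))$.

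Finally I would argue that leaving $\widetilde{\mathcal A}$ is forced before either target is reached. The clique coordinate of the ST trajectory evolves by single-vertex steps (temperature moves do not change it), so to reach a clique of size $\ge(1+\eps)\log n$ it must visit $\Omega_{q,*}$, which by \cref{claim:AB}(2) lies in $\mathcal A^{\ccomp}\setminus\mathcal B$; and to reach a clique of intersection $\ge\gamma\log n$ it must either visit $\Omega_{<q,r}\subseteq\mathcal B$ or first visit $\Omega_{q,*}$ — in either case it exits $\widetilde{\mathcal A}$. Combining with the previous paragraph gives the stated $\exp(\Omega(\log^2 n))$ hitting-time lower bound.

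I do not expect a genuine obstacle here: the entire content beyond \cref{thm:large-clique} is the two bookkeeping observations that $\theta$ (hence $\mathcal A,\mathcal B$) can be selected once and for all using $\hat\beta_{\max}$ in place of $\hat\beta$, and that the boundary condition of \cref{lem:conductance} survives the passage to $\Omega\times[m]$ because both level moves (via \cref{claim:AB}(1)) and temperature moves keep $\widetilde{\mathcal A}$ invariant. The ``universality'' of the bottleneck set in $\beta$ is exactly what makes this work, and is the only step I would flag as the crux of the argument.
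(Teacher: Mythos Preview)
Your proposal is correct and follows essentially the same approach as the paper: pick a single $\theta$ governed by $\max_i|\hat\beta_i|$, reuse the temperature-independent bottleneck $(\mathcal A,\mathcal B)$ from the proof of \cref{thm:large-clique}, lift it to $\mathcal A\times[m]$, and apply \cref{lem:conductance} after bounding $\mu(\mathcal B\times[m])/\mu(\mathcal A\times[m])$ via the mediant inequality. If anything, you are slightly more explicit than the paper in verifying the boundary hypothesis of \cref{lem:conductance} under both level and temperature moves and in arguing that either target forces an exit from $\widetilde{\mathcal A}$.
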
 

\begin{proof}
Throughout the proof we assume that $\pupp(\eps_0)$, $\plow(\eps_0)$, and $\pgw$ are all satisfied for $\eps_0 = \alpha \le 1-\eps$, which happens with probability $1-o(1)$ as $n\to \infty$ by \cref{lem:rg_property1,lem:gateway}. 

We start with following the proof of Theorem \ref{thm:large-clique}.
For $i=1,\ldots,m$ let $\hat{\beta}_i$ be such that $\hat{\beta}_i=\beta_i/ ((\ln 2) ( \log n))$. By assumption we have $\max_{i \in [m]}|\hat{\beta_i}| = O(1)$. 
Pick a constant $\theta \in (0,\eps/3)$ such that for all $i \in [m]$
\[
\hat{\beta}_i \theta \le \frac{1}{2} \left( (1-\alpha)\gamma - \frac{\gamma^2}{2} \right). 
\]
Let $q = (1+\eps) \log n$, $p = (1+\eps-\theta) \log n$, and $r = \gamma \log n$. 
Let also
\[
\mathcal{B} = \left( \Psi_{q} \cap \Omega_{p,<r} \right) \cup \Omega_{<q,r}.
\] 
Let also $\mathcal{A} \subseteq \XX$ denote the collection of cliques that are reachable from the empty clique through a path (i.e. a sequence of cliques where each adjacent pair differs by exactly one vertex) not including any clique from $\mathcal{B}$ except possibly for the destination.

From the proof of Theorem \ref{thm:large-clique} we have that there exists a constant $c=c(\alpha,\gamma,\theta)>0$ such that for all $\beta_i, i=1,\ldots,m$ and the corresponding $\pi_{\beta_i}, i=1,\ldots,m$ being the Gibbs measure per \cref{eq:Gibbs-general},
\begin{align}\label{eq:goal_cond_st_2}
  \frac{ \pi_{\beta_i}(\partial \mathcal{A}) }{ \pi_{\beta_i}(\mathcal{A}) } 
 \le \frac{ \pi_{\beta_i}(\mathcal{B})}{ \pi_{\beta_i}(\mathcal{A}) } 
  \le \exp\left(-c \log^2 n \right),
\end{align}
w.h.p. as $n \rightarrow +\infty.$
We now consider the set $\mathcal{G}=\bigcup_{i \in [m]} \mathcal{A} \times \{\beta_i\} $ the subset of the state space of the ST dynamics, and notice $\partial \mathcal{G} = \bigcup_{i \in [m]} \partial \mathcal{A} \times \{\beta_i\}.$
In particular using \eqref{eq:goal_cond_st_2} we conclude that w.h.p. as $n \rightarrow +\infty$ 
\begin{align}\label{eq:goal_cond_st_st_3}
  \frac{ \mu(\partial \mathcal{G}) }{ \mu(\mathcal{G}) } & = \frac{ \sum_{i=1}^m \frac{Z(\beta_i)}{\widehat{Z}(\beta_i)}\pi_{\beta_i}(\partial \mathcal{A}) }{ \sum_{i=1}^m\frac{Z(\beta_i)}{\widehat{Z}(\beta_i)}\pi_{\beta_i}(\mathcal{A}) }\le \exp\left(-c \log^2 n \right),
\end{align} Given \cref{eq:goal_cond_st_st_3}, \cref{main_result_mixing_st} is an immediate consequence of \cref{lem:conductance}.
\end{proof}

\subsection{Starting From the Empty Clique}


\begin{theorem}\label{thm:Simulated-Tempering-empty}
Let $\alpha \in [0,1)$ be any fixed constant.
Then the random graph $\GG(n,\frac{1}{2},k = \floor{n^\alpha})$ with a planted clique satisfies the following with probability $1-o(1)$ as $n\to \infty$. 

Consider the general ST dynamics given in \cref{sec:st_dfn} for monotone $1$-Lipschitz $\ham$ with $h_0 = 0$, arbitrary inverse temperatures $\beta_0<\beta_1<\ldots<\beta_m $ with $m = o(\log n)$ and $\beta_m = O(1)$, and arbitrary estimates $\hat{Z}(\beta_0) < \hat{Z}(\beta_1) < \cdots < \hat{Z}(\beta_m)$. 
 For any constants $c \in \N$, $\eps \in (0,1-\alpha)$, and $\gamma \in (0,1-\alpha]$, the ST dynamics starting from $(\emptyset, 0)$ will not reach within $n^c$ steps a pair of temperature and clique where
\begin{itemize}
    \item either the clique is of size at least $(1+\eps)\log n$, 
    \item or the clique is of intersection at least $\gamma \log n$ with the planted clique,
\end{itemize}
with probability $1-n^{-\omega(1)}$.
\end{theorem}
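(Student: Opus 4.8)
The plan is to lift the empty-clique analysis of \cref{thm:start-from-empty,thm:start-from-empty-hit-large} from the clique space $\XX$ to the product space $\XX\times\{0,\dots,m\}$ of the ST dynamics. First I would fix a sufficiently small constant $\eta=\eta(\alpha,\eps,\gamma)>0$ (chosen exactly as in the proofs of \cref{thm:start-from-empty,thm:start-from-empty-hit-large}) and condition on the event that $\pupp(0.1)$ and $\pexp(\eta)$ both hold, which by \cref{lem:rg_property1,lem:satisfy-cond} has probability $1-o(1)$. Put $q=\floor{(1+\eps)\log n}$ and $s=\floor{\gamma\log n}$. Since the size and the intersection of the current clique each change by at most one per step, the first state violating either threshold lies in $\bigcup_{r\le s}\XX_{q,r}\cup\bigcup_{p\le q}\XX_{p,s}$, so by the union bound it suffices to produce a constant $c'=c'(\alpha,\eps,\gamma)>0$ with
\[
\Pr\big( X_t\in\XX_{p,r} \,\big\vert\, (X_0,I_0)=(\emptyset,0) \big) \le t\, n^{-c'\log n}
\]
for every integer $t\ge 1$ and every relevant pair $(p,r)$; summing over the $O(\log n)$ such pairs and the $T=n^c$ values of $t$ then gives $n^{-\omega(1)}$.

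The core is an ST-analogue of \cref{lem:reaching-prob}, obtained by the same time-reversal plus birth-and-death comparison. Fix $\sigma\in\XX_{p,r}$, $|\sigma|=p$, and a level index $j$. The ST dynamics is reversible with respect to $\mu(C,i)\propto e^{\beta_i\ham_{|C|}}/\widehat Z(\beta_i)$, so by \cref{fact:reversibility} and $\ham_0=0$,
\[
\Pr\big( (X_t,I_t)=(\sigma,j) \,\big\vert\, (X_0,I_0)=(\emptyset,0) \big)
= \frac{\widehat Z(\beta_0)}{\widehat Z(\beta_j)}\, e^{\beta_j \ham_p}\,
\Pr\big( (X_t,I_t)=(\emptyset,0) \,\big\vert\, (X_0,I_0)=(\sigma,j) \big).
\]
Because the estimates are increasing, $\widehat Z(\beta_0)/\widehat Z(\beta_j)\le 1$, and because $\ham$ is monotone $1$-Lipschitz with $\ham_0=0$ and $|\beta_j|\le|\beta_m|=O(1)$, also $e^{\beta_j\ham_p}\le e^{|\beta_m|p}=n^{O(1)}$, so the prefactor is only polynomial. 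To bound the reversed probability I would use $\Pr((X_t,I_t)=(\emptyset,0)\mid(X_0,I_0)=(\sigma,j))\le\Pr(\exists t'\le t:\,|X_{t'}|=0\mid X_0=\sigma)$ and compare $\{|X_t|\}$ with the lazy birth-and-death chain $\{Y_t\}$ on $\{0,\dots,\floor{(1-\eta)\log n}\}$ given by $P(\size,\size-1)=\pa\size/n$ and $P(\size,\size+1)=c_\pa\pa/(20\cdot 2^\size)$ for $\size<\floor{(1-\eta)\log n}$ (and $0$ otherwise), with $c_\pa=e^{-|\beta_m|}=\Theta(1)$. The key point is that this single chain stochastically dominates $\{|X_t|\}$ from below \emph{uniformly over the temperature trajectory}: in any ST step, whatever the current level $i$ and whether a level or a temperature move is attempted, the clique size decreases by one with probability at most $\pa|X_{t-1}|/n$ (monotonicity of $\ham$ caps the downward acceptance at $1$) and, as long as $|X_{t-1}|\le(1-\eta)\log n$, increases by one with probability at least $\pa|A(X_{t-1})|/n\ge c_\pa\pa/(20\cdot 2^{|X_{t-1}|})$ (monotonicity of $\ham$ makes the upward acceptance at least $c_\pa$, and $\pexp(\eta)$ lower-bounds $|A(X_{t-1})|$). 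A step-by-step coupling as in \cref{lem:stochastic-dominance} gives $Y_t\le|X_t|$, and then, exactly as in \cref{lem:Y-ub-prob} (reversibility of $\{Y_t\}$, \cref{fact:bd-stationary}, and a first-hitting-time argument for starting values above $(1-\eta)\log n$),
\[
\Pr(\exists t'\le t:\,Y_{t'}=0\mid Y_0=p)\le t\,\frac{\nu(0)}{\nu(p')}
= t\exp\Big[ (\ln2)(\log n)^2\big( -\rho'+\tfrac{(\rho')^2}{2}+o(1) \big) \Big],
\]
where $\rho=p/\log n$, $\rho'=\min\{\rho,1-\eta\}$, $p'=\rho'\log n$; the laziness $\pa$ cancels in $\nu(0)/\nu(p')$ and the residual factor $c_\pa^{-p'}=n^{O(1)}$ is absorbed into the $o(1)$.

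Combining these with $W_{p,r}\le n^3\E[W_{p,r}]$ ($\pupp(0.1)$), summing over $\sigma\in\XX_{p,r}$ and over the $m+1=o(\log n)$ values of $j$, and using \cref{basic_rg}, gives
\[
\Pr\big( X_t\in\XX_{p,r} \,\big\vert\, (X_0,I_0)=(\emptyset,0) \big)
\le t\exp\Big[ (\ln2)(\log n)^2\Big( \big(\rho-\tfrac{\rho^2}{2}\big)-\big(\rho'-\tfrac{(\rho')^2}{2}\big)-\big((1-\alpha)\gamma-\tfrac{\gamma^2}{2}\big)+o(1) \Big) \Big],
\]
which is exactly the bound of \cref{lem:reaching-prob} with $\hat\beta=0$ and $\rhozero=0$ (here $\hat\beta_j=\beta_j/((\ln2)\log n)=o(1)$ since $\beta_j=O(1)$, so the $\hat\beta$-terms disappear). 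Showing this exponent is $\le-c_0(\alpha,\eps,\gamma)<0$ for all relevant pairs — i.e. for all $\rho\le1+\eps$ when $r=s$, and for $r\le s$ when $\rho=1+\eps$ — is then the \emph{same} case analysis as in the proofs of \cref{thm:start-from-empty} (pairs with $r=s$) and \cref{thm:start-from-empty-hit-large} (pairs with $\rho=1+\eps$), using that we may assume $\eps\le1-\alpha$, $\gamma<1-\alpha$, and that $\eta$ was chosen small enough relative to both $\eps$ and $\gamma$. This produces $c'$ and finishes the argument.

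\paragraph{Main obstacle.} The one genuinely new ingredient is the uniform domination $Y_t\le|X_t|$: one must check that the adversarial temperature schedule can neither accelerate the upward drift of the clique size nor decelerate its downward drift in a schedule-dependent way. This is exactly where the hypotheses that $\ham$ is monotone (so the upward acceptance is $1$ and the downward acceptance $\le 1$, both independent of the level) and that $\max_i|\beta_i|=O(1)$ (so the downward acceptance stays bounded below by a constant and the reversibility prefactor $e^{\beta_j\ham_p}$ stays $n^{O(1)}$) are used crucially; $m=o(\log n)$ serves only to keep the union bound over levels subexponential. Once the domination is in place, everything reduces to the one-dimensional estimates already established in \cref{sec:large_ov,sec:bad_large}.
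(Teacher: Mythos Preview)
Your approach is correct and takes a genuinely simpler route than the paper. Both arguments time-reverse the ST dynamics and compare with an auxiliary process, but the paper constructs a \emph{two-dimensional} comparison chain $(Y_t,J_t)$ on $[n]\times[m]$ that tracks the temperature index along with the clique size, proves a coordinate-wise domination $(Y_t,J_t)\le(|X_t|,I_t)$ (\cref{lem:stochastic-dominance-2d}), and computes its stationary measure explicitly. The payoff of carrying the second coordinate is that the factors $\widehat Z(\beta_\ell)$ and $e^{\beta_\ell\ham_q}$ in the reversibility prefactor \cref{eq:reversible-st-eq} cancel \emph{exactly} against those in $\nu((0,0))/\nu((q,\ell))$, so neither has to be bounded separately. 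The cost is that when the two-dimensional chain first reaches size $(1-\eta)\log n$ it may be at a different level $\ell'$, and one must then control $\widehat Z(\beta_{\ell'})/\widehat Z(\beta_\ell)$; the paper handles this by a preliminary pruning reduction (\cref{eq:BS,eq:BS2}) that removes temperatures whose estimate ratios are too small, and this is the only place where $m=o(\log n)$ enters essentially.

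You instead collapse the temperature axis by taking worst-case acceptance probabilities over all levels and compare $|X_t|$ directly with a single one-dimensional birth-death chain. This sidesteps both the two-dimensional coupling and the pruning step; $m=o(\log n)$ survives only as the trivial union bound over levels (indeed even $m=\poly(n)$ would suffice there). One caveat: your uniform lower bound $c_\pa=e^{-|\beta_m|}$ on the upward acceptance actually requires $\beta_0\ge -O(1)$, since for $h$ increasing and $\beta_0$ large negative the upward acceptance $e^{\beta_0(\ham_{\size+1}-\ham_\size)}$ at level $0$ can be arbitrarily small, which would break the domination $Y_t\le|X_t|$. Under $\max_i|\beta_i|=O(1)$ (as in the informal \cref{thm:st_empty}) your argument goes through cleanly; the formal hypothesis here bounds only $\beta_m$, and to allow $\beta_0\to-\infty$ one really does need to track the temperature coordinate as the paper does.
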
 



In what follows we condition on both $\pupp(0.1)$ and $\pexp(\eta)$ for $\eta = \eps/2$, which by \cref{lem:rg_property1,lem:satisfy-cond} hold with probability $1-o(1)$ as $n\to \infty$.
Also, we may assume that
\begin{equation}\label{eq:BS}
\frac{\widehat{Z}(\beta_i)}{\widehat{Z}(\beta_{i+1})} \ge \frac{1}{n^{\sqrt{\frac{\log n}{m}}}}
\end{equation}
for all $i = 0,1,\dots,m-1$. 
Otherwise, for any clique $C$ of size $O(\log n)$ one has
\[
P_{\simt}((C,i), (C, i+1)) 
= \frac{1-\pa}{2} \min\left\{ 1, \frac{\widehat{Z}(\beta_i)}{\widehat{Z}(\beta_{i+1})} \exp\left[ \left( \beta_{i+1} - \beta_i \right) \ham_{|C|} \right] \right\}
\le
\frac{e^{O(\log n)}}{n^{\sqrt{\frac{\log n}{m}}}}
= \frac{1}{n^{\omega(1)}}
\]
since $\beta_m = O(1)$, $h_{|C|} \le |C| = O(\log n)$ and $m=o(\log n)$.
This means that the chain with high probability will never make a move to the inverse temperature $\beta_{i+1}$ in polynomially many steps, unless already having clique size, say, $\ge 10 \log n$. 
Since we are studying reaching cliques of size $(1+\eps)\log n$, we may assume that \cref{eq:BS} holds for all $i$, by removing those temperatures violating \cref{eq:BS} and those larger since the chain does not reach them with high probability in $\poly(n)$ steps.  
An immediate corollary of \cref{eq:BS} is that
\begin{equation}\label{eq:BS2}
\frac{\widehat{Z}(\beta_m)}{\widehat{Z}(\beta_{0})} 
\le n^{m\sqrt{\frac{\log n}{m}}}
= n^{\sqrt{m\log n}}
= n^{o(\log n)}.
\end{equation}

Let $q = (1+\eps) \log n$ and $s = \gamma \log n$. By the union bound we have
\begin{align}
&\Pr\Big( \exists t \in \N \wedge t \le T:\, (X_t,I_t) \in (\XX_{q,<s} \cup \XX_{<q,s}) \times [m] \;\Big\vert\; (X_0,I_0) = (\emptyset,0) \Big) \notag\\
\le{}&
T(m+1)n^4 \,\max_{t \in [T]} \,\max_{\ell \in [m]} \,\max_{r \in [s]} \max_{\sigma \in \XX_{q,r}} \E[W_{q,r}] \Pr\left( (X_t,I_t) = (\sigma,\ell) \mid (X_0,I_0) = (\emptyset,0) \right) \nonumber\\
&+ T(m+1)n^4 \,\max_{t \in [T]} \,\max_{\ell \in [m]} \,\max_{p \in [q]} \max_{\sigma \in \XX_{p,s}} \E[W_{p,s}] \Pr\left( (X_t,I_t) = (\sigma,\ell) \mid (X_0,I_0) = (\emptyset,0) \right)
\label{eq:goal-st-sfe1}
\end{align}
We will show that for all integer $t \ge 1$, all integer $\ell \in [m]$, all integer $r \le s$, and all clique $\sigma \in \XX_{q,r}$, it holds
\begin{equation}\label{eq:goal-st-sfe2}
\E[W_{q,r}] \Pr\left( (X_t,I_t) = (\sigma,\ell) \mid (X_0,I_0) = (\emptyset,0) \right) \le n^{-\omega(1)}, 
\end{equation}
and all integer $p \le q$, and all clique $\sigma \in \XX_{p,s}$, it holds
\begin{equation}\label{eq:goal-st-sfe3}
\E[W_{p,s}] \Pr\left( (X_t,I_t) = (\sigma,\ell) \mid (X_0,I_0) = (\emptyset,0) \right) \le n^{-\omega(1)}, 
\end{equation} 
The theorem then follows from \cref{eq:goal-st-sfe1,eq:goal-st-sfe2,eq:goal-st-sfe3}.

It will be helpful to consider the time-reversed dynamics and try to bound the probability of reaching $\emptyset$ when starting from a large clique $\sigma$. By reversibility, we have
\begin{equation}\label{eq:reversible-st-eq}
\Pr\left( (X_t,I_t) = (\sigma,\ell) \mid (X_0,I_0) = (\emptyset,0) \right)
= { \frac{\widehat{Z}(\beta_0)}{\widehat{Z}(\beta_\ell)} \frac{\exp\left( \beta_\ell \ham_q \right)}{\exp\left( \beta_0 \ham_0 \right)} } \Pr\left( (X_t,I_t) = (\emptyset,0) \mid (X_0,I_0) = (\sigma,\ell) \right).
\end{equation}
which is an application of \cref{fact:reversibility}.

Let $\eta \in (0,1)$ be a constant. 
Introduce a random walk $\{(Y_t,J_t)\}$ on $[n] \times [m]$ with transition matrix $P$ given by
\begin{align*}
P\left( (\size,j), (\size-1,j) \right) &= \frac{\pa \size}{n} \min \left\{ 1, \exp\left[ \beta_j \left( \ham_{\size-1} - \ham_\size \right) \right] \right\};\\
P\left( (\size,j), (\size+1,j) \right) &= \begin{cases}
\dfrac{\pa}{20 \cdot 2^\size} \min \left\{ 1, \exp\left[ \beta_j \left( \ham_{\size+1} - \ham_\size \right) \right] \right\}, & \text{for~} 0\le \size < \floor{(1-\eta) \log n}; \vspace{0.2em}\\
0, & \text{for~} \floor{(1-\eta) \log n} \le \size \le n;
\end{cases}\\
P\left( (\size,j), (\size,j') \right) &= \frac{1-\pa}{2} \min \left\{ 1, \frac{\widehat{Z}(\beta_j)}{\widehat{Z}(\beta_{j'})} \exp\left[ \left( \beta_{j'}-\beta_j \right) \ham_\size \right] \right\} \quad \text{for $j' = j \pm 1$}.
\end{align*}
and for all $(\size,j)$,
\[
P\left( (\size,j),(\size,j) \right) = 1 - \sum_{\size'=\size\pm 1} P\left( (\size,j),(\size',j) \right) - \sum_{j'=j \pm 1} P\left( (\size,j),(\size,j') \right).
\]
To be more precise, the above definition of $P\left( (\size,j),(\size',j') \right)$ applies when $(\size',j') \in [n] \times [m]$ and we assume $P\left( (\size,j),(\size',j') \right) = 0$ if $(\size',j') \notin [n] \times [m]$, e.g., when $j = 0$ and $j' = -1$.

We now calculate the stationary distribution of $P$ on states $(s,i)$ when restricted to $ \size  \leq \floor{(1-\eta) \log n}$. We start with proving that the random walk is time-reversible. Note that the random walk introduced is clearly aperiodic, positive recurrent and irreducible. Hence, by the Kolmogorov's criterion the random walk is time-reversible if and only if for any cycle in the state space, the probability the random walk moves along the cycle in one direction equals to the probability of moving in the opposite direction. Given that the minimal cycles in the finite box $[n] \times [m]$ are simply squares of the form $\{s, s+1 \} \times \{j, j+1\}$, it suffices to show that  for any $s \in [n-1], j \in [m-1],$ the criterion solely for these cycles, that is to show
\begin{align*}
  &P\left( (\size,j),(\size+1,j) \right)P\left( (\size+1,j),(\size+1,j+1) \right)P\left( (\size+1,j+1),(\size+1,j) \right)P\left( (\size+1,j),(\size,j) \right)  \\
  =&P\left( (\size,j),(\size,j+1) \right)P\left( (\size,j+1),(\size+1,j+1) \right)P\left( (\size+1,j+1),(\size,j+1) \right)P\left( (\size,j+1),(\size,j) \right),
\end{align*}which can be straightforwardly checked to be true.

We now calculate the stationary distribution. Using the reversibility and that $h_{\ell}$ is monotonically increasing in $\ell \in \mathbb{Z}$, we have for arbitrary $s,j$

\begin{align*}
\nu((s,j)) &= \nu((0,j)) \prod_{t = 1}^{s} \frac{P\left( (t,j), (t-1,j) \right)}{P\left( (t-1,j), (t,j) \right)} \\
&= \nu((0,j)) \prod_{t =1}^{s} \frac{ \frac{a t}{n} \min \left\{ \exp\left[ \beta_j \left( \ham_{t-1} - \ham_t \right) \right], 1 \right\} }{ \frac{a}{20 \cdot 2^{t}} \min \left\{ \exp\left[ \beta_j \left( \ham_t - \ham_{t-1} \right) \right], 1 \right\} } \\
&= \nu((0,j)) \prod_{t=1}^{s} \frac{20t \cdot 2^{t}}{n} \exp\left[ \beta_j \left( \ham_{t-1} - \ham_t \right) \right] \\
&= \nu((0,j)) \frac{20^{s} \cdot s!  \cdot 2^{\binom{s}{2}}}{n^{s}} \exp\left[ \beta_j (\ham_0-  \ham_s) \right].
\end{align*}Furthermore, since $h_0=0,$ we have again by reversibility,

\begin{align*}
\nu((0,j)) &= \nu((0,1)) \prod_{t = 1}^{j} \frac{P\left( (0,t), (0,t-1) \right)}{P\left( (0,t-1), (0,t) \right)} \\
&=  \nu((0,1))\prod_{t =1}^{j} \frac{ \frac{1-\pa}{2} \min \left\{ 1, \frac{\widehat{Z}(\beta_{t})}{\widehat{Z}(\beta_{t-1})} \exp\left[ \left( \beta_{t-1}-\beta_{j} \right) \ham_0 \right] \right\} }{ \frac{1-\pa}{2} \min \left\{ 1, \frac{\widehat{Z}(\beta_{t-1})}{\widehat{Z}(\beta_{t})} \exp\left[ \left( \beta_{t}-\beta_{t-1} \right) \ham_0 \right] \right\} } \\
&= \nu((0,1)) \prod_{t =1}^{j}   \frac{\widehat{Z}(\beta_{t})}{\widehat{Z}(\beta_{t-1})}   \\
& \propto  \widehat{Z}(\beta_{j}).
\end{align*} Combining the above, we conclude\begin{align}\label{eq:stat_distr_temp}
\nu((s,j)) & \propto  \widehat{Z}(\beta_{j})\frac{20^{s} \cdot s!  \cdot 2^{\binom{s}{2}}}{n^{s}} \exp\left[ -\beta_j  \ham_s \right].
\end{align}

The following lemma shows that $(Y_t,J_t)$ is stochastically dominated by the pair $(|X_t|,I_t)$.

\begin{lemma}\label{lem:stochastic-dominance-2d}
Let $\{(X_t,I_t)\}$ denote the Simulated Tempering process starting from some $X_0 = \sigma \in \XX_{q,*}$ and $I_0 = \ell$.
Let $\{(Y_t,J_t)\}$ denote the stochastic process described above with parameter $\eta \in (0,1)$ starting from $Y_0 = q$ and $J_0 = \ell$.
Assume that $G$ satisfies the conclusion of \cref{lem:satisfy-cond} with parameter $\eta$, then there exists a coupling $\{((X_t,I_t),(Y_t,J_t))\}$ of the two processes such that for all integer $t \ge 1$ it holds
\[
Y_t \le |X_t|
\quad\text{and}\quad
J_t \le I_t.
\]
In particular, for all integer $t \ge 1$ it holds
\begin{align*}
\Pr\left( (X_t,I_t) = (\emptyset,0) \mid (X_0,I_0) = (\sigma,\ell) \right) 
\le \Pr\left( (Y_t,J_t) = (0,0) \mid (Y_0,J_0) = (q,\ell) \right).  
\end{align*} 
\end{lemma}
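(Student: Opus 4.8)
The plan is to build, by induction on $t$, a \emph{single} coupling of the two processes that maintains the joint invariant $Y_t \le |X_t|$ and $J_t \le I_t$; the base case is immediate since $|X_0| = q = Y_0$ and $I_0 = \ell = J_0$. In each step both chains first flip the same $\pa$-biased coin deciding whether to attempt a level move (clique-size change) or a temperature move; we couple that coin identically and then couple the chosen move. The crucial structural point is that the two halves of the invariant reinforce each other: the coupling of a level move uses only $J_{t-1}\le I_{t-1}$, while the coupling of a temperature move uses only $Y_{t-1}\le|X_{t-1}|$, so there is no circularity. In both moves we couple the three outcomes (``$-1$'', ``stay'', ``$+1$'') via one common uniform variable with the outcomes ordered — the standard monotone coupling — and the only thing to verify is the required stochastic domination of the one-step probabilities.

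First I would handle a level move, where only $Y_t\le|X_t|$ is at stake (the temperature coordinates are frozen). Write $i=I_{t-1}\ge J_{t-1}=j$ and consider the extremal case $\size=|X_{t-1}|=Y_{t-1}$; when $Y_{t-1}<|X_{t-1}|$ the same common-uniform coupling decreases the gap by at most one per step exactly as in the proof of \cref{lem:stochastic-dominance}, since the events ``$Y$ up'' and ``$|X|$ down'' each have probability less than $1/2$. Since $\ham$ is monotone, $\ham_{\size-1}-\ham_\size\le 0\le \ham_{\size+1}-\ham_\size$, and since $\beta_j\le\beta_i$ this gives
\[
\min\!\left\{1,e^{\beta_j(\ham_{\size-1}-\ham_\size)}\right\}\ \ge\ \min\!\left\{1,e^{\beta_i(\ham_{\size-1}-\ham_\size)}\right\}
\qquad\text{and}\qquad
\min\!\left\{1,e^{\beta_j(\ham_{\size+1}-\ham_\size)}\right\}\ \le\ \min\!\left\{1,e^{\beta_i(\ham_{\size+1}-\ham_\size)}\right\}.
\]
Together with $P\big((\size,j),(\size-1,j)\big)=\frac{\pa\size}{n}\min\{1,e^{\beta_j(\ham_{\size-1}-\ham_\size)}\}$ and \eqref{eq:size-bound} for $\{X_t\}$, the first inequality shows the down-probability of $\{Y_t\}$ dominates that of $\{|X_t|\}$; together with the expansion bound $|A(X_{t-1})|\ge n/(20\cdot2^\size)$ from $\pexp(\eta)$ when $\size<\floor{(1-\eta)\log n}$ (and with $P\big((\size,j),(\size+1,j)\big)=0$ otherwise, where the domination is trivial) and \eqref{eq:size-bound2}, the second inequality shows the up-probability of $\{Y_t\}$ is dominated by that of $\{|X_t|\}$. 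With the monotone common-uniform coupling this forces ``$|X|$ down $\Rightarrow$ $Y$ down'' and ``$Y$ up $\Rightarrow$ $|X|$ up'', so $Y_t\le|X_t|$ is preserved.

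Next I would handle a temperature move, where only $J_t\le I_t$ is at stake and clique sizes are frozen; it suffices to treat $j=J_{t-1}=I_{t-1}$ (otherwise the gap again shrinks by at most one). Write $s_Y=Y_{t-1}\le|X_{t-1}|=s_X$. Using $\beta_{j+1}-\beta_j>0>\beta_{j-1}-\beta_j$ and $\ham_{s_Y}\le\ham_{s_X}$ (monotonicity of $\ham$), we get $e^{(\beta_{j+1}-\beta_j)\ham_{s_Y}}\le e^{(\beta_{j+1}-\beta_j)\ham_{s_X}}$ and $e^{(\beta_{j-1}-\beta_j)\ham_{s_Y}}\ge e^{(\beta_{j-1}-\beta_j)\ham_{s_X}}$, so the temperature-up probability of $\{J_t\}$ is at most that of $\{I_t\}$ while the temperature-down probability of $\{J_t\}$ is at least that of $\{I_t\}$. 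The monotone common-uniform coupling then yields ``$I$ down $\Rightarrow$ $J$ down'' and ``$J$ up $\Rightarrow$ $I$ up'', preserving $J_t\le I_t$. This closes the induction, so $Y_t\le|X_t|$ and $J_t\le I_t$ hold for all $t$ under the coupling. Finally, on the event $(X_t,I_t)=(\emptyset,0)$ we have $0=|X_t|\ge Y_t\ge 0$ and $0=I_t\ge J_t\ge 0$, forcing $(Y_t,J_t)=(0,0)$; hence $\Pr((X_t,I_t)=(\emptyset,0)\mid (X_0,I_0)=(\sigma,\ell)) \le \Pr((Y_t,J_t)=(0,0)\mid (Y_0,J_0)=(q,\ell))$, as claimed.

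The main obstacle is organizational rather than conceptual: one must run a single coupling that respects both coordinates at once, keep straight that the level-move step relies on $J\le I$ whereas the temperature-move step relies on $Y\le|X|$, and separate the extremal cases ($Y=|X|$ or $J=I$) from the slack cases, together with the frozen regime $\size\ge\floor{(1-\eta)\log n}$ where the expansion lower bound on $|A(X_{t-1})|$ is unavailable but the domination is automatic because $P((\size,j),(\size+1,j))=0$. No input beyond monotonicity of $\ham$, the ordering $\beta_0<\cdots<\beta_m$, and the expansion property $\pexp(\eta)$ is needed.
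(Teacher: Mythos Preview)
Your proposal is correct and follows essentially the same approach as the paper's proof: couple the $\pa$-biased coin identically, then handle level moves and temperature moves separately, using in each case the monotonicity of $\ham$, the ordering $\beta_0<\cdots<\beta_m$, and the expansion property $\pexp(\eta)$ to establish the needed one-step stochastic domination, reducing to the extremal cases $Y_{t-1}=|X_{t-1}|$ and $J_{t-1}=I_{t-1}$ respectively. Your write-up is in fact slightly more explicit than the paper's in two places: you spell out the common-uniform monotone coupling that realizes the domination, and you isolate the frozen regime $\size\ge\floor{(1-\eta)\log n}$ where $P((\size,j),(\size+1,j))=0$ makes the up-comparison trivial.
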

\begin{proof}
We couple $\{(X_t,I_t)\}$ and $\{(Y_t,J_t)\}$ as follows. 
Suppose that $\{(Y_{t-1},J_{t-1})\} \le \{(X_{t-1},I_{t-1})\}$ for some integer $t \ge 1$.
We will construct a coupling of $\{(X_t,I_t)\}$ and $\{(Y_t,J_t)\}$ such that $\{(Y_t,J_t)\} \le \{(X_t,I_t)\}$.
With probability $\pa$, the two chains both attempt to update the first coordinate, and with probability $1-\pa$ the second. 

Consider first updating the first coordinate. Since the probability that $Y_t = Y_{t-1} + 1$ is less than $1/2$ and so does the probability of $|X_t| = |X_{t-1}| - 1$, we may couple $X_t$ and $Y_t$ such that $|X_t| - Y_t$ decreases at most one; namely, it never happens that $Y_t$ increases by $1$ while $X_t$ decreases in size. 
Thus, it suffices to consider the extremal case when $|X_{t-1}| = Y_{t-1} = \size$. Since $i = I_{t-1} \ge J_{t-1} = j$, we have $\beta_i \ge \beta_j$ and thus
\begin{align*}
\Pr\left( |X_t| = \size - 1 \mid |X_{t-1}| = \size, I_{t-1} = i \right)
&= \frac{\pa \size}{n} \min\left\{ 1, \exp\left[ \beta_i \left( \ham_{\size-1} - \ham_{\size} \right) \right] \right\} \\
&\le \frac{\pa \size}{n} \min\left\{ 1, \exp\left[ \beta_j \left( \ham_{\size-1} - \ham_\size \right) \right] \right\}
= P\left( (\size,j), (\size-1,j) \right)
\end{align*}
So we can couple $\{(X_t,I_t)\}$ and $\{(Y_t,J_t)\}$ such that either $|X_t| = Y_t$ or $|X_t| = \size$, $Y_t = \size-1$.
Meanwhile, recall that $A(X_{t-1})$ is the set of vertices $v$ such that $X_{t-1} \cup \{v\} \in \XX$. Then we have 
\[
|A(X_{t-1})| \ge \frac{n}{20\cdot 2^\size}
\]
whenever $\size \le n_\eta$ by \cref{lem:satisfy-cond}. Hence, we deduce that
\begin{align*}
\Pr\left( |X_t| = \size + 1 \mid |X_{t-1}| = \size, I_{t-1} = i \right)
&= \dfrac{\pa |\mathrm{ext}(X_{t-1})|}{n} \min \left\{ 1, \exp\left[ \beta_i \left( \ham_{\size+1} - \ham_\size \right) \right] \right\} \\
&\ge \dfrac{\pa \one\{\size \le n_\eta\}}{20 \cdot 2^\size} \min \left\{ 1, \exp\left[ \beta_j \left( \ham_{\size+1} - \ham_\size \right) \right] \right\}
= P\left( (\size,j), (\size+1,j) \right)
\end{align*}
So we can couple $\{(X_t,I_t)\}$ and $\{(Y_t,J_t)\}$ such that either $|X_t| = Y_t$ or $|X_t| = \size+1$ and $Y_t = \size$, as desired.

Next we consider update the second coordinate. Since the probability that $J_t = J_{t-1} + 1$ is less than $1/2$ and so does the probability of $I_t = I_{t-1} - 1$, we can couple $I_t$ and $J_t$ such that it never happens both $J_t = J_{t-1} + 1$ and $I_t = I_{t-1} - 1$. 
This means that it suffices for us to consider the extremal case where $I_t = J_t = i$.
Since $g = |X_{t-1}| \ge Y_{t-1} = \size$, we have $\ham_g \ge \ham_\size$ and therefore
\begin{align*}
\Pr\left( I_t = i-1 \mid I_{t-1} = i, |X_{t-1}| = g \right)
&= \frac{1-\pa}{2} \min \left\{ 1, \frac{\widehat{Z}(\beta_i)}{\widehat{Z}(\beta_{i-1})} \exp\left[ \left( \beta_{i-1}-\beta_i \right) \ham_g \right] \right\} \\
&\le \frac{1-\pa}{2} \min \left\{ 1, \frac{\widehat{Z}(\beta_i)}{\widehat{Z}(\beta_{i-1})} \exp\left[ \left( \beta_{i-1}-\beta_i \right) \ham_\size \right] \right\}
= P\left( (\size,i), (\size,i-1) \right).
\end{align*}
and similarly,
\begin{align*}
\Pr\left( I_t = i+1 \mid I_{t-1} = i, |X_{t-1}| = g \right)
&= \frac{1-\pa}{2} \min \left\{ 1, \frac{\widehat{Z}(\beta_i)}{\widehat{Z}(\beta_{i+1})} \exp\left[ \left( \beta_{i+1}-\beta_i \right) \ham_g \right] \right\} \\
&\ge \frac{1-\pa}{2} \min \left\{ 1, \frac{\widehat{Z}(\beta_i)}{\widehat{Z}(\beta_{i+1})} \exp\left[ \left( \beta_{i+1}-\beta_i \right) \ham_\size \right] \right\}
= P\left( (\size,i), (\size,i+1) \right).
\end{align*}
Therefore, we can couple $I_t$ and $J_t$ such that only one of the followings can happen:
\begin{enumerate}[(i)]
\item $I_t = J_t$;
\item $I_t = i$ and $J_t = i-1$;
\item $I_t = i+1$ and $J_t = i$.
\end{enumerate}
Therefore, one always has $I_t \ge J_t$ under the constructed coupling.
\end{proof}

The next lemma upper bounds the $t$-step transition probability $\Pr\left(Y_t = \qzero \mid Y_0 = q\right)$.

\begin{lemma}\label{lem:YJ-ub-prob}
Let $\{(Y_t,J_t)\}$ denote the Markov process on $\Z^2$ described above with parameter $\eta \in (0,1)$ starting from $Y_0 = q = \rho \log n$ and $J_0 = j$.
Then for all integer $t \ge 1$ we have
\begin{equation*}
\Pr\left( (Y_t,J_t) = (0,0) \mid (Y_0,J_0) = (q,\ell) \right) 
\le \exp\left[ (\ln2) (\log n)^2 \left( - \rho' + \frac{(\rho')^2}{2} + o(1) \right) \right] 
{ \frac{\widehat{Z}(\beta_\ell)}{\widehat{Z}(\beta_0)} \frac{\exp\left( \beta_0 \ham_0 \right)}{\exp\left( \beta_\ell \ham_q \right)} }.
\end{equation*}
where $\rho' = \min\{ \rho, 1-\eta \}$.
\end{lemma}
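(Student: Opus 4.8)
The plan is to follow the proof of \cref{lem:Y-ub-prob} essentially verbatim, now carrying the extra temperature coordinate along. The process $\{(Y_t,J_t)\}$ was shown above to be reversible via Kolmogorov's criterion, so \cref{fact:reversibility} applies with its stationary distribution $\nu$ from \cref{eq:stat_distr_temp}. The workhorse estimate is that for every state $(p',j')$ in the support of $\nu$, i.e.\ with $p' \le \floor{(1-\eta)\log n}$, and every integer $t\ge 1$,
\[
\Pr\big( (Y_t,J_t) = (0,0) \mid (Y_0,J_0) = (p',j') \big)
= \frac{\nu((0,0))}{\nu((p',j'))}\, P^t\big((0,0),(p',j')\big)
\le \frac{\nu((0,0))}{\nu((p',j'))}.
\]
Evaluating $\nu((0,0))/\nu((p',j'))$ exactly as the ratio $\nu(p)/\nu(q)$ was evaluated in the proof of \cref{lem:Y-ub-prob} --- the temperature--move transition probabilities contributing only the partition--function ratio $\widehat Z(\beta_{j'})/\widehat Z(\beta_0)$ --- one obtains
\[
\frac{\nu((0,0))}{\nu((p',j'))}
= \exp\!\left[ (\ln 2)(\log n)^2\left( -\frac{p'}{\log n} + \frac{1}{2}\left(\frac{p'}{\log n}\right)^2 + o(1) \right) \right]
\frac{\widehat Z(\beta_{j'})}{\widehat Z(\beta_0)}\, \exp\!\left(\beta_0 \ham_0 - \beta_{j'}\ham_{p'}\right),
\]
the lower--order quantities $20^{p'}$ and $p'!$ being absorbed into the $o(1)$.

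When $\rho \le 1-\eta$ we have $\rho' = \rho$ and the starting state $(q,\ell)$ already lies in the support of $\nu$, so the displayed bound with $(p',j') = (q,\ell)$ gives the lemma at once, the three factors matching the three factors on the right--hand side of the claim term by term; in particular the $\widehat Z$--ratio and the Hamiltonian factor are reproduced \emph{exactly}, which is precisely what lets them cancel against the corresponding factors of \cref{eq:reversible-st-eq} when \cref{lem:YJ-ub-prob} is later invoked.

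When $\rho > 1-\eta$ the state $(q,\ell)$ lies outside the support of $\nu$, so I first reduce to the support, again mirroring \cref{lem:Y-ub-prob}. Since the transition rule sets the up--move probability to $0$ at all sizes $\ge \floor{(1-\eta)\log n}$, from $Y_0 = q$ the size coordinate is non--increasing until it first reaches the level $q' := \floor{(1-\eta)\log n}$; let $\tau$ be that (a.s.\ finite) hitting time and put $j' = J_\tau$. Conditioning on $\tau$ and on $J_\tau$, applying the strong Markov property at $\tau$ and then the displayed bound at $(q',j')$, and using $\Pr(\tau\le t\mid\cdot)\le 1$, we get $\Pr((Y_t,J_t)=(0,0)\mid (Y_0,J_0)=(q,\ell)) \le \max_{j'\in[m]} \nu((0,0))/\nu((q',j'))$. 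The combinatorial factor of $\nu((0,0))/\nu((q',j'))$ is $\exp[(\ln 2)(\log n)^2(-\rho' + (\rho')^2/2 + o(1))]$ with $\rho' = 1-\eta = \min\{\rho,1-\eta\}$, independent of $j'$, so it remains to pass from the $j'$--dependent factors $\widehat Z(\beta_{j'})/\widehat Z(\beta_0)$ and $\exp(-\beta_{j'}\ham_{q'})$ to the target factors $\widehat Z(\beta_\ell)/\widehat Z(\beta_0)$ and $\exp(\beta_0\ham_0 - \beta_\ell\ham_q)$. Their quotient is $\widehat Z(\beta_{j'})/\widehat Z(\beta_\ell)\cdot\exp(\beta_\ell\ham_q - \beta_{j'}\ham_{q'})$, which by \cref{eq:BS2} (so $\widehat Z(\beta_{j'})/\widehat Z(\beta_\ell) \le \widehat Z(\beta_m)/\widehat Z(\beta_0) = n^{o(\log n)}$ by monotonicity of the $\widehat Z(\beta_i)$) together with $|\beta_i| = O(1)$ and $\ham_q,\ham_{q'} = O(\log n)$ is at most $n^{o(\log n)}\cdot n^{O(1)} = \exp[(\ln 2)(\log n)^2\,o(1)]$; hence it is swallowed by the $o(1)$, and the claimed bound follows.

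The one genuine obstacle is exactly this last step: on the descent interval $[0,\tau]$ the coordinate $J_t$ moves freely, so at time $\tau$ it sits at an uncontrolled index $j'$ rather than at $\ell$, which forces the worst case over $j'$. This is harmless only because the partition--function estimates and the inverse temperatures live at scales $n^{o(\log n)}$ and $O(1)$ respectively, both negligible against the genuinely $(\log n)^2$--exponential combinatorial gain $\exp[(\ln 2)(\log n)^2(-\rho' + (\rho')^2/2)]$ (note $-\rho' + (\rho')^2/2 < 0$ for $0<\rho'<2$, and here $\rho'\le 1-\eta$). Everything else is a direct transcription of \cref{lem:Y-ub-prob}, modulo checking that the reversibility identity and the stopping--time decomposition survive the addition of the temperature coordinate.
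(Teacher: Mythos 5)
Your proof is correct and follows essentially the same two-step strategy as the paper's: apply reversibility to reduce the time-$t$ transition probability to the stationary ratio $\nu((0,0))/\nu((q,\ell))$ on the support, and for $\rho>1-\eta$ first reduce to the support via the hitting time of size $\floor{(1-\eta)\log n}$, absorbing the temperature mismatch with \cref{eq:BS2} and $\beta_m=O(1)$. One small remark: you have actually straightened out a pair of compensating sign slips in the paper's version (the product in the stationary-distribution computation and the direction of the reversibility ratio are both inverted there, cancelling to give the right answer), so your derivation of $\nu((0,0))/\nu((p',j'))$ and the reversibility identity are the ones consistent with \cref{fact:bd-stationary} and \cref{fact:reversibility}.
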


\begin{proof}
When $\rho > 1-\eta$, namely $q > (1-\eta) \log n$, the chain will first move to $(1-\eta) \log n$ before reaching $p$. 
By \cref{fact:reversibility}, we have
\[
\Pr\left((Y_t,J_t) = (0,0) \mid (Y_0,J_0) = (q,\ell) \right) 
= P^t((q,\ell), (0,0)) 
= \frac{\nu((q,\ell))}{\nu((0,0))} P^t((0,0),(q,\ell)) 
\le \frac{\nu((q,\ell))}{\nu((0,0))}.
\] By \eqref{eq:stat_distr_temp},
\begin{align*}
\frac{\nu((q,\ell))}{\nu((0,0))}.
&= \frac{\widehat{Z}(\beta_{\ell})}{ \widehat{Z}(\beta_0)}  \frac{20^{q} \cdot (q!) \cdot 2^{\binom{q}{2}}}{n^{q}} \exp\left[ \beta_0 \ham_0 - \beta_\ell \ham_q  \right]   \\
&= \frac{\widehat{Z}(\beta_\ell)}{\widehat{Z}(\beta_0)} \frac{\exp\left( \beta_0 \ham_0 \right)}{\exp\left( \beta_\ell \ham_q \right)} \exp\left[ (\ln2) (\log n)^2 \left( - \rho + \frac{\rho^2}{2} + o(1) \right) \right].
\end{align*}
For $\rho > 1-\eta$, let $\tau$ be the first time that the chain reach size $q'=(1-\eta)\log n$. Then we have
\begin{align*}
&\Pr\left( (Y_t,J_t) = (0,0) \mid (Y_0,J_0) = (q,\ell) \right) \\
={}& \sum_{t'=0}^t
\Pr(\tau = t') \sum_{\ell' \in [m]} 
\Pr\left( (Y_{t'},J_{t'}) = (q', \ell') \mid (Y_0,J_0) = (q,\ell), \tau = t'\right) \\
&\cdot \Pr\left( (Y_{t-t'},J_{t-t'}) = (0,0) \mid (Y_0,J_0) = (q', \ell') \right)\\
&\le \max_{\ell'\in[m]}\frac{\nu((q',\ell'))}{\nu((0,0))}\\
&\le \max_{\ell'\in[m]} \frac{\widehat{Z}(\beta_{\ell'})}{\widehat{Z}(\beta_0)} \frac{\exp\left( \beta_0 \ham_0 \right)}{\exp\left( \beta_{\ell'} \ham_{q'} \right)} \exp\left[ (\ln2) (\log n)^2 \left( - \rho' + \frac{(\rho')^2}{2} + o(1) \right) \right].
\end{align*}
The lemma then follows from the fact that for all $\ell' \in [m]$
\[
\frac{\widehat{Z}(\beta_{\ell'})}{\widehat{Z}(\beta_\ell)} \frac{\exp\left( \beta_\ell \ham_q \right)}{\exp\left( \beta_{\ell'} \ham_{q'} \right)}
\le n^{o(\log n)} e^{O(\log n)}
= n^{o(\log n)},
\]
where we use \cref{eq:BS2}.
\end{proof}

We now present the proof of \cref{thm:Simulated-Tempering-empty} provided \cref{lem:stochastic-dominance-2d,lem:YJ-ub-prob}.

\begin{proof}[Proof of \cref{thm:Simulated-Tempering-empty}]
Recall that $\gamma < 2(1-\alpha)$. 
As will be clear later, we define
\[
\eta = \sqrt{(1-\alpha)\gamma - \frac{\gamma^2}{2}}. 
\]
We assume that our graph satisfies the conclusions of \cref{basic_rg,lem:satisfy-cond} with parameter $\eta$.

Consider first \cref{eq:goal-st-sfe2}. From \cref{lem:stochastic-dominance-2d,lem:YJ-ub-prob}, we deduce that
\begin{align*}
&~~~~ \E[W_{q,r}] \Pr\left( (X_t,I_t) = (\sigma,\ell) \mid (X_0,I_0) = (\emptyset,0) \right) \\
&= \E[W_{q,r}] \frac{\widehat{Z}(\beta_0)}{\widehat{Z}(\beta_\ell)} \frac{\exp\left( \beta_\ell \ham_q \right)}{\exp\left( \beta_0 \ham_0 \right)} \Pr\left( (X_t,I_t) = (\emptyset,0) \mid (X_0,I_0) = (\sigma,\ell) \right)  \\
&\le \E[W_{q,0}] \frac{\widehat{Z}(\beta_0)}{\widehat{Z}(\beta_\ell)} \frac{\exp\left( \beta_\ell \ham_q \right)}{\exp\left( \beta_0 \ham_0 \right)} \Pr\left( (Y_t,J_t) = (0,0) \mid (Y_0,J_0) = (q,\ell) \right) \\
&\le \exp\left[ (\ln2) (\log n)^2 \left( \rho - \frac{\rho^2}{2} + o(1) \right) \right] \frac{\widehat{Z}(\beta_0)}{\widehat{Z}(\beta_\ell)} \frac{\exp\left( \beta_\ell \ham_q \right)}{\exp\left( \beta_0 \ham_0 \right)} \\
&~~~\cdot \exp\left[ (\ln2) (\log n)^2 \left( - \rho' + \frac{(\rho')^2}{2} + o(1) \right) \right] \frac{\widehat{Z}(\beta_\ell)}{\widehat{Z}(\beta_0)} \frac{\exp\left( \beta_0 \ham_0 \right)}{\exp\left( \beta_\ell \ham_q \right)} \\
&= \exp\left[ (\ln2) (\log n)^2 \left( -\frac{\eps^2}{2} + \frac{\eta^2}{2} + o(1) \right) \right]
\le \exp\left[ (\ln2) (\log n)^2 \left( -\frac{3}{8}\eps^2 + o(1) \right) \right],
\end{align*}
where recall that $\rho = 1+\eps$ and $\rho' = 1-\eta$ and we choose $\eta = \eps/2$. 

For \cref{eq:goal-st-sfe3}, by similar argument we have
\begin{align*}
&~~~~ \E[W_{p,s}] \Pr\left( (X_t,I_t) = (\sigma,\ell) \mid (X_0,I_0) = (\emptyset,0) \right) \\
&\le \exp\left[ (\ln2) (\log n)^2 \left( \frac{1}{2} - \frac{1}{2}(1-\alpha)^2 + o(1) \right) \right] \frac{\widehat{Z}(\beta_0)}{\widehat{Z}(\beta_\ell)} \frac{\exp\left( \beta_\ell \ham_q \right)}{\exp\left( \beta_0 \ham_0 \right)} \\
&~~~\cdot \exp\left[ (\ln2) (\log n)^2 \left( - \rho' + \frac{(\rho')^2}{2} + o(1) \right) \right] \frac{\widehat{Z}(\beta_\ell)}{\widehat{Z}(\beta_0)} \frac{\exp\left( \beta_0 \ham_0 \right)}{\exp\left( \beta_\ell \ham_q \right)} \\
&= \exp\left[ (\ln2) (\log n)^2 \left( - \frac{1}{2}(1-\alpha)^2 + \frac{\eta^2}{2} + o(1) \right) \right]
\le \exp\left[ (\ln2) (\log n)^2 \left( -\frac{3}{8}(1-\alpha)^2 + o(1) \right) \right],
\end{align*}
where recall that $\rho = 1$ (for maximizing $\E[W_{p,s}]$) and $\rho' = 1-\eta$ and we choose $\eta = \eps/2 \le (1-\alpha)/2$. 

Therefore, we obtain \cref{eq:goal-st-sfe2,eq:goal-st-sfe3}. The theorem then follows.
\end{proof}

\section{Conclusion}

In this work we revisit the work by Jerrum \cite{Jer92} that large cliques elude the Metropolis process. We extend \cite{Jer92} by establishing the failure of the Metropolis process (1) for all planted clique sizes $k=n^{\alpha}$ for any constant $\alpha \in (0,1),$ (2) for arbitrary temperature and Hamiltonian vector (under worst-case initialization), (3) for a large family of of temperatures and Hamiltonian vectors (under the empty clique initialization) and obtain as well (4) lower bounds for the performance of the simulated tempering dynamics.

An important future direction would be to explore the generality of our proposed reversibility and birth and death process arguments which allowed us to prove the failure of the Metropolis process when initialized at the empty clique. It is interesting to see whether the proposed method can lead to MCMC lower bounds from a specific state in other inference settings beyond the planted clique model.

Moreover, it would be interesting to see if our results can be strengthened even more. First, a current shortcoming of our lower bounds for the Metropolis process when initialized from the empty clique do not completely cover the case where $\beta=C \log n$ for an arbitrary constant $C>0$. While we almost certainly think the result continues to hold in this case, some new idea seem to be needed to prove it. Second, it seems interesting to study the regime where $\alpha=1-o(1).$ Recall that there are polynomial-time algorithms that can find a clique of size $(\log n/\log\log n)^2$ whenever a (worst-case) graph has a clique of size $n/(\log n)^b$, for some constant $b>0$ \cite{feige_max_clique}. If our lower bounds for the Metropolis process could be extended to the case $\alpha=1-O(\log \log n/\log n)$, this would mean that for some $k$ the Metropolis process fails to find in polynomial-time a clique of size $(1+\epsilon)\log n$ when a $k$-clique is planted in $\GG(n,1/2),$ while some other polynomial-time algorithm can find a clique of size $(1+\epsilon)\log n$ on every (worst-case) graph which has a clique of size $k.$ Such a result, if true, will make the failure of the Metropolis process even more profound.

\section*{Acknowledgement}
EM and IZ are supported by the Simons-NSF grant DMS-2031883 on the Theoretical Foundations of Deep Learning and the Vannevar Bush Faculty Fellowship ONR-N00014-20-1-2826. EM is also supported by the Simons Investigator award (622132).


\newpage

\bibliographystyle{alpha}
\bibliography{all,mcmc}

\newpage
\appendix

\section{Deferred Proofs}
\begin{proof}[Proof of \cref{basic_rg}]

For part (1), notice that by linearity of expectation and the elementary application of Stirling's formula that for $m_2 \leq m_1$ with $ m_2=o(m_1)$ it holds $\binom{m_1}{m_2}=(m_1/m_2)^{m_2(1+o(1))}$ we have
\begin{align*}
     \E[W_{q,r}]&=\binom{k}{r}\binom{n-k}{q-r}2^{\binom{r}{2}-\binom{q}{2}}\\
     &=\binom{n^{\alpha}}{\floor{\gamma \log n}}\binom{n-n^{\alpha}}{\floor{\rho \log n}-\floor{\gamma \log n}}2^{\binom{\floor{\gamma \log n}}{2}-\binom{\floor{\rho \log n}}{2}}\\
     &=\exp\left[ (\ln2) (\log n)^2 \left( \alpha \gamma +(\rho-\gamma)+\frac{\gamma^2}{2}-\frac{\rho^2}{2} + o(1) \right) \right]\\
     &=\exp\left[ (\ln2) (\log n)^2 \left( \rho - \frac{\rho^2}{2} - (1-\alpha)\gamma + \frac{\gamma^2}{2} + o(1) \right) \right].
\end{align*}

For part (2), notice that $W_{q,0}$ is distributed as the number of $q$-cliques in $\GG(n-k,\frac{1}{2})$. Hence, standard calculation (e.g. \cite[Proof of Theorem 1]{bollobas_erdos_1976}) prove that since $\rho=2-\Omega(1),$
$$\frac{\mathrm{Var}(W_{q,0})}{\E[W_{q,0}]^2}=O(\frac{q^4}{n^2})=O(\frac{1}{n}).$$Hence, Chebyshev's inequality yields that with probability $1-O(1/n),$ $W_{q,0} \geq \frac{1}{2}\E[W_{q,0}].$ Taking a union bound over the different values of $q=O(\log n)$ completes the proof of this part.

Finally, part (3) follows directly from part (1), Markov's inequality and a union bound over the possible values of $r,q$.
\end{proof}
\begin{proof}[Proof of \cref{lem:satisfy-cond}] It clearly suffices to establish this result for $k=0$, i.e. for an the random graph $G(n,\frac{1}{2}).$
For any fixed $|U| \le (1-\eta) \log n,$ $|A(U)|$ follows a Binomial distribution with $n-|U|$ trials and probability $\frac{1}{2^{|U|}}.$ In particular, it has a mean $(1+o(1))\frac{n}{2^{|U|}}=\Omega(n^{\eta})$. Hence, by Hoeffding's inequality with probability $1-\exp(-\Omega(n^{\eta}))$ it holds  $|A(U)| \ge \frac{n}{20 \cdot 2^{|U|}}$. As there are only $\binom{n}{\floor{\log n}}=n^{O(\log n)}$ the result follows from a union bound over $|U|$.
\end{proof}
\end{document}